\title[ ]{ Revisiting the Christ-Kiselev's multi-linear operator technique and its applications to Schr\"odinger operators}
\author{Wencai Liu}
\address{ Department of Mathematics, Texas A\&M University, College Station, TX 77843-3368, USA}
\email{liuwencai1226@gmail.com}
\theoremstyle{plain}
\newtheorem{theorem}{Theorem}[section]
\newtheorem{corollary}[theorem]{Corollary}
\newtheorem{lemma}[theorem]{Lemma}
\newtheorem{proposition}[theorem]{Proposition}
\newcommand{\C}{\mathbb{C}}
\newcommand{\R}{\mathbb{R}}
\newcommand{\T}{\mathbb{T}}
\newcommand{\Z}{\mathbb{Z}}
\theoremstyle{definition}
\newtheorem{remark}[theorem]{Remark}
\begin{document}


\begin{abstract}
We established a generalized version of the  Christ-Kiselev's multi-linear operator technique to deal with the spectral theory of Schr\"odinger  operators.
As  applications,  several spectral results of perturbed periodic Schr\"odinger operators are  obtained, including WKB solutions, sharp transitions of  preservation of   absolutely continuous spectra, criteria  of absence of   singular   spectra and sharp bounds of the Hausdorff dimension of singular spectra.
\end{abstract}
\maketitle
\section{Introduction and main results}

Stimulated by several problems posted by Barry Simon at ICMP in 1994 and 2000 \cite{b20}, the   theory of Schr\"odinger operators with decaying potentials have been made  significant progress in the past 25 years through  the work  of    Christ,    Deift,   Denisov,  Killip,  Kiselev,  Last, Molchanov,  Remling, Simon, Stolz and among others.
We refer  readers to two   survey articles   \cite{Kiselevsimon,Killp07} and references therein for details.

  One of the central topics in the area is  to understand the criteria of    preservation of    absolutely continuous spectra  under $L^p$ or power decaying perturbations.
  The one-dimensional case   of free Schr\"odginer operators   is now  well understood mainly by two approaches.
 The first one  is to study the spectral theory of  Schr\"odinger operators via establishing the WKB type eigensolutions.
Another celebrated approach  starting  with Deift and Killip
used a completely different idea: sum rules.
The sum rule approach    can handle the critical case $p=2$, which was missing by WKB methods, but
  less understandings   about
 the dynamics.

 The research by the first approach (WKB method)  has  been achieved a great success  by Christ, Kiselev and Remling \cite{mkjams98,ki96,ki98,mkjfa2,mkjfa1,mkcmp01,recmp98}. In \cite{mkjfa2,mkjfa1,mkcmp01}, Christ and Kiselev    developed a way, referred to as multi-linear operator techniques,   to  deal with the WKB type solutions, which turned out to be a   robust approach.  
In this paper, we will establish a more general  version of  the Christ-Kiselev's multi-linear operator technique than that appearing  in \cite{mkjfa2,mkjfa1,mkcmp01} (see Section \ref{CK}).
We added several significant  technical   ingredients in the proof, which allows us to
  strength  several  conclusions with less assumptions.
 For example, we proved the conclusion in  Lemma \ref{LeSbound} as it was appearing   in  \cite{mkjfa1,mkcmp01,kim}  without  lower bound assumptions on second and third derivatives (see \eqref{Gaug131}).
 As  applications, we obtained  several new important results. We will talk about the details later.
 Our proofs are self-contained except for  several well known facts, such as Christ-Kiselev Lemma (Lemma \ref{CKlemma}) and multiplicative bounds (Theorem \ref{thmmul}).
 We expect our  new version will serve as a good resource for readers to understand Christ-Kiselev's machinery and as well as lead to more   applications in the future.

Multi-linear operator techniques to establish  the WKB type eigensolution are based on  writing down the differential equation in an integral form and
seeking a formal series solution. Each individual  term of the series is defined by a multi-integral operator.
The difficulty lies in giving rigorous  definition  of  improper integrals in a suitable topology, showing the convergence of the series in a proper measure space, and  verifying that the formal series solution is  an actual   solution.
The main scheme of our proofs is definitely  developed from Christ-Kiselev.
However,  several important technical improvements  have been added  to Christ-Kiselev's scheme.
The first new ingredient we want to highlight is that we  define the multi-linear operators   as   iterations, which significantly  simplifies the arguments.  This  is very different from Christ-Kiselev's   plan (see the proof of Lemma 4.2 in \cite{mkjfa1}). Our approach  is more natural and will make it much easier to verify that  the formal series solution is  an actual  solution. The price we need to pay is  to show  the existence of a stronger limit  in the definition of multi-integral operators. See Remark \ref{raug281}.
Another new  important improvement we want to mention is that we  will give two  ways to establish the WKB type solution. One way closely follows from the Christ-Kiselev's   approach.  Our new approach  allows us to avoid using  maximal operators and     simplify the   original approach of Christ and Kiselev. See Section \ref{Newproof}. Our new proof is based  on modifications of norms of a family of Banach spaces.

Finally, we want to highlight several  great discoveries  in our applications, which we believe to be of independent interest. We obtain a nice formula of the spectral measure  of eventually periodic operators by  the combination of Floquet theory   with    Weyl  theory.
 See Section \ref{even}. It  possibly   provides    more opportunities  to use approximations   to investigate  the spectral theory  of  perturbed  periodic Schr\"odinger operators.
We also obtain sharp bounds of a family of oscillatory  integrations, which  are  originally arising from  eigen-equations.
  A simple case of oscillatory  integrations here comes from studying Schr\"odinger operators with    Wigner-von Neumann type potentials \cite{von1929uber}.  Wigner-von Neumann type  functions are  fundamental generators to construct (finitely many or countably many)embedded eigenvalues dating back to Simon \cite{simdense} and receive continuing attentions \cite{ld,luk13,luk14,jlgafa,Kiselev05,liujfa}.  Our sharp bounds  of oscillatory  integrations are   generalizations and quantitative versions of all previous statements in the area  \cite{simdense,liudis,ld,jlgafa,Kiselev05,liujfa,liuasym}.
The novelty here is to split    frequencies into higher and lower ones so that different tools can be used. See  explanations after Corollary \ref{Cor11}.

 Let us move to   applications.  We will investigate
the spectral theory of  the one-dimensional  perturbed periodic  Schr\"odinger  operator, namely,
\begin{equation}\label{Gu}
    Hu=-u^{\prime\prime}+(V(x)+V_0(x))u,
\end{equation}
where $V_0(x)$ is 1-periodic and $V (x)$ is a decaying  perturbation.

When $V\equiv 0$, we have  a  $1$-periodic Schr\"odinger operator,
\begin{equation}\label{GV}
    H_0u=-u^{\prime\prime}+V_0u.
\end{equation}

We are interested in  spectral transitions of operators \eqref{Gu} under decaying perturbations $V$.
The sharp transition for (dense) embedded eigenvalues  was recently obtained by the author and Ong \cite{ld} with some partial results in the past \cite{lotams,KRS,ns,kn,rofe64}.
Sharp transitions of  preservation of   absolutely continuous spectra and  sharp bounds of Hausdorff dimensions of  singular spectra  will be  obtained
as applications of Theorem \ref{prop41}.

 For simplicity, we only consider  the equation on the half line $\R^+$. All the results can be generalized to the whole line $\R$.

Denote by $\ell^p(L^1)(\R^+)$ the Banach space of all measurable functions from $\R^+$ to $\R$ with the norm
\begin{equation*}
  ||f||_{\ell^p(L^1)}=\left(\sum_{k=0}^{\infty}\left(\int_{k}^{k+1}|f(x)|dx\right)^p\right)^{1/p}.
\end{equation*}

This Banach space contains $L^1+L^p$. If $p\leq q$, then $\ell^p(L^1)\subset\ell^q(L^1)$.
 For simplicity, $V_0$ is always assumed to be  in $L^1[0,1]$ and periodic.

Let $S=\cup_{n=0} ^{\infty}(a_{n},b_{n})$ be
the band spectrum of  the operator given by \eqref{GV} and  $\varphi(x, E)$ be the  Floquet solution for $E\in S$.
By Theorem \ref{prop41} and additional spectral analysis, we   have  following two theorems.
\begin{theorem}\label{mainthm1} If the potential $V\in \ell^p(L^1)$  for some $1\leq p<2$, then the set $S$ is an essential support of the absolutely
continuous  spectrum  of the operator $H=H_0+V$ with any boundary condition at zero.
Moreover,
 for a.e.\ $E \in S,$ there exists solution
 $u(x,E)$   of the equation
\begin{equation}
-u''+(V_0(x)+V(x))u = E u
\end{equation}
with the asymptotical behavior
\begin{equation}\label{Gwkb}
 u(x,E)= \varphi(x,E)\exp \left(
\frac{i}{2\Im(\varphi\overline{\varphi}')}
\int\limits_{0}^{x}V(t)|\varphi^{2}(t,E)|\,dt\right)(1+o(1))
\end{equation}
as $x\to\infty$.
\end{theorem}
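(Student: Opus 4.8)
The plan is to deduce Theorem~\ref{mainthm1} from the WKB construction supplied by Theorem~\ref{prop41}, applied to the periodic background $H_0$, together with two classical pieces of one-dimensional spectral theory: Weyl's essential-spectrum theorem and the Gilbert--Pearson subordinacy theory. First I would set up the Floquet--Bloch frame: for $E$ in the open band set $S$ the periodic equation $-u''+V_0u=Eu$ has a Floquet solution $\varphi(x,E)=e^{ik(E)x}p(x,E)$ with $p(\cdot,E)$ $1$-periodic, $\varphi$ and $\overline{\varphi}$ linearly independent, $c(E)\le|\varphi(x,E)|\le C(E)$ uniformly in $x$, and Wronskian $W[\varphi,\overline\varphi]=\varphi\overline{\varphi}'-\varphi'\overline{\varphi}=2i\,\Im(\varphi\overline{\varphi}')$ a nonzero purely imaginary constant. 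Writing a solution of \eqref{Gu} as $u=c_1(x)\varphi+c_2(x)\overline\varphi$ by variation of parameters and using $-1/W=i/(2\Im(\varphi\overline{\varphi}'))$, one sees that the resulting system for $(c_1,c_2)$ has diagonal density $-V|\varphi|^2/W$, which is non-oscillatory, and off-diagonal densities $\mp V\varphi^{\pm 2}/W$, which oscillate like $e^{\pm 2ik(E)x}$ over each band. Theorem~\ref{prop41}, applied with this background, then asserts that for $V\in\ell^p(L^1)$ with $1\le p<2$ and for a.e.\ $E\in S$ there is a genuine solution $u(x,E)$ of \eqref{Gu} whose leading asymptotics is $\varphi$ times the exponential of (the integral of) the diagonal density, namely \eqref{Gwkb}, with the multilinear series built from the oscillatory off-diagonal densities contributing only the $o(1)$ error.

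Granting this, I would extract the dynamical consequence. Since $V$ is real-valued, $|\varphi^2(t,E)|$ is real and $\Im(\varphi\overline{\varphi}')$ is a nonzero real constant, so the exponent in \eqref{Gwkb} is purely imaginary; hence $|u(x,E)|=|\varphi(x,E)|\,|1+o(1)|$ is bounded above and below as $x\to\infty$. Complex conjugation produces a second solution $\overline{u(x,E)}$ with the same property, and the two are linearly independent because $\varphi$ and $\overline\varphi$ are. Therefore, for a.e.\ $E\in S$, every solution of $-u''+(V_0+V)u=Eu$ is bounded at $+\infty$; in particular no solution is subordinate at $+\infty$.

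It remains the spectral translation. Because $V\in\ell^p(L^1)$ is a decaying perturbation (it is a norm-resolvent limit of compactly supported $L^1$ potentials, each of which is relatively compact with respect to $H_0$), we get $\sigma_{\mathrm{ess}}(H)=\sigma_{\mathrm{ess}}(H_0)=\overline{S}$ by Weyl's theorem; consequently the absolutely continuous part $\mu_{\mathrm{ac}}$ of the spectral measure of $H$ is carried by $\overline S$, hence by $S$ since $\overline S\setminus S$ is countable and therefore Lebesgue-null. On the other hand, by the Gilbert--Pearson subordinacy theory (equivalently the Last--Simon criterion), an essential support of $\mu_{\mathrm{ac}}$ coincides, up to Lebesgue-null sets, with the set of energies admitting no subordinate solution; since by the previous paragraph this set contains $S$ up to a Lebesgue-null set, we conclude that $S$ itself is an essential support of $\mu_{\mathrm{ac}}$. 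The analysis at $+\infty$ is insensitive to the boundary condition at $0$, so the conclusion holds for every self-adjoint realization of $H$, and the WKB asymptotics \eqref{Gwkb} is already part of the output of Theorem~\ref{prop41}.

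The genuine obstacle lies entirely inside Theorem~\ref{prop41}: proving that for $\ell^p(L^1)$ potentials with $1\le p<2$ the Christ--Kiselev multilinear series converges for a.e.\ $E\in S$ and that the off-diagonal terms are $o(1)$. This requires the a.e.\ convergence of the oscillatory integrals $\int_0^\infty V(t)\varphi^2(t,E)\,dt$ together with the maximal- and $L^2$-type bounds over the band that feed the multilinear machinery, and it is precisely here that the exponent restriction $p<2$ is sharp, the endpoint $p=2$ lying beyond the reach of WKB methods. Relative to \cite{mkjfa1,mkcmp01}, the improvement that matters for covering \emph{all} of $S$ up to a null set is the removal of lower bounds on the derivatives of the phase (the point of Lemma~\ref{LeSbound} and \eqref{Gaug131}), so that band energies where $k'(E)$ degenerates are still handled; verifying that the periodic phase $e^{2ik(E)x}$ meets the hypotheses of Theorem~\ref{prop41} on $S$ up to a null set is the one step in this deduction that needs genuine care.
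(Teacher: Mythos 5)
Your plan is essentially the paper's own: verify the hypotheses of Theorem~\ref{prop41} for the transformed kernel through Lemmas~\ref{Lemontoneh}, \ref{LeSbound} and Corollary~\ref{Cor1}, obtain WKB solutions for a.e.\ $E$ in each compact band subinterval, observe that they (and their conjugates) are bounded, and translate this into absolute continuity plus the essential-support statement by combining $\sigma_{\mathrm{ess}}(H)=\sigma_{\mathrm{ess}}(H_0)$ with a ``bounded solutions $\Rightarrow$ ac spectrum'' principle --- the paper cites the Simon--Stolz bounded-eigenfunction criterion where you cite Gilbert--Pearson subordinacy, but these are interchangeable for this purpose. One correction to your closing commentary is worth making explicit: the lower bound \eqref{Gaug131} on the first $E$-derivative of the phase is \emph{retained}, not dropped, and Lemma~\ref{Lemontoneh} establishes it precisely from $k'(E)\neq 0$, which by Floquet theory does not degenerate anywhere in the open band, so there is no degeneration of $k'(E)$ to ``still handle.'' The improvement over \cite{mkjfa1,mkcmp01,kim} is instead the removal of lower-bound assumptions on the \emph{second and third} $E$-derivatives of $h$, which a periodic background cannot be expected to satisfy and which would otherwise obstruct the integration-by-parts estimate inside Lemma~\ref{LeSbound}. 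Finally, the paper handles $p=1$ separately via classical $L^1$-perturbation asymptotics so as to obtain the WKB behavior for \emph{every} $E\in S$; your a.e.\ statement for $p=1$ is also recoverable from the $p>1$ machinery via $L^1\subset\ell^{p_0}(L^1)$, so this is a strengthening in the paper rather than a gap in your proposal.
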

Let $p^{\prime}$ be the conjugate number to $p$, namely  $\frac{p}{p-1}$ for $p>1$.
\begin{theorem} \label{mainthm2}Suppose  the potential $|x|^{\gamma}V\in \ell^p(L^1)$  for some $1< p\leq2$, $\gamma>0$ with $\gamma p^{\prime}\leq 1$.
Then
 for every \ $E \in S,$ there exists solution
 $u(x,E)$  of $Hu=Eu$ satisfying the asymptotical behavior \eqref{Gwkb}, except for a set of values of $E$ in $S$ with Hausdorff dimension less or equal than $1-\gamma p^{\prime}$.
\end{theorem}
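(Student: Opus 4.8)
The plan is to deduce Theorem \ref{mainthm2} from the master estimate Theorem \ref{prop41}, exactly as Theorem \ref{mainthm1} is deduced from it, with two differences: one feeds in the weighted hypothesis $|x|^{\gamma}V\in\ell^{p}(L^{1})$ rather than $V\in\ell^{p}(L^{1})$ with $p<2$, and in place of the ``for a.e.\ $E$'' conclusion one keeps track of the size of the exceptional set and reads off a bound on its Hausdorff dimension. I would begin on a fixed compact energy window $[a,b]\subset(a_{n},b_{n})$ inside a single spectral band and set up the variation-of-parameters representation adapted to the periodic background: writing a prospective solution of $Hu=Eu$ as $u=A(x)\varphi(x,E)+B(x)\overline{\varphi}(x,E)$, where $\varphi(\cdot,E)$ is the Floquet solution and the Wronskian $W(\varphi,\overline{\varphi})=2i\Im(\varphi\overline{\varphi}')$ is a nonzero constant on $[a,b]$, the pair $(A,B)$ satisfies a first-order system whose diagonal part integrates to the phase $\exp\bigl(\pm\tfrac{1}{W}\int_{0}^{x}V(t)|\varphi(t,E)|^{2}\,dt\bigr)$ appearing in \eqref{Gwkb}. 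Peeling this phase off leaves a purely off-diagonal integral equation whose kernel is $V(t)$ times a bounded $1$-periodic amplitude times a unimodular oscillatory factor $e^{\pm i(2k(E)t+\cdots)}$, with $k$ the quasimomentum; this is precisely the object the Christ--Kiselev machinery of Section \ref{CK} is designed to handle.

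Next I would expand the off-diagonal equation in its multilinear Born series, the $n$-th term being an $n$-fold iterated oscillatory integral in $V$, and invoke Theorem \ref{prop41} (which in turn rests on the Christ--Kiselev Lemma \ref{CKlemma}, the multiplicative bound Theorem \ref{thmmul}, and the oscillatory-integral bound Lemma \ref{LeSbound}). Under $|x|^{\gamma}V\in\ell^{p}(L^{1})$ with $1<p\le2$, $\gamma>0$ and $\gamma p'\le1$, the estimate provided there controls the $n$-th multilinear operator, as a function of $E$, by $C^{n}\|\,|x|^{\gamma}V\|_{\ell^{p}(L^{1})}^{n}/(n!)^{c}$ for some $c=c(p)>0$, in a norm measuring the $(1-\gamma p')$-dimensional Hausdorff content of the set of $E$ on which the term is large or fails to converge as $x\to\infty$. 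Summing over $n$, the exceptional set $B_{[a,b]}\subset[a,b]$ has $\dim_{H}B_{[a,b]}\le 1-\gamma p'$, and for $E\in[a,b]\setminus B_{[a,b]}$ the series converges to an actual solution with the asymptotics \eqref{Gwkb}, the factor $1+o(1)$ coming from the convergence of the leading integral together with the tail of the series. Exhausting $S=\bigcup_{n}(a_{n},b_{n})$ by countably many such compacta $[a,b]$ and using that a countable union of sets of Hausdorff dimension $\le 1-\gamma p'$ again has dimension $\le 1-\gamma p'$ gives the global statement; the fact that the constants degenerate as $[a,b]$ approaches a band edge (since $\|\varphi\|_{\infty}$, $\|1/\Im(\varphi\overline{\varphi}')\|_{\infty}$ and $\|k'\|_{\infty}$ blow up there) only worsens the constant on each piece and is harmless for the dimension bound.

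I expect the main obstacle, and the place where the real work of the paper is absorbed, to be the Hausdorff-content bookkeeping inside Theorem \ref{prop41} that produces the sharp exponent $1-\gamma p'$. Already for the single integral $\int_{0}^{x}V(t)\,b(t,E)\,e^{2ik(E)t}\,dt$ one must show that the set of energies on which it is large can be covered by intervals $\{I_{j}\}$ with $\sum_{j}|I_{j}|^{1-\gamma p'}$ under control, and the weight $|x|^{\gamma}$ with $\gamma p'\le1$ is exactly what makes such a covering possible. This is where the frequency-splitting device announced after Corollary \ref{Cor11} enters: one separates the contributions of $V$ on dyadic blocks $[2^{m},2^{m+1})$ according to whether $2^{m}$ is large or small relative to the resolution scale $|E-E_{0}|^{-1}$ of a putative bad energy $E_{0}$, so that a stationary-phase/integration-by-parts bound handles the high-scale blocks while a crude mass bound handles the low-scale ones; the multilinear factorial gain then lifts the single-integral estimate to the full series. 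The remaining ingredients — the Floquet representation, peeling off the phase, and checking that the formal series actually solves the equation, which the iterative definition of the multilinear operators (see Remark \ref{raug281}) makes transparent — are routine once Theorem \ref{prop41} is in hand.
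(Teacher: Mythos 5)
Your overall architecture — Floquet/variation-of-parameters reduction, the Born series, invoking Theorem \ref{prop41}, summing with the factorial gain, and exhausting $S$ by compacta — matches the paper's, and the first two paragraphs are essentially what Section 5 does. But your third paragraph, which you yourself flag as the crux, points at the wrong mechanism in two ways, and this is a genuine gap rather than a stylistic difference.

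First, Theorem \ref{prop41} contains no Hausdorff-content bookkeeping whatsoever: it is a purely conditional statement asserting convergence of the iterated integrals \emph{given} that $\|\mathcal{F}(\cdot,E)\chi_I\|_{\mathfrak{B}}$ is uniformly bounded and $\|\mathcal{F}(\cdot,E)\chi_{[M,\infty)}\|_{\mathfrak{B}}\to 0$. The dimension bound in the paper enters via a separate step: one sets
\[
\Lambda_c=\{E\in K:\ \|\mathcal{F}(\cdot,E)\chi_{[N,\infty)}\|_{\mathfrak{B}^2}\geq c\ \text{for every }N\geq 0\},
\]
shows $\mathcal{H}^{\beta}(\Lambda_c)=0$ for each $\beta>1-\gamma p'$ and each $c>0$ (Lemma \ref{Lehaus}, which the paper cites from \cite{mkcmp01} and notes is actually simpler here because $V$ does not depend on $E$), and then on the complement transfers $\mathfrak{B}^2$-smallness to the two hypotheses of Theorem \ref{prop41} via the elementary embedding $\|g\chi_I\|_{\mathfrak{B}}\leq C\|g\|_{\mathfrak{B}^2}$ of Lemma 5.1. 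The weight $|x|^\gamma$ and the exponent $1-\gamma p'$ are consumed inside Lemma \ref{Lehaus}, not inside Theorem \ref{prop41}. Your proposal never isolates the object $\Lambda_c$ or the crucial $\mathfrak{B}^2\to\mathfrak{B}$ passage, so as written there is no identified route from the $L^{p'}$-bound on the maximal multilinear operator to a covering of the bad energies by intervals with controlled $\sum|I_j|^{\beta}$.

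Second, the frequency-splitting device you invoke is misattributed. In the paper that splitting (into quasimomentum frequencies, implemented as a Fourier-mode decomposition of $1$-periodic amplitudes with low modes $k\in\{-1,0,1\}$ handled by $\log|k(E_1)-k(E_2)|^{-1}$ bounds and high modes by $1/|k|$ decay, Lemma \ref{Leper4} and Theorem \ref{Leper3}) is the key to the sharp almost-orthogonality estimate for generalized Pr\"ufer angles, which underlies Theorem \ref{mainthm3} on singular continuous spectrum. It plays no role in the proof of Theorem \ref{mainthm2}. Your dyadic-block-versus-resolution-scale picture is also a different object: the paper's split is in the Fourier dual of the periodic direction, not in the spatial scale relative to $|E-E_0|^{-1}$. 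If you want to complete the proposal, replace the frequency-splitting heuristic with a proof (or citation) of the $\mathcal{H}^{\beta}(\Lambda_c)=0$ estimate, and insert the $\mathfrak{B}^2\to\mathfrak{B}$ reduction so that Theorem \ref{prop41} can actually be invoked pointwise off the exceptional set.
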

As a corollary, we have
\begin{corollary}\label{Cor}
Suppose $V(x)=\frac{O(1)}{1+x^{\alpha}}$ with $\alpha\in[0,1]$. Then
 for every \ $E \in S,$ there exists solution
 $u(x,E)$  of $Hu=Eu$ satisfying the asymptotical behavior \eqref{Gwkb}, except for a set of values of $E$ in $S$ with Hausdorff dimension less or equal than $2(1-\alpha)$.
\end{corollary}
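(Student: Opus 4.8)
\emph{Proof proposal for Corollary \textup{\ref{Cor}}.} The plan is to deduce this directly from Theorem \ref{mainthm2} by an optimal choice of the parameters $p$ and $\gamma$. First observe that if $2(1-\alpha)\geq 1$, i.e.\ $\alpha\leq 1/2$, there is nothing to prove: the exceptional set is a subset of $S\subseteq\R$ and hence automatically has Hausdorff dimension at most $1\leq 2(1-\alpha)$. So we may assume $1/2<\alpha\leq 1$.

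Next I would fix $p=2$ (so that $p'=2$) and let $\gamma$ be any number with $0<\gamma<\alpha-\tfrac12$; such $\gamma$ exists precisely because $\alpha>1/2$, and since $\gamma<\alpha-\tfrac12\leq\tfrac12$ the requirement $\gamma p'=2\gamma\leq 1$ of Theorem \ref{mainthm2} is satisfied. The hypothesis $V(x)=O(1)/(1+x^{\alpha})$ gives $|x|^{\gamma}|V(x)|\lesssim (1+x)^{\gamma-\alpha}$, hence $\int_{k}^{k+1}|x|^{\gamma}|V(x)|\,dx\lesssim (1+k)^{\gamma-\alpha}$ for every $k\geq 0$; as $2(\gamma-\alpha)<-1$, the series $\sum_{k}\big(\int_{k}^{k+1}|x|^{\gamma}|V|\,dx\big)^{2}$ converges, that is, $|x|^{\gamma}V\in\ell^{2}(L^{1})(\R^{+})$. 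Thus Theorem \ref{mainthm2} applies with this $p$ and $\gamma$ and shows that the set $\mathcal{E}$ of those $E\in S$ for which no solution of $Hu=Eu$ satisfying \eqref{Gwkb} exists has Hausdorff dimension at most $1-\gamma p'=1-2\gamma$.

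Finally, since $\mathcal{E}$ is a single fixed set, independent of $\gamma$, and the bound $\dim_{H}\mathcal{E}\leq 1-2\gamma$ holds for every admissible $\gamma\in(0,\alpha-\tfrac12)$, I would let $\gamma\uparrow\alpha-\tfrac12$ to conclude $\dim_{H}\mathcal{E}\leq\inf_{0<\gamma<\alpha-1/2}(1-2\gamma)=1-2(\alpha-\tfrac12)=2(1-\alpha)$, which is the assertion. One can check, by repeating the computation with a general exponent $p\in(1,2]$ (where the analogous limiting bound works out to $\tfrac{p(1-\alpha)}{p-1}$, the constraint $\gamma p'\leq 1$ being automatic when $\alpha\leq 1$), that $p=2$ is the optimal choice, so no other exponent improves the dimension bound. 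The only point needing a little care is that the endpoint $\gamma=\alpha-\tfrac12$ is not itself admissible—there $|x|^{\gamma}V$ just fails to lie in $\ell^{2}(L^{1})$—but this is harmless because a Hausdorff dimension bound valid for the one set $\mathcal{E}$ across the whole open range of $\gamma$ passes to the infimum. Beyond this parameter bookkeeping there is no real obstacle; the substance is entirely contained in Theorem \ref{mainthm2}.
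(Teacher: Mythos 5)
Your proof is correct, and since the paper states the corollary without an explicit proof (it is presented immediately after Theorem \ref{mainthm2} as a direct consequence), your argument is exactly the intended deduction. The parameter bookkeeping is right: for $\alpha>1/2$, taking $p=2$ and any $\gamma\in(0,\alpha-\tfrac12)$ gives $|x|^{\gamma}V\in\ell^{2}(L^{1})$ with $\gamma p'=2\gamma<1$, and since the exceptional set $\mathcal{E}=\{E\in S:\text{no solution with asymptotics \eqref{Gwkb}}\}$ is intrinsically defined and does not depend on $\gamma$, the bound $\dim_{H}\mathcal{E}\le 1-2\gamma$ passes to the infimum $2(1-\alpha)$; the $\alpha\le 1/2$ case is trivial, as you note. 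Your verification that $p=2$ is optimal (the limiting bound $\tfrac{p(1-\alpha)}{p-1}$ is decreasing in $p$ on $(1,2]$, with $\gamma p'\le 1$ automatic for $\alpha\le 1$) is also correct, though not strictly needed to establish the stated bound.
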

\begin{remark}
It has been  shown that the multilinear operator  technique is not extendable to  tackle the case $p=2$ by  Muscalu-Tao-Thiele \cite{tao}.
\end{remark}
If $V_0\equiv0$,
Theorems \ref{mainthm1}, \ref{mainthm2} and Corollary \ref{Cor} have been proved   in   \cite{mkcmp01,mkjfa1,mkjfa2,Rempams2000}.
If $V_0\equiv0$, Remling \cite{Remlingsharp}  and Kriecherbauer-Remling \cite{KR01} constructed examples which show that $2(1-\alpha)$ in Corollary \ref{Cor} is the best bound to be achieved.
Under a stronger assumption of  potentials $V$, that is $|x|^\gamma V\in \ell^p(L^1)$  for   some $\gamma>0 $, Theorem \ref{mainthm1} has been proved by Christ and Kiselev
\cite{mkjams98}.  For $p>2$, it is known that Theorem \ref{mainthm1} is not true even for the case $V_0\equiv0$ (see \cite{KLS} for example).
For $p=2$, the first part of Theorem \ref{mainthm1} is proved by  Deift-Killip  for the case $V_0\equiv0$ \cite{deift99} and  Killip \cite{killip02} by different approaches.  For  $p=2$,  the second part is  open  even for $V_0\equiv0$.


Our another interest in this paper is to investigate   spectral  transitions for  singular spectra.
Let us review the results for  the case $V_0\equiv0$ first.
\begin{description}
  \item[I] If $ V(x)=\frac{o(1)}{1+x}$, $H_0+V$ does not have any positive eigenvalues \cite{kato}.
  \item [II]Wigner-von Neumann  type functions imply that  there exist potentials $V(x)=\frac{O(1)}{1+x}$ such that $H_0+V$  has   positive eigenvalues \cite{von1929uber}.
  \item [III] For any  given   positive function  $h(x)$ tending to infinity as  $x\to \infty$,
there exist potentials $V (x) $ such that $|V(x)|\leq \frac{h(x)}{1+x}$
 and  operators $H_0+V$  have dense embedded eigenvalues \cite{nabdense,simdense}.
  \item [IV] If $V(x)=\frac{O(1)}{1+x}$, $H_0+V$ does not have the singular continuous spectrum \cite{Kiselev05}.
  \item [V] For any  given any positive function  $h(x)$ tending to infinity as  $x\to \infty$,
there exist potentials $V (x) $ such that $|V(x)|\leq \frac{h(x)}{1+x}$
 and  singular continuous spectra of  operators $H_0+V$ are non-empty \cite{Kiselev05}.
\end{description}
Clearly, the above statements from I to V imply the criteria  for  the absence of singular spectra (eigenvalues and singular continuous spectra).
It is natural to  expect that  corresponding  criteria  are true for any  non-zero periodic function $V_0$.
For  embedded  eigenvalues, cases I, II and III    have been  proved  to be true for general $V_0$ \cite{KRS,ld}. For the singular continuous spectrum,  we  conjecture  that IV and V  hold for general periodic functions $V_0$.
In this paper, we prove half of the conjecture.

\begin{theorem}\label{mainthm3}
Suppose $V(x)=\frac{O(1)}{1+x}$. Then
the singular continuous spectrum of $H=H_0+V$ with any boundary condition is empty.
\end{theorem}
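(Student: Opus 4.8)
The plan is to use the WKB solutions produced by Theorems \ref{mainthm1}--\ref{mainthm2} (through Theorem \ref{prop41}) together with Gilbert--Pearson subordinacy theory, and thereby reduce the statement to controlling the spectral measure on a set of Hausdorff dimension zero.

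First I would record that $V(x)=O(1/(1+x))$ satisfies $|x|^{\gamma}V\in\ell^{2}(L^{1})$ for every $\gamma<1/2$, with $\gamma p'=2\gamma<1$. Applying Theorem \ref{mainthm2} for a sequence $\gamma_{n}\uparrow 1/2$ and intersecting the exceptional sets produces a Borel set $\mathcal{E}\subseteq S$ with $\dim_{H}\mathcal{E}=0$ such that for every $E\in S\setminus\mathcal{E}$ there is a solution $u(\cdot,E)$ with the asymptotics \eqref{Gwkb}. Writing the Floquet solution as $\varphi(x,E)=e^{ik(E)x}p(x,E)$ with $p$ $1$-periodic and nonvanishing on the open bands, the integrand $V(t)|\varphi(t,E)^{2}|=V(t)|p(t,E)|^{2}$ is real and $O(1/t)$, so the exponential prefactor in \eqref{Gwkb} is unimodular; hence $|u(x,E)|=|p(x,E)|(1+o(1))$ is bounded above and below. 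Conjugating the (real) equation gives a second, linearly independent solution $\overline{u(\cdot,E)}$ with the conjugate asymptotics. For $E$ in the interior of a band $k(E)\notin\pi\mathbb{Z}$, so $\varphi(x,E)^{2}$ has vanishing Ces\`aro mean, and since the phase in \eqref{Gwkb} is real and slowly varying, an integration-by-parts argument gives $\int_{0}^{R}|\psi(x,E)|^{2}\,dx\asymp R$ for every nontrivial solution $\psi$. Thus no solution is subordinate at infinity, so by Gilbert--Pearson $\mu_{\mathrm{sing}}^{\theta}$ gives no mass to $\operatorname{int}(S)\setminus\mathcal{E}$ for every boundary condition $\theta$; that is, $\mu_{\mathrm{sing}}^{\theta}$ is carried by $\mathcal{E}$ together with the band gaps and band edges of $H_{0}$.

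Next I would dispose of the gaps and edges. Since $V$ is bounded and tends to $0$, it is $H_{0}$-compact, so $\sigma_{\mathrm{ess}}(H)=\sigma_{\mathrm{ess}}(H_{0})=S$; hence in each open gap $H$ has only isolated eigenvalues of finite multiplicity, which accumulate at most at band edges. These eigenvalues are atoms, and the set of band edges is countable, so a continuous measure assigns them no mass: $\mu_{\mathrm{sc}}^{\theta}$ vanishes on the gaps and edges. It therefore remains to prove $\mu_{\mathrm{sc}}^{\theta}(\mathcal{E})=0$ for every $\theta$.

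This last step is the main obstacle, and it is where the paper's sharp oscillatory-integral bounds should enter. Because $\dim_{H}\mathcal{E}=0$, it suffices to show that $\mu_{\mathrm{sc}}^{\theta}$ cannot charge sets of Hausdorff dimension zero, which in turn needs a quantitative version of the WKB analysis valid at \emph{all} energies of $\operatorname{int}(S)$. The plan is: use the oscillatory-integral estimates (the quantitative, Wigner--von Neumann-type control of the resonant integrals that the paper establishes) to show that for every compact $K\subset\operatorname{int}(S)$ there is $\gamma=\gamma(K)<1$ with $\|T(x,E)\|=O((1+x)^{\gamma})$ for all $E\in K$, apart from an at most countable set of ``strongly resonant'' energies (which, being individual points, are $\mu_{\mathrm{sc}}^{\theta}$-null); then a power-law subordinacy estimate of Jitomirskaya--Last type forces $\mu_{\mathrm{sc}}^{\theta}|_{K}$ to be $\tfrac{1-\gamma}{1+\gamma}$-continuous, in particular to annihilate the dimension-zero set $\mathcal{E}$. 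Equivalently, one can run the multilinear construction of Section \ref{CK} so as to produce \eqref{Gwkb} with error $O(x^{\varepsilon})$ in place of $1+o(1)$ outside a set whose Hausdorff dimension can be made arbitrarily small, and feed the resulting sub-linear solution growth into the same dimensionality estimate. The delicate points will be (i) extracting the uniform sub-linear transfer-matrix bound and ruling out an uncountable set of super-linearly resonant energies, using the sharpness of the oscillatory-integral bounds, and (ii) adapting power-law subordinacy to the periodic background, i.e.\ measuring solution size against the Floquet solutions $\varphi,\overline{\varphi}$ rather than $e^{\pm i\sqrt{E}x}$. Granting this, $\mu_{\mathrm{sc}}^{\theta}\equiv 0$ for every boundary condition, which is the assertion.
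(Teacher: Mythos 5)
Your first two paragraphs are sound and closely parallel the role played in the paper by Corollary \ref{Cor} and standard subordinacy considerations: you correctly observe that $V=O(1/(1+x))$ satisfies $|x|^{\gamma}V\in\ell^{2}(L^{1})$ for every $\gamma<1/2$, so Theorem \ref{mainthm2} (with $\gamma\uparrow 1/2$) yields a dimension-zero exceptional set $\mathcal{E}$, and that outside $\mathcal{E}$ the two WKB solutions $u,\overline{u}$ are bounded above and below, so Gilbert--Pearson rules out singular mass on $\operatorname{int}(S)\setminus\mathcal{E}$; the gap/edge discussion is routine and fine. The problem is the third step, where you depart from the paper and where the essential difficulty actually lives.

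The key claim you need, namely that for every compact $K\subset\operatorname{int}(S)$ there is $\gamma(K)<1$ with $\|T(x,E)\|=O((1+x)^{\gamma(K)})$ for every $E\in K$ outside a countable set, is not established anywhere in the paper and is almost certainly false in the stated form. The Pr\"ufer amplitude satisfies $\ln R(L,E)=\tfrac12\int_0^L V\sin 2\theta/\gamma'\,dx$, and for a Wigner--von Neumann resonance this integral is genuinely of order $c\log L$ with a constant proportional to the coupling $B$; taking $B$ large already makes the transfer matrix grow superlinearly at a resonant energy, and by superposing countably many such resonances with amplitudes summing to any prescribed $B$ one can make the superlinear-growth set uncountable a priori. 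More to the point, the Jitomirskaya--Last estimate you invoke gives $\alpha$-continuity with $\alpha=\tfrac{1-\gamma}{1+\gamma}$, which degenerates to nothing as $\gamma\to 1$ and is vacuous for $\gamma\ge 1$; so even if a power-law bound held it would not uniformly kill dimension-zero sets unless $\gamma(K)$ were bounded strictly below $1$ independently of $B$, which it cannot be. The delicacy here is real: Kiselev's result V shows that an arbitrarily small worsening of the decay (any $h(x)/(1+x)$ with $h\to\infty$) produces nonempty singular continuous spectrum, so no soft power-law subordinacy statement can be doing the work.

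The paper's actual proof of Theorem \ref{mainthm3} bypasses all of this. After reducing to a dimension-zero carrier as you do, it proceeds through a completely different chain: (i) the sharp quantitative almost-orthogonality of the generalized Pr\"ufer angles, Theorem \ref{Leper3}, which bounds the correlation $\int_0^L f(x)\sin 2\theta(x,E_1)\sin 2\theta(x,E_2)\,(1+x)^{-1}dx$ by $O(\log|E_1-E_2|^{-1})$ (requiring the Fourier-expansion/frequency-splitting machinery of Lemmas \ref{Leper1}--\ref{Leper4}); (ii) a Hilbert-space counting argument, Lemma \ref{Lee} and Theorem \ref{Thmbound}, showing that at each scale $\epsilon$ there are at most $N$ energies with excessive Pr\"ufer-amplitude growth and mutual separation $\ge\epsilon^{1/N^2}$; (iii) an explicit formula for the spectral measure of the cut-off operator $H_0+V_L$ in terms of $R(L,E)^{-2}$, Theorem \ref{Lemu}, obtained by combining Floquet and Weyl theory; and (iv) the approximation scheme of Kiselev \cite{Kiselev05}, which shows that a nonzero $\mu_{\rm sc}$ carried by a dimension-zero set would force too many separate ``bad'' energies at some scale, contradicting (ii). Nothing in your proposal replaces items (ii)--(iv), and item (i) in your proposal (the sublinear transfer-matrix bound outside a countable set) is both unproven and not implied by the oscillatory-integral estimates, which concern pairwise correlations rather than single-energy amplitude growth. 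You should replace the last paragraph with the counting/approximation argument, or supply a genuinely new proof of your transfer-matrix claim with an explicit treatment of how the exceptional set is controlled.
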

 Theorem \ref{mainthm3} and case I for general $V_0$  imply
\begin{corollary}
Suppose $V(x)=\frac{o(1)}{1+x}$. Then the spectral measure of $ H_0+V$ with any boundary condition at zero is purely absolutely continuous in $S$.
\end{corollary}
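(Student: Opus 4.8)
The plan is to obtain this as a direct consequence of Theorem \ref{mainthm3} together with the Kato-type absence-of-eigenvalues statement, i.e.\ case I in the form valid for an arbitrary periodic background $V_0$ (proved in \cite{KRS,ld}). The mechanism is simple: on the \emph{open} set $S=\cup_n(a_n,b_n)$ the only way the spectral measure of $H=H_0+V$ can fail to be absolutely continuous is through embedded eigenvalues or through singular continuous mass, and the hypothesis $V(x)=o(1)/(1+x)$ rules out both.

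First I would note that $o(1)/(1+x)=O(1)/(1+x)$, so Theorem \ref{mainthm3} applies and gives that the singular continuous spectrum of $H=H_0+V$ is empty for every boundary condition at $0$; in particular the restriction $\mu|_S$ of the spectral measure to $S$ has no singular continuous part. Next I would invoke case I for general periodic $V_0$: since $V(x)=o(1)/(1+x)$, the operator $H_0+V$ has no eigenvalues inside the bands, so $\sigma_{\mathrm{pp}}(H)\cap S=\emptyset$. (The band edges $\{a_n,b_n\}$ are excluded from $S$ by definition, so possible edge eigenvalues are irrelevant; and since $\sigma_{\mathrm{ess}}(H)=\overline S$ by Weyl's theorem, any eigenvalue outside $\overline S$ is discrete and again does not affect $\mu|_S$.) Combining the two inputs, on $S$ the measure $\mu$ has neither a pure point component nor a singular continuous component, hence $\mu|_S$ is purely absolutely continuous, which is the assertion. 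Finally, to record that this is the optimal such statement, observe that $V\in\ell^p(L^1)(\R^+)$ for every $p\in(1,2)$ — indeed $\int_k^{k+1}|V|=o(1)\cdot(k+1)^{-1}$, so $\sum_k\bigl(o(1)/(k+1)\bigr)^p<\infty$ whenever $p>1$ — and therefore Theorem \ref{mainthm1} shows that $S$ is an essential support of the absolutely continuous spectrum of $H$.

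I do not expect a genuine obstacle here, since the substance lives in Theorem \ref{mainthm3} and in the generalized Kato theorem; the work is bookkeeping. The two points deserving a line of care are (i) that one must apply the decomposition ``singular $=$ pure point $+$ singular continuous'' on the \emph{open} set $S$, so that band-edge behaviour is automatically set aside, and (ii) that the conclusion must hold uniformly over all boundary conditions at $0$ — both of which are inherited verbatim from the corresponding uniformity already present in Theorems \ref{mainthm3} and \ref{mainthm1}.
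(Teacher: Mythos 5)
Your proof is correct and follows precisely the paper's own one-line derivation: Theorem \ref{mainthm3} rules out singular continuous spectrum, and the generalized Kato theorem (case I for general periodic $V_0$, from \cite{KRS,ld}) rules out eigenvalues in the open bands, so $\mu|_S$ is purely absolutely continuous. The closing observation that $V\in\ell^p(L^1)$ for all $p\in(1,2)$ (hence Theorem \ref{mainthm1} gives $S$ as an essential support) is a correct and worthwhile complement, though not needed for the stated conclusion.
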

Define $P$ as
\begin{equation}\label{Gdef.Pc}
  P=\{E\in \R:-u^{\prime\prime}+(V(x)+V_0(x))u=Eu \text { has an }L^2(\R^+) \text { solution }\},
\end{equation}
It has been proved that   $P\cap S$ is a countable set provided $V(x)=\frac{O(1)}{1+x}$ \cite{liuasym}.
 Therefore, Theorem \ref{mainthm3} also implies
\begin{corollary}\label{Cor11}
Suppose $V(x)=\frac{O(1)}{1+x}$. Then except for countably many  boundary conditions at zero, the spectral measure of $ H_0+V$  is purely absolutely continuous in $S$.
\end{corollary}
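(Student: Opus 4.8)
The plan is to obtain Corollary~\ref{Cor11} as a short deduction from Theorem~\ref{mainthm3} and the countability of $P\cap S$, so essentially no new analysis is needed — the argument is measure-theoretic bookkeeping. Parametrize the self-adjoint realizations of $-u''+(V_0+V)u$ on $\R^+$ by the boundary condition $u(0)\cos\theta+u'(0)\sin\theta=0$, $\theta\in[0,\pi)$, write $H_\theta$ for the corresponding operator and $\mu_\theta$ for its canonical (spectral) measure, with Lebesgue decomposition $\mu_\theta=\mu_{\theta,\mathrm{ac}}+\mu_{\theta,\mathrm{sc}}+\mu_{\theta,\mathrm{pp}}$. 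Since $V(x)=O(1)/(1+x)$ satisfies $\int_k^{k+1}|V|\le C/(1+k)$, it lies in $\ell^p(L^1)$ for every $p\in(1,2)$, so Theorem~\ref{mainthm1} already tells us that $S$ is an essential support of $\mu_{\theta,\mathrm{ac}}$ for every $\theta$. Hence it remains only to show that for all but countably many $\theta$ the restriction $\mu_\theta|_S$ has no singular part.

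First I would invoke Theorem~\ref{mainthm3}: under $V(x)=O(1)/(1+x)$ the singular continuous spectrum of $H_\theta$ is empty for every $\theta$, so $\mu_{\theta,\mathrm{sc}}=0$ and $\mu_\theta|_S=\mu_{\theta,\mathrm{ac}}|_S+\mu_{\theta,\mathrm{pp}}|_S$. Thus the only thing left to kill is the point spectrum inside $S$. An energy $E\in S$ is an eigenvalue of $H_\theta$ exactly when $Hu=Eu$ possesses a nonzero $L^2(\R^+)$ solution that in addition satisfies the boundary condition at $0$; in particular any such $E$ belongs to $P\cap S$, which by \cite{liuasym} is a countable set. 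Next I would use that the equation is in the limit point case at $+\infty$ (periodic $V_0$, bounded $V$), so for each $E\in P\cap S$ the $L^2(\R^+)$ solution of $Hu=Eu$ is unique up to a scalar multiple, and by uniqueness for the initial value problem its Cauchy data $(u(0),u'(0))$ are not both zero; hence there is exactly one $\theta=\theta(E)\in[0,\pi)$ for which this solution obeys the boundary condition, i.e.\ for which $E$ is an eigenvalue of $H_{\theta}$.

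Putting this together: the set of boundary conditions $\theta$ for which $H_\theta$ has at least one eigenvalue in $S$ is contained in $\{\theta(E):E\in P\cap S\}$, a countable set. For $\theta$ outside this countable exceptional set we then have $\mu_{\theta,\mathrm{pp}}(S)=0$, and combined with $\mu_{\theta,\mathrm{sc}}=0$ this gives $\mu_\theta|_S=\mu_{\theta,\mathrm{ac}}|_S$, so $\mu_\theta$ is purely absolutely continuous in $S$ (and, by Theorem~\ref{mainthm1}, $S$ is indeed an essential support of it). The whole substance of the corollary is carried by its two inputs — Theorem~\ref{mainthm3}, the absence of singular continuous spectrum, and the countability of $P\cap S$ from \cite{liuasym} — and I would not reprove either; the only genuinely new step, and the sole place anything could go wrong, is the identification ``$E$ is an eigenvalue of $H_\theta$'' $\Longleftrightarrow$ ``$\theta=\theta(E)$'', which is routine once the $L^2$ solution is known to be unique up to scalars. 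One should also note in passing that the exceptional set can be taken uniform across all the statements: off this countable set of $\theta$, the spectrum of $H_\theta$ in $S$ is purely absolutely continuous.
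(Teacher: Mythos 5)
Your proposal is correct and follows the same route the paper takes: absence of singular continuous spectrum from Theorem~\ref{mainthm3}, plus the countability of $P\cap S$ from \cite{liuasym}, with the standard limit-point observation that each $E\in P\cap S$ is an eigenvalue for exactly one boundary condition. The paper states this implication without spelling out the details; you have simply written out the measure-theoretic bookkeeping explicitly.
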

The proof of Theorem \ref{mainthm3} is inspired by \cite{Kiselev05}, where the case $V_0\equiv 0$ has been proved.
Under the assumption of Theorem \ref{mainthm3}, the  singular component of the spectral measure is supported on a  set of zero Hausdoff   dimension  by Corollary \ref{Cor}.
Following the strategy   in \cite{Kiselev05}, additional four steps are needed.  Step 1:  establish the  quantitatively almost orthogonality among   Pr\"ufer angles.
Step 2:  control  the total number of ``separate energies''\footnote{See the   definition in Section \ref{lsection}.} based on step 1.  Step 3:  establish the spectral measure  of   Schr\"odinger operators with eventually zero potentials. Step 4:  use the spectral measure of  eventually zero potentials  to do  approximations.

In our case, the periodic potentials are involved in so that the problem  is a lot more complicated,  in particular, the step 1 and step 3.
Let us mention that the almost orthogonality is between    $\frac{\theta(x,E_1)}{1+x}$ and $\frac{\theta(x,E_2)}{1+x}$ in Hilbert space $ L^2([0,B],(1+x)dx)$ for  large $B$, where $\theta(x,E_1)$ (resp. $\theta(x,E_2)$) is the (generalized) Pr\"ufer angle with respect to energy $E_1$ (resp. $E_2$). In  \cite{Kiselev05}, Kiselev established   sharp bounds of the almost orthogonality of perturbed free Schr\"odinger operators (step 1). For our cases,
 rather than using the standard Pr\"ufer variables, we have to instead use the generalized Pr\"ufer variables, which  is known to be difficult to handle.
 The almost orthogonality of general cases  was proved  very recently in \cite{ld} without quantitative estimates, which is used to  construct embedded eigenvalues. However, in order to deal with the singular continuous spectrum, the quantitative bounds are essential, in particular, we need to control the  blowup when $E_1$ approaches   to $E_2$.
 In \cite{ld},
 one of the  innovations    is the use of Fourier expansions to ensure that some key terms   decay sufficiently quickly.
 For the rest of terms,  it  can be controlled by using    oscillatory integral techniques  to establish  the well cancellation between positive and  negative parts of the integrals.   Even through we use   Fourier expansions and oscillatory integral techniques from \cite{ld} in a quantitative  way, the bounds are not enough.  We overcome the difficulty by splitting  the frequencies into high     and  low  ones, where the frequency comes from  the  quasimomentum of  Floquet theory.
 For  high  frequencies, we quantify   Fourier expansions and oscillatory integral techniques  in \cite{ld}  in a sharp  way.
 For low   frequencies, we  combine  Fourier   expansions in \cite{ld}  with the techniques in  \cite{Kiselev05} to establish the sharp bounds.

In the end, we remark that the spectral theory of perturbed periodic operators in higher dimensions is much more difficult.  We refer readers to  \cite{kuc2016} for details.
\section{ Christ-Kiselev's multi-linear operator techniques}\label{CK}
Since we only consider   operators on the half line $\R_+$, all the functions    are defined on $\R_+$.
Let  us
introduce the  multilinear operator $M_n$, acting on $n$ functions $g_k$, $k=1,2,\cdots,n$,  by
\begin{eqnarray}
  M_n(g_1,g_2,\cdots,g_n)(x,x^{\prime})&=& \int_{x\leq t_1\leq  \cdots\leq t_n\leq x^{\prime}} \prod_{k=1}^n g_k(t_k)dt_k \\
   &=& \int_{x\leq t_1\leq  \cdots\leq t_n<\infty} \prod_{k=1}^n g_k(t_k)\chi_{[0,x']}(t_k)dt_k,\label{Gaug272}
\end{eqnarray}
where $\chi$ is the characteristic function.
If there is a single function $g$ such that $g_k$ is in $\{g,\overline{g}\}$, we write it down by $M_n(g)(x,x^{\prime})$.

A collection of subintervals $E_j^m\subset \R_+$, $1\leq j\leq 2^m$ and $m\in \Z^+$ \cite{mkjfa1} is called a martingale structure if  the following  is true:
\begin{itemize}
  \item $\R_+=\cup_j E_j^m $ for every $m$.
  \item $ E_j^m\cap E_{i}^m=\emptyset$ for every $i\neq j$.
  \item If $i<j, x\in E_i^m$ and $x^\prime\in E_{j}^m$, then $x<x^\prime$.
  \item  For every $m$, $E_j^m=E^{m+1}_{2j-1}\cup E^{m+1}_{2j}$.
\end{itemize}
Denote by $\chi_j^m=\chi_{E_j^m}$.
Let $\mathfrak{B}_s$ be the Banach space consisting of all complex-valued sequences $a=a(m,j)$ indexed by $1\leq m<\infty$ and $1\leq j\leq 2^m$, for which
\begin{equation*}
  ||a||_{\mathfrak{B}_s}=\sum_{m\in \Z_+}m^s\left(\sum_{j=1}^{2^m}|a(m,j)|^2\right)^{1/2}<\infty.
\end{equation*}
Denote by $\mathfrak{B}=\mathfrak{B}_1$.
For any function $g$ on $\R_+$, we can define a sequence  with index $m\in\Z_+$ and $1\leq j\leq 2^m$,
\begin{equation*}
  \left\{\int_{E_j^m}g(x)dx\right\}=\left\{\int_{\R_+} g(x)\chi_j^m dx\right\}.
\end{equation*}
By abusing the notation, denote by
\begin{equation}\label{Gnormb}
||g||_{\mathfrak{B}^s}=\left\|\left\{\int_{E_j^m}g(x)dx\right\}\right\|_{\mathfrak{B}^s}=\sum_{m}^{\infty}m^s\left(\sum_{j=1}^{2^m}\left|\int_{E_j^m}g(x)dx\right|^2\right)^{\frac{1}{2}}.
\end{equation}
Define
\begin{equation}\label{GdefMstar}
  M_n^{*}(g_1,g_2,\cdots,g_n)=\sup_{0<x\leq x^{\prime}<\infty}|M_n(g_1,g_2,\cdots,g_n)(x,x^{\prime})|,
\end{equation}
and
\begin{equation}\label{GdefMstarnew}
  M_n^{*}(g)=\sup_{0<x\leq x^{\prime}<\infty}|M_n(g)(x,x^{\prime})|.
\end{equation}

\begin{theorem}\cite{mkjfa2}\label{thmmul}
For any martingale structure  $E_j^m\subset \R_+$, $1\leq j\leq 2^m$ and $m\in \Z^+$,
the following estimates hold,
\begin{equation}\label{Gaug233}
  M_n^{*}(g_1,g_2,\cdots, g_n)\leq C^{n} \prod_{i=1}^n||g_i||_{\mathfrak{B}},
\end{equation}
and
\begin{equation}\label{Gaug234}
  M_n^{*}(g)\leq C^{n} \frac{||g||_{\mathfrak{B}}^n}{\sqrt{ n!}},
\end{equation}
where $C$ is an absolute  constant.
\end{theorem}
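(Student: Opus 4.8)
The plan is to prove both inequalities simultaneously by induction on $n$, exploiting the binary-tree structure of the sets $E_j^m$ and reducing each step to a Cauchy--Schwarz estimate at a single level together with the summability built into the weight $m^s$ of the $\mathfrak{B}$-norm. I begin with the base case $n=1$. Since $M_1(g)(x,x')=\int_0^{x'}g-\int_0^{x}g$, it is enough to bound $\sup_x\big|\int_0^x g\big|$. Given $x$, let $E_{j_m}^m$ be the unique level-$m$ interval containing $x$; since $x$ lies in $E_{j_m}^m=E_{2j_m-1}^{m+1}\cup E_{2j_m}^{m+1}$ one has $E_{j_{m+1}}^{m+1}\subset E_{j_m}^m$, so the left endpoints $r_m$ of these intervals increase, and telescoping the increments $\int_{r_m}^{r_{m+1}}g$ gives $\int_0^{r_M}g=\sum_{m}\int_{E_{k_m}^m}g$, where each level $m$ contributes at most one interval $E_{k_m}^m$. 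Cauchy--Schwarz inside each level then yields $\big|\int_0^{r_M}g\big|\le\sum_m\big(\sum_{j}|\int_{E_j^m}g|^2\big)^{1/2}=\|g\|_{\mathfrak{B}_0}$, and letting $M\to\infty$ (the remainder $\int_{r_M}^x g$ tending to $0$ for $g$ nice, then extending by density) gives $M_1^*(g)\le 2\|g\|_{\mathfrak{B}_0}\le 2\|g\|_{\mathfrak{B}}$.

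For the inductive step I would split the ordered simplex $\{x\le t_1\le\cdots\le t_n\le x'\}$ along the level-$1$ partition $\R_+=E_1^1\cup E_2^1$ according to how many of the $t_k$ lie in $E_1^1$. Taking first $x=0$, $x'=\infty$, this gives the identity
\begin{equation*}
M_n(g_1,\dots,g_n)(0,\infty)=\sum_{i=0}^{n}M_i(g_1,\dots,g_i;E_1^1)\,M_{n-i}(g_{i+1},\dots,g_n;E_2^1),
\end{equation*}
where $M_k(\,\cdot\,;I)$ is the same object built from the martingale structure restricted to the interval $I$, with $M_0\equiv1$; the ordering constraint between the two blocks is automatic because $E_1^1$ lies entirely to the left of $E_2^1$. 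Iterating this down through the levels, and using that the ``diagonal'' pieces (all $n$ points inside one shrinking interval) vanish in the limit, one writes $M_n(g_1,\dots,g_n)(0,\infty)$ as a convergent sum over binary-tree configurations of the $n$ ordered points. Applying Cauchy--Schwarz at the level where two consecutive points first separate replaces the inner $\ell^1$-sums over the $2^m$ level-$m$ intervals by the $\ell^2$-quantities $N_m:=\big(\sum_j|\int_{E_j^m}g|^2\big)^{1/2}$; the $n-1$ separation levels, together with the identity $\|g\|_{\mathfrak{B}}=\sum_m m N_m=\sum_{\ell}\sum_{m\ge\ell}N_m$, are precisely what the weight $m^s$ is built to absorb. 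This produces $M_n^*(g_1,\dots,g_n)\le C^n\prod_i\|g_i\|_{\mathfrak{B}}$; when every $g_k\in\{g,\bar g\}$ the configuration sum runs over \emph{ordered} tuples of points, which carries an extra $1/n!$ that, after being charged against the Cauchy--Schwarz step, leaves the factor $1/\sqrt{n!}$.

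The genuinely delicate point is the supremum over arbitrary $0<x\le x'<\infty$. One cannot just replace $g_i$ by $g_i\chi_{[x,x']}$, because truncation can destroy membership in $\mathfrak{B}$: at each level $m$ one interval of the partition is cut by $x$, and the resulting block $\int_{E\cap[x,\infty)}g$ is in general bounded only by the constant $M_1^*(g)$, so $\sum_m m\cdot M_1^*(g)=\infty$. Instead I would keep the structure fixed and follow the $x$-dependence (and, symmetrically, the $x'$-dependence) down the single chain of nested intervals containing $x$: at each node of this chain the recursion spawns, besides the ``full-interval'' subproblems already controlled, at most one boundary factor $\int_{E\cap[x,\infty)}g$, which is $\le M_1^*(g)\le 2\|g\|_{\mathfrak{B}}$ by the base case. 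Absorbing these $O(n)$ boundary factors into a larger absolute constant reduces the general $x,x'$ to the already-treated case $x=0$, $x'=\infty$.

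I expect the main obstacle to be this inductive step, specifically turning the formal binary-tree expansion into the clean multiplicative bound with the sharp exponent $1/\sqrt{n!}$. Two features make it subtle. First, one must use the $\ell^2$ structure \emph{across} levels: a per-level $\ell^1$ estimate is far too lossy, since the $\ell^1$-norm of the level-$m$ increments of $g$ is not controlled by $\|g\|_{\mathfrak{B}}$. Second, one must organize the sum over the depths and positions of the $n-1$ separation events in the tree so that the weight $m^s$ exactly pays for the sum over unbounded recursion depth while costing only a factor $C^n$ --- this combinatorial accounting is the technical heart of Christ and Kiselev's argument.
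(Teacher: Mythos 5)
The paper does not prove Theorem~\ref{thmmul}; it is quoted from \cite{mkjfa2} and explicitly listed in the introduction as one of the two external facts the paper takes for granted, so there is no internal argument to compare against. Your outline does capture the known skeleton of the Christ--Kiselev proof: the telescoping chain of nested intervals for the base case $n=1$, the binary-tree splitting of the ordered simplex over the martingale structure, Cauchy--Schwarz in the index $j$ at the levels where consecutive $t_k$'s first separate, and the weight $m^s$ absorbing the sum over unbounded recursion depth.

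The genuine gap is in the place you yourself single out as the technical heart. Your derivation of the sharpened constant $C^n/\sqrt{n!}$ in \eqref{Gaug234} rests on attributing an extra $1/n!$ to ``the configuration sum running over ordered tuples of points,'' but that ordering constraint is already built into the simplex $\{t_1\le\cdots\le t_n\}$ in the definition of $M_n$ and is present in exactly the same way in \eqref{Gaug233}; it therefore cannot be what distinguishes the symmetric case, and the only thing that actually changes when $g_k\in\{g,\bar g\}$ is the trivial collapse $\prod_i\|g_i\|_{\mathfrak{B}}=\|g\|_{\mathfrak{B}}^n$, which by itself yields no factorial gain. The $1/\sqrt{n!}$ in \cite{mkjfa2} comes from a separate, fairly delicate combinatorial lemma about the tree expansion that couples the $\ell^2$ structure across $j$ with the weight in $m$, and it is not a one-line ``square root of a $1/n!$'' consequence of the Cauchy--Schwarz step; nothing in your sketch substitutes for it. Secondarily, the reduction of the two-endpoint supremum to the case $x=0$, $x'=\infty$ is under-justified: the chain of nested intervals containing $x$ is infinite, so the claim that only $O(n)$ boundary factors appear would itself need an argument rather than an appeal to a larger absolute constant. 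The first issue is the essential missing ingredient; without that lemma the plan does not close.
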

A martingale structure  $E_j^m\subset \R_+$, $1\leq j\leq 2^m$ and $m\in \Z^+$ is said to be adapted in $\ell^p(L^1)$ to a function $f$ if for all possible $m,j$,
\begin{equation}\label{Gaug2310}
  ||f\chi_j^m||^p_{\ell^p(L^1)}\leq  2^{-m}||f||^p_{\ell^p(L^1)}.
\end{equation}
Since all the functions are in  $\ell^p(L^1)$, we omit ``adapted" in the rest of this paper.
\begin{lemma}(p.433, \cite{mkjfa1})
For any function $f\in\ell^p(L^1)$,  there exists a martingale structure  $\{E_j^m\subset \R_+:m\in \Z_+,  1\leq j\leq 2^m\}$ to $f$.
\end{lemma}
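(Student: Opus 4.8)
The plan is a recursive bisection (``stopping-time'') argument, driven by two elementary properties of the set function $\mu_f(A):=\|f\chi_A\|_{\ell^p(L^1)}^p$ defined for measurable $A\subset\R_+$. First, $\mu_f$ is \emph{superadditive}: if $A=A_1\sqcup A_2$ disjointly, then writing $a=\int_k^{k+1}|f|\chi_{A_1}$, $b=\int_k^{k+1}|f|\chi_{A_2}$ and using $(a+b)^p\ge a^p+b^p$ for $a,b\ge 0$ and $p\ge 1$, summing over $k$ gives $\mu_f(A)\ge\mu_f(A_1)+\mu_f(A_2)$; in particular $\mu_f$ is monotone under inclusion. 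Second, for each fixed $s\ge 0$ the distribution function $\phi_s(t):=\mu_f([s,t))$ is continuous and nondecreasing on $t\in[s,\infty]$ with $\phi_s(s)=0$ and $\phi_s(\infty)=\mu_f([s,\infty))$: on any bounded range of $t$ only finitely many unit blocks $[k,k+1]$ are affected, on each of which $\int_k^{k+1}|f|\chi_{[s,t)}$ is absolutely continuous in $t$ and uniformly bounded (each is at most $\|f\|_{\ell^p(L^1)}$, since its $p$-th power is one term of the defining sum), so $\phi_s$ inherits continuity from $x\mapsto x^p$.

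Set $F:=\|f\|_{\ell^p(L^1)}^p<\infty$; if $F=0$ any nested dyadic partition of $\R_+$ does the job, so assume $F>0$. I would then build nested partitions of $\R_+$ into half-open intervals $E_j^m=[s_{j-1}^m,s_j^m)$, with $0=s_0^m\le s_1^m\le\cdots\le s_{2^m}^m=\infty$, satisfying $\mu_f(E_j^m)\le 2^{-m}F$ together with the compatibility $s_j^m=s_{2j}^{m+1}$, starting from the auxiliary $E_1^0=\R_+$ (for which $\mu_f(E_1^0)=F=2^{-0}F$). Given the level-$m$ intervals, split each $E_j^m=[s_{j-1}^m,s_j^m)$ by choosing, via the intermediate value theorem applied to the continuous function $t\mapsto\mu_f([s_{j-1}^m,t))$, a finite point $c_j\in[s_{j-1}^m,s_j^m]$ with $\mu_f([s_{j-1}^m,c_j))=\tfrac12\mu_f(E_j^m)$ (in the degenerate case $\mu_f(E_j^m)=0$ simply take $c_j:=s_{j-1}^m+1$), and set $E_{2j-1}^{m+1}:=[s_{j-1}^m,c_j)$, $E_{2j}^{m+1}:=[c_j,s_j^m)$. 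Then $\mu_f(E_{2j-1}^{m+1})=\tfrac12\mu_f(E_j^m)\le 2^{-(m+1)}F$, while superadditivity gives $\mu_f(E_{2j}^{m+1})\le\mu_f(E_j^m)-\mu_f(E_{2j-1}^{m+1})=\tfrac12\mu_f(E_j^m)\le 2^{-(m+1)}F$, which closes the induction.

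To finish I would check the four axioms of a martingale structure for $\{E_j^m:m\ge 1,\ 1\le j\le 2^m\}$: at each level the $E_j^m$ are consecutive half-open intervals exhausting $\R_+$, hence cover $\R_+$ and are pairwise disjoint, and $i<j$ forces every point of $E_i^m$ to lie to the left of every point of $E_j^m$; the relation $E_j^m=E_{2j-1}^{m+1}\cup E_{2j}^{m+1}$ holds by construction (empty intervals, should some $c_j$ hit an endpoint, cause no trouble, as $\emptyset$ satisfies every axiom vacuously). Combined with $\|f\chi_j^m\|_{\ell^p(L^1)}^p=\mu_f(E_j^m)\le 2^{-m}F=2^{-m}\|f\|_{\ell^p(L^1)}^p$, this is exactly the adaptedness condition \eqref{Gaug2310}.

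The main obstacle, such as it is, lies entirely in the two structural facts of the first paragraph rather than in the construction: superadditivity is where the hypothesis $p\ge 1$ is used (and it genuinely fails for $p<1$, which is why the statement is restricted), and continuity of the distribution function is what licenses the \emph{exact} bisection at each step and hence the clean dyadic factor $2^{-m}$ with no loss. Once these are in hand the rest is routine bookkeeping.
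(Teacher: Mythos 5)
The paper itself offers no proof for this lemma, citing Christ--Kiselev (p.~433 of \cite{mkjfa1}); your stopping-time bisection is the standard way to prove it and the argument is correct. The two structural facts you isolate---superadditivity of $\mu_f(A)=\|f\chi_A\|^p_{\ell^p(L^1)}$ via $(a+b)^p\ge a^p+b^p$ for $p\ge1$, and continuity of the running total $t\mapsto\mu_f([s,t))$---are exactly what licenses the intermediate-value-theorem bisection at each stage, with superadditivity then bounding the right-hand child, and your verification of the martingale axioms and of \eqref{Gaug2310} is routine. The one small slip is in the degenerate case $\mu_f(E_j^m)=0$: the prescription $c_j:=s_{j-1}^m+1$ can land outside $[s_{j-1}^m,s_j^m]$ when $s_j^m$ is finite and smaller than $s_{j-1}^m+1$, so $E_{2j-1}^{m+1}$ would then fail to sit inside $E_j^m$; take instead $c_j:=\min(s_{j-1}^m+1,s_j^m)$, or simply $c_j:=s_j^m$ whenever that endpoint is finite. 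This is pure bookkeeping and does not affect the substance of the argument.
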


 Suppose $P $ is a  linear  or sublinear operator: for any function $f$ on $\R_+$, $P(f)(\lambda)$ is a function on a interval $J\subset \R$.
For $s>0$, denote by
\begin{eqnarray*}
   G_{P(f)(\lambda)}^{(s)} &=& ||\{P (f\chi_j^m)(\lambda)\}||_{\mathfrak{B}^s} \\
  &=& \sum_{m=1}^{\infty}m^s\left(\sum_{j=1}^{2^m} |P(f\chi_j^m)(\lambda)|\right)^{1/2}.
\end{eqnarray*}

\begin{remark}
 In the case that $P$ has an integral kernel $p(x,\lambda)$,  $G^{(s)}_{P(f)(\lambda)}=||p(x,\lambda)f(x)||_{\mathfrak{B}^s}$, that  is the norm of $\{\int_{E_j^m} p(x,\lambda)g(x)dx\}$ in $\mathfrak{B}^s$.

\end{remark}
The following statement is  from  \cite{mkjfa1}. We include a proof here for completeness.

\begin{theorem}\cite[Prop.3.3]{mkjfa1}\label{thmbound}
Given a function $f\in  \ell^p(L^1)$,  fix a  martingale structure to $f$.
 Suppose $P$ is a bounded linear  or sublinear operator
from $ \ell^p(L^1)$ to $L^q(J)$, where $1\leq p<2<q$  and $J\subset \R$ is a closed interval.
Then
\begin{equation*}
  ||  G^{(s)}_{P(f)(\lambda)}||_{L^{q}(J)} \leq C(p,q,s, ||P||) ||f||_{\ell^p(L^1)}.
\end{equation*}
\end{theorem}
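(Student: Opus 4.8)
The plan is to decompose the sum over $m$ in the definition of $G^{(s)}_{P(f)(\lambda)}$ according to how close $2^{-m}$ is to a \emph{critical scale} determined by the value of $\lambda$, exploiting the two competing estimates available for the inner $\ell^2$-sum: a trivial $\ell^2 \hookrightarrow \ell^1$-type bound that is efficient when there are few nonzero terms, and an $L^q$-integrated bound coming from the boundedness of $P$. First I would record the key pointwise inequality. Fix $m$. Since $P$ is (sub)linear and the pieces $f\chi_j^m$ have disjoint supports, the finite sequence $\{P(f\chi_j^m)(\lambda)\}_{j=1}^{2^m}$ satisfies, for each fixed $\lambda$,
\begin{equation*}
\left(\sum_{j=1}^{2^m} |P(f\chi_j^m)(\lambda)|^2\right)^{1/2} \leq 2^{m/2}\max_j |P(f\chi_j^m)(\lambda)|,
\end{equation*}
which will be the ``few terms'' side, while summing in $j$ and using disjointness of supports together with the adaptedness condition \eqref{Gaug2310} (so that $\|f\chi_j^m\|_{\ell^p(L^1)}^p \le 2^{-m}\|f\|_{\ell^p(L^1)}^p$) will control an $L^q(J)$-average of $\sum_j |P(f\chi_j^m)(\lambda)|^q$ by $\|P\|^q 2^{-mq/p}\|f\|_{\ell^p(L^1)}^q$. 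The tension between $2^{m/2}$ growth and $2^{-m/p}$ decay (note $1/p > 1/2$) is exactly what makes the series converge, and the weight $m^s$ is absorbed because the convergence is geometric.

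Next I would carry out the interpolation/summation. For a fixed $\lambda$, estimate $\big(\sum_j |P(f\chi_j^m)(\lambda)|^2\big)^{1/2}$ by interpolating between the trivial bound above and the $\ell^q$ bound $\big(\sum_j |P(f\chi_j^m)(\lambda)|^q\big)^{1/q}$: writing $a_j = |P(f\chi_j^m)(\lambda)|$, one has $\|a\|_{\ell^2} \le \|a\|_{\ell^q}^{\theta}\|a\|_{\ell^\infty}^{1-\theta}$ for the appropriate $\theta = \theta(q) \in (0,1)$, and also $\|a\|_{\ell^\infty} \le \|a\|_{\ell^q}$; more efficiently, split the index set $\{1,\dots,2^m\}$ at a threshold $\rho$, bounding the terms with $a_j \le \rho$ by $2^{m/2}\rho$ and the terms with $a_j > \rho$ by $\rho^{1-q/2}\big(\sum_j a_j^q\big)^{1/2}$, then optimize in $\rho$. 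This yields a bound of the form
\begin{equation*}
\left(\sum_{j=1}^{2^m}|P(f\chi_j^m)(\lambda)|^2\right)^{1/2} \leq C\, 2^{m(1/2)(1-2/q)} \left(\sum_{j=1}^{2^m}|P(f\chi_j^m)(\lambda)|^q\right)^{1/q},
\end{equation*}
with the gain being that the right-hand side, after multiplying by $m^s$, summing in $m$, raising to the $q$-th power and integrating over $J$, is dominated using Minkowski's inequality in $L^q(J)$ and the boundedness of $P$ by a convergent geometric series in $m$ times $\|f\|_{\ell^p(L^1)}^q$; the exponent works out because $\tfrac{1}{2}(1-\tfrac{2}{q}) - \tfrac{1}{p} < 0$ precisely when $p < 2 < q$. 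The polynomial weight $m^s$ is harmless against the geometric decay.

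The main obstacle I anticipate is organizing the interpolation cleanly at the level of the \emph{vector-valued} (i.e.\ $L^q(J)$-valued) norm rather than pointwise in $\lambda$: one must be careful that the optimization threshold $\rho$ may depend on $\lambda$, so the cleanest route is to do the $\rho$-splitting pointwise, obtain the displayed bound with the $\ell^q_j L^\infty$-free right-hand side, and only then apply Minkowski's integral inequality $\big\|(\sum_j |b_j|^q)^{1/q}\big\|_{L^q(J)} = \big(\sum_j \|b_j\|_{L^q(J)}^q\big)^{1/q}$ together with $\|P(f\chi_j^m)\|_{L^q(J)} \le \|P\|\,\|f\chi_j^m\|_{\ell^p(L^1)}$ and the adaptedness bound \eqref{Gaug2310}. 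A secondary technical point is handling the case $p=1$ (where $L^1+L^p$ and the exponents degenerate) and tracking that the constant depends only on $p,q,s,\|P\|$ and not on the martingale structure or $f$; both are routine once the geometric decay is isolated. I would also remark that this is exactly \cite[Prop.~3.3]{mkjfa1}, so the proof is included only for completeness and no new idea beyond the Christ–Kiselev interpolation scheme is required.
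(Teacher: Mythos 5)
Your proposal is correct and follows essentially the same route as the paper: the pointwise estimate $\bigl(\sum_j a_j^2\bigr)^{1/2}\le 2^{m(1/2-1/q)}\bigl(\sum_j a_j^q\bigr)^{1/q}$ that you obtain by threshold splitting is precisely the inequality the paper extracts from $\bigl(\sum a_i\bigr)^{\gamma}\le N^{\gamma-1}\sum a_i^{\gamma}$ with $\gamma=q/2$, and both arguments then feed in the operator bound and the adaptedness condition \eqref{Gaug2310} and sum the resulting geometric series in $m$. One small bookkeeping slip: summing $\|f\chi_j^m\|_{\ell^p(L^1)}^q\le 2^{-mq/p}\|f\|_{\ell^p(L^1)}^q$ over the $2^m$ indices $j$ contributes an extra factor $2^{m/q}$, so the decisive exponent is $\frac12-\frac1p$ rather than $\frac12(1-\frac2q)-\frac1p$; both are negative for $p<2$, so the conclusion is unaffected.
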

\begin{proof}
Let
\begin{equation*}
  t_m(\lambda)=  \left(\sum_{j=1}^{2^m}|P (f\chi_j^m)(\lambda)|\right)^{1/2}.
\end{equation*}
By the definition,
\begin{eqnarray}
  G^{(s)}_{P(f)(\lambda)} &=& \sum_{m=1}^{\infty}m^s\left(\sum_{j=1}^{2^m} |P (f\chi_j^m)(\lambda)|\right)^{1/2}\nonumber\\
   &=& \sum_{m=1}^{\infty}m^st_m(\lambda). \label{Gaug2311}
\end{eqnarray}
Let us give an inequality, for $\gamma\geq1$
\begin{equation}\label{Ginequa1}
  \left(\sum_{i=1}^N a_i\right)^{\gamma}\leq N^{\gamma-1}\sum_{i=1}^N |a_i|^{\gamma}.
\end{equation}
Direct computations imply
\begin{eqnarray}
\int_{J}t_m^q(\lambda)d\lambda&=&  \int_J \left(\sum_{j=1}^{2^m}|P(f\chi_j^m)(\lambda)|^2\right)^{q/2} d\lambda\nonumber  \\
   &\overset{\text { by }\eqref{Ginequa1}}{\leq}&
    2^{m(q/2-1)}\int_J \sum_{j=1}^{2^m}| P(f\chi_j^m)(\lambda)|^qd\lambda\nonumber  \\
    &\overset{\star}{\leq}&C
    2^{m(q/2-1)} \sum_{j=1}^{2^m}  ||f\chi_j^m||_{\ell^{p}(L^1)}^q\nonumber  \\
   &\overset{\text { by }\eqref{Gaug2310}}{\leq}& C  2^{m(q/2-1)} \sum_{j=1}^{2^m} 2^{-m\frac{q}{p}}||f||_{\ell^p(L^1)}^q\nonumber \\
    &\leq& C ||f||_{\ell^p(L^1)}^q2^{m\frac{q}{2}-m\frac{q}{p}},\label{Gaug2312}
\end{eqnarray}
where the $\star $ holds by the boundedness of $P$.

Finally, we have
\begin{eqnarray*}
 || G^{(s)}_{S(f)(\lambda)}||_{L^q(J)}&\overset{\text { by }\eqref{Gaug2311}}{=}&||\sum_{m=1}^{\infty}m^st_m(\lambda)||_{L^q(J)}\\
  &\leq& \sum_{m=1}^{\infty}m^s||t_m(\lambda)||_{L^q(J)}\\
 &\overset{\text { by }\eqref{Gaug2312}}{\leq}& C\sum_{m=1}^{\infty}m^s 2^{m/2-m/p}||f||_{\ell^p(L^1)}\\
  &\leq&C||f||_{\ell^p(L^1)}.
\end{eqnarray*}
\end{proof}
Denote by
\begin{equation}\label{Gaug221}
 B_n({g}_1,{g}_2,\cdots,{g}_n)(x)=\int_{x}^{\infty}\int_{t_1}^{\infty}\cdots\int_{t_{n-1}}^{\infty}\prod_{j=1}^n g_j(t_j)dt_1dt_2\cdots dt_n.
\end{equation}

If there is a single function $g$ such that $g_k= \{g,\bar{g}\}$, $k=1,2,\cdots,n$, we write it down by $B_n(g)(x)$.
\begin{theorem}\label{prop41}
Assume $g_j$ is locally integrable, $j=1,2,\cdots,n$.
Suppose for $j=1,2,\cdots, n$,
\begin{equation}\label{Gaug273}
  \limsup_{M\to \infty}||g_j\chi_{[M,\infty)}||_{\mathfrak{B}}=0,
\end{equation}
and there is a constant $C$ (does not depend on $I$)  such that for any closed interval  $I\subset \R_{+}$,
\begin{equation}\label{Gaug254}
||g_j\chi_{I}||_{\mathfrak{B}} \leq C.
\end{equation}
Then
 \eqref{Gaug221}  is well defined as the  limits
\begin{equation}\label{Gaug232}
  B_n(g_1,g_2,\cdots,g_n)(x)=\lim_{y_1,\cdots,y_n\to \infty}\int_{x}^{y_1}\int_{t_1}^{y_2}\cdots\int_{t_{n-1}}^{y_n}\prod_{j=1}^n g_j(t_j)dt_1dt_2\cdots dt_n,
\end{equation}
and
\begin{equation}\label{Gaug256}
 \lim_{x\to\infty}  B_n(g_1,g_2,\cdots,g_n)(x)=0.
\end{equation}
Moreover, for almost every $x$
\begin{equation}\label{Gaug256new}
  \frac{dB_n(g_1,g_2,\cdots,g_n)(x)}{dx}= -g_1(x)B_{n-1}(g_2,\cdots,g_n)(x).
\end{equation}
\end{theorem}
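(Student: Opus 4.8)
The plan is to argue by induction on $n$, with $B_0\equiv 1$. The two structural facts I will lean on are Theorem~\ref{thmmul}, which gives uniform control of the truncated operators $M_m$, and the ``splitting at the last crossing'' identity, a consequence of Fubini on the simplex: for $x\le y_1\le y_2$,
\[
M_n(g_1,\cdots,g_n)(x,y_2)-M_n(g_1,\cdots,g_n)(x,y_1)=\sum_{\ell=1}^{n}M_{\ell-1}(g_1,\cdots,g_{\ell-1})(x,y_1)\,M_{n-\ell+1}(g_\ell,\cdots,g_n)(y_1,y_2),
\]
together with the iteration $M_m(g_1,\cdots,g_m)(x,x')=\int_x^{x'}g_1(t_1)\,M_{m-1}(g_2,\cdots,g_m)(t_1,x')\,dt_1$. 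A preliminary observation, used throughout: since the integration region of $M_m(g_1,\cdots,g_m)(a,b)$ lies in $[a,b]$, one has $M_m(g_1,\cdots,g_m)(a,b)=M_m(g_1\chi_{[a,b]},\cdots,g_m\chi_{[a,b]})(a,b)$, so Theorem~\ref{thmmul} together with \eqref{Gaug254} gives $|M_m(g_1,\cdots,g_m)(a,b)|\le C^m\prod_i\|g_i\chi_{[a,b]}\|_{\mathfrak{B}}\le C^{2m}$; more importantly, the same device gives $|M_m(g_{j_1},\cdots,g_{j_m})(y_1,y_2)|\le C^m\prod_k\|g_{j_k}\chi_{[M,\infty)}\|_{\mathfrak{B}}$ whenever $[y_1,y_2]\subset[M,\infty)$ (and $\|g\chi_{[M,\infty)}\|_{\mathfrak{B}}<\infty$ for large $M$ by \eqref{Gaug273}).

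The base case $n=1$ is immediate: for $y_1\le y_2$ in $[M,\infty)$ we get $|M_1(g_1)(y_1,y_2)|\le C\|g_1\chi_{[M,\infty)}\|_{\mathfrak{B}}\to 0$ as $M\to\infty$, so $\int_x^y g_1$ is Cauchy as $y\to\infty$; then \eqref{Gaug232} is trivial, $|B_1(g_1)(x)|\le C\|g_1\chi_{[x,\infty)}\|_{\mathfrak{B}}\to 0$ gives \eqref{Gaug256}, and the Lebesgue differentiation theorem gives \eqref{Gaug256new}. For the inductive step, feeding the preliminary bounds into the splitting identity (with $M\le y_1$) shows every summand of $M_n(g_1,\cdots,g_n)(x,y_2)-M_n(g_1,\cdots,g_n)(x,y_1)$ carries at least one factor $\|g_i\chi_{[M,\infty)}\|_{\mathfrak{B}}$; hence by \eqref{Gaug273} the difference tends to $0$ as $y_1\to\infty$, uniformly in $y_2\ge y_1$ and in $x$. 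Thus $B_n(g_1,\cdots,g_n)(x):=\lim_{x'\to\infty}M_n(g_1,\cdots,g_n)(x,x')$ exists; taking $y_1=x$ in the same estimate yields $|B_n(g_1,\cdots,g_n)(x)|\le C^n\prod_i\|g_i\chi_{[x,\infty)}\|_{\mathfrak{B}}\to 0$, which is \eqref{Gaug256}.

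To identify the iterated limit \eqref{Gaug232} with this diagonal limit, write the iterated integral as the integral of $\prod_j g_j(t_j)$ over $\{x\le t_1\le\cdots\le t_n\}\cap\{t_j\le y_j\ \forall j\}$, set $Y=\min_j y_j$, and note this region contains $\{x\le t_1\le\cdots\le t_n\le Y\}$; the leftover region (where $t_n>Y$) is again handled by the last-crossing decomposition, each piece being bounded by a product of a bounded $M^{*}$-term and $\prod_{k=\ell}^{n}\|g_k\chi_{(Y,y_k]}\|_{\mathfrak{B}}$, which tends to $0$ because $\|g\chi_{(Y,y]}\|_{\mathfrak{B}}\le\|g\chi_{(Y,\infty)}\|_{\mathfrak{B}}+\|g\chi_{(y,\infty)}\|_{\mathfrak{B}}\to0$ as $Y,y\to\infty$. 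Hence the iterated integral converges to $\lim_{Y\to\infty}M_n(g_1,\cdots,g_n)(x,Y)=B_n(g_1,\cdots,g_n)(x)$, which is \eqref{Gaug232}. Finally, for \eqref{Gaug256new}, decompose (at the last crossing of $x'$)
\[
B_{n-1}(g_2,\cdots,g_n)(t_1)=M_{n-1}(g_2,\cdots,g_n)(t_1,x')+\sum_{\ell=1}^{n-1}M_{\ell-1}(g_2,\cdots,g_\ell)(t_1,x')\,B_{n-\ell}(g_{\ell+1},\cdots,g_n)(x'),
\]
multiply by $g_1(t_1)$, integrate over $[x,x']$, and use the iteration formula to obtain
\[
\int_x^{x'}g_1(t_1)B_{n-1}(g_2,\cdots,g_n)(t_1)\,dt_1=M_n(g_1,\cdots,g_n)(x,x')+\sum_{\ell=1}^{n-1}B_{n-\ell}(g_{\ell+1},\cdots,g_n)(x')\,M_\ell(g_1,\cdots,g_\ell)(x,x').
\]
Letting $x'\to\infty$, the first term tends to $B_n(g_1,\cdots,g_n)(x)$, while every term of the sum tends to $0$ (a bounded $M_\ell$ times $B_{n-\ell}(g_{\ell+1},\cdots,g_n)(x')\to0$, the latter by the inductive hypothesis applied to the sub-tuple, which still satisfies \eqref{Gaug273}--\eqref{Gaug254}). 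Therefore $B_n(g_1,\cdots,g_n)(x)=\int_x^\infty g_1(t_1)B_{n-1}(g_2,\cdots,g_n)(t_1)\,dt_1$; since $B_{n-1}(g_2,\cdots,g_n)$ is bounded and measurable and $g_1$ is locally integrable, $x\mapsto B_n(g_1,\cdots,g_n)(x)$ is locally absolutely continuous, and the Lebesgue differentiation theorem gives \eqref{Gaug256new} a.e.

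The step I expect to be the main obstacle is the careful handling of the tail norms $\|g_j\chi_{[M,\infty)}\|_{\mathfrak{B}}$: one must check they are finite (to apply Theorem~\ref{thmmul} legitimately to the tails), and that they --- rather than the merely bounded interval pieces $\|g_j\chi_{[y_1,y_2]}\|_{\mathfrak{B}}$ --- drive the estimates, which is exactly the place where \eqref{Gaug273}, and not just \eqref{Gaug254}, is needed; one must also ensure that the multi-parameter limit in \eqref{Gaug232}, where the upper endpoints go to infinity independently, is genuinely captured by the single diagonal limit. Both points are resolved via the triangle inequality $\|g\chi_{(a,b]}\|_{\mathfrak{B}}\le\|g\chi_{(a,\infty)}\|_{\mathfrak{B}}+\|g\chi_{(b,\infty)}\|_{\mathfrak{B}}$ together with \eqref{Gaug273}; the remaining ingredients --- Fubini on simplices and the Lebesgue differentiation theorem --- are routine.
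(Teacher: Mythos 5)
Your proof is correct, and it reaches all three conclusions by a somewhat more structured route than the paper. The paper simply declares ``without loss of generality, assume $n=2$'' and then proves \eqref{Gaug232} by telescoping the truncation parameters one at a time: writing the difference $B_2(g_1\chi_{[0,y_1]},g_2\chi_{[0,y_2]})-B_2(g_1\chi_{[0,z_1]},g_2\chi_{[0,z_2]})$ as a sum of two terms, each with a factor $g_j\chi_{[y_j,z_j]}$, and then invoking Theorem~\ref{thmmul} plus \eqref{Gaug254} and \eqref{Gaug273}. The decay \eqref{Gaug256} follows from the same bound applied on $[x,\infty)$, and \eqref{Gaug256new} is obtained by computing the one-sided difference quotients directly, observing $B_2(y)-B_2(x)=\int_y^x g_1(t_1)B_1(g_2)(t_1)\,dt_1$ and applying Lebesgue differentiation. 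You instead run a genuine induction on $n$ using the ``last-crossing'' (Fubini-on-the-simplex) identity as your organizing tool, and you derive the recurrence $B_n(x)=\int_x^\infty g_1(t_1)B_{n-1}(g_2,\dots,g_n)(t_1)\,dt_1$ explicitly before differentiating. The two proofs carry the same mathematical content --- Theorem~\ref{thmmul}, the hypotheses \eqref{Gaug273}--\eqref{Gaug254}, Lebesgue differentiation --- but yours makes the general-$n$ structure completely explicit (the paper leaves the general case to the reader), and your recurrence form makes \eqref{Gaug256new} transparent for all $n$. The one place your write-up is slightly loose is the preliminary bound $|M_m(g_{j_1},\dots,g_{j_m})(y_1,y_2)|\le C^m\prod_k\|g_{j_k}\chi_{[M,\infty)}\|_{\mathfrak B}$: the $\mathfrak B$-norm is not monotone under restriction, so one cannot pass directly from $\|g\chi_{[y_1,y_2]}\|_{\mathfrak B}$ to $\|g\chi_{[M,\infty)}\|_{\mathfrak B}$; you need the triangle-inequality device you record in your final paragraph (dominate by $\sup_{K\ge M}\|g\chi_{[K,\infty)}\|_{\mathfrak B}$ up to a factor $2$). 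The paper's own inequality at the corresponding step has exactly the same minor imprecision, so this is not a gap so much as a point worth stating cleanly.
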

\begin{proof}
Without loss of generality, assume $n=2$.
In order to prove the existence of the limits,
it suffices to show  that
\begin{equation}\label{Gaug235}
  \lim_{y_1,z_1,y_2,z_2\to \infty}\sup_{x}|B_2(g_1\chi_{[0,y_1]},g_2\chi_{[0,y_2]})(x)-B_2(g_1\chi_{[0,z_1]},g_2\chi_{[0,z_2]})(x)|=0.
\end{equation}
Assume $M<y_1<z_1$ and $M<y_2<z_2$.
By telescoping techniques,
\begin{equation*}
 |B_2(g_1\chi_{[0,y_1]},g_2\chi_{[0,y_2]})(x)-B_2(g_1\chi_{[0,z_1]},g_2\chi_{[0,z_2]})(x)|\;\;\;\;\;\;\;\;\;\;\;\;\;\;\;\;\;\;\;\;\;\;\;\;\;\;\;\;\;\;\;\;\;\;\;\;\;\;\;\;
\end{equation*}
\begin{eqnarray}
   &\leq&  |B_2(g_1\chi_{[0,y_1]},g_2\chi_{[y_2,z_2]})(x)|+|B_2(g_1\chi_{[y_1,z_1]},g_2\chi_{[0,z_2]})(x)|\nonumber\\
   &\overset{\text { by }\eqref{Gaug233} \text{ and } \eqref{Gaug272}}{\leq} &  C ||g_1\chi_{[0,y_1]}||_{\mathfrak{B}} ||g_2\chi_{[y_2,z_2]}||_{\mathfrak{B}}+C||g_1\chi_{[y_1,z_1]}||_{\mathfrak{B}}||g_2\chi_{[0,z_2]}||_{\mathfrak{B}}\nonumber\\
     &\overset{\text { by }\eqref{Gaug254}}{\leq}& C  ||g_2\chi_{[y_2,z_2]}||_{\mathfrak{B}}+C||g_1\chi_{[y_1,z_1]}||_{\mathfrak{B}}\nonumber\\
      &\leq& C  ||g_2\chi_{[M,\infty)}||_{\mathfrak{B}}+C||g_1\chi_{[M,\infty)}||_{\mathfrak{B}}.\label{Gaug251}
\end{eqnarray}
Now  \eqref{Gaug235} follows from \eqref{Gaug273}.

By \eqref{Gaug232}, one has
\begin{eqnarray*}
 \lim_{x\to \infty} |B_2(g_1,g_2)(x) |&=&\lim_{x\to\infty} \lim_{x'\to\infty}\left|\int_{x}^{x'}\int_{t_1}^{x'}\prod_{j=1}^2 g_j(t_j)dt_1dt_2\right|  \\
   &=& \lim_{x\to\infty} \lim_{x'\to\infty}
   \left|\int_{x}^{x'}\int_{t_1}^{x'}\prod_{j=1}^2 g_j(t_j)\chi_{[x,\infty)}(t_j)dt_1dt_2\right|\\
    &\overset{\text { by }\eqref{Gaug233}}{\leq}&C\lim_{x\to\infty}||g_1\chi_{[x,\infty)}||_{\mathfrak{B}} ||g_2\chi_{[x,\infty)}||_{\mathfrak{B}}\\
    &\overset{\text { by }\eqref{Gaug273}}{=}&0.
\end{eqnarray*}
This completes the proof of \eqref{Gaug256}.
 Direct computations imply
\begin{eqnarray*}
         \lim_{y\to x-}   \frac{ B_2(g_1,g_2)(y)-B_2(g_1,g_2)(x)}{y-x} &=& \lim_{y\to x-}\frac{1}{y-x}\int_{y}^{x} g_1(t_1)dt_1\int_{ t_1}^{\infty}g_2(t_2)dt_2\\
                &=&- g_1(x)\int_{ x}^{\infty}g_2(t_2)dt_2.
             \end{eqnarray*}

Similarly,$$\lim_{y\to x+}   \frac{ B_2(g_1,g_2)(y)-B_2(g_1,g_2)(x)}{y-x}=-g_1(x)\int_{ x}^{\infty}g_2(t_2)dt_2.$$
The last two equalities imply \eqref{Gaug256new}.
\end{proof}
\begin{remark}\label{raug281}
In \cite[Prop.4.1]{mkjfa1},  a weaker    limit $ \tilde{B}_n(g_1,g_2,\cdots,g_n)(x)$ was proved to be existed, where
\begin{equation*}
  \tilde{B}_n(g_1,g_2,\cdots,g_n)(x)=\lim_{y\to \infty}\int_{x}^{y}\int_{t_1}^{y}\cdots\int_{t_{n-1}}^{y}\prod_{j=1}^n g_j(t_j)dt_1dt_2\cdots dt_n.
\end{equation*}
\end{remark}
Let $p(x,\lambda)$ be a measurable function on $\R_+\times J$. Define the integral  operator $ P$:
\begin{equation*}
  P(f)(\lambda)=\int_{\R_+}p(x,\lambda)f(x)dx,
\end{equation*}
and the maximal operator  $P^{\ast}$:
\begin{equation*}
  P^{\ast}(f)(\lambda)=\sup_{y\in\R_+}\left|\int_{y}^{\infty}p(x,\lambda)f(x)dx\right|.
\end{equation*}

\begin{lemma}\cite[Christ-Kiselev lemma]{mkjfa2}\label{CKlemma}
Let $1\leq p<q<\infty$.
Suppose $P$ is a bounded operator  from $\ell^p(L^1)$ to $ L^q(J)$.
Then $P^{\ast}$ is also a bounded operator  from $\ell^p(L^1)$ to $ L^q(J)$.
\end{lemma}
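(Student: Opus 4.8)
The plan is to run the classical Christ--Kiselev argument in the martingale-structure language of Section~\ref{CK}: decompose the left-truncated integral defining $P^{\ast}$ into a superposition of honest applications of $P$ to pieces of $f$ supported on dyadic intervals adapted to $f$, and then exploit the strict gap $p<q$ to sum a geometric series.

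First I would fix a martingale structure $\{E_j^m\}$ adapted to $f$ --- it exists by the lemma on p.~433 of \cite{mkjfa1} --- so that $\|f\chi_j^m\|_{\ell^p(L^1)}\le 2^{-m/p}\|f\|_{\ell^p(L^1)}$ for all $m,j$ by \eqref{Gaug2310}. The heart of the matter is a purely combinatorial fact about the tree: for every $y\ge 0$ the ray $(y,\infty)$ can be written as a disjoint union
\begin{equation*}
 (y,\infty)=\bigsqcup_{m\ge 0}F_m(y),
\end{equation*}
up to a set on which $f$ vanishes almost everywhere, where $F_0(y)$ is either $\R_+$ (only when $y=0$) or empty, and for $m\ge 1$ each $F_m(y)$ is either empty or one of the level-$m$ intervals $E_j^m$. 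To produce this, let $A_m(y)$ be the union of those $E_j^m$ lying entirely to the right of $y$; the halving property $E_j^{m-1}=E_{2j-1}^m\cup E_{2j}^m$ gives $A_{m-1}(y)\subseteq A_m(y)$ and forces $A_m(y)\setminus A_{m-1}(y)$ to be empty or a single level-$m$ interval, while the complement of $\bigcup_m A_m(y)$ inside $(y,\infty)$ is contained in the nested intervals $E_{k_m}^m\ni y$, whose $\ell^p(L^1)$-norm is at most $2^{-m/p}\|f\|_{\ell^p(L^1)}\to 0$, so $f=0$ a.e.\ there. Granting the decomposition, additivity of the integral and linearity of $P$ yield, for a.e.\ $\lambda\in J$,
\begin{equation*}
 P^{\ast}(f)(\lambda)=\sup_{y\ge 0}\Big|\sum_{m\ge 0}P(f\chi_{F_m(y)})(\lambda)\Big|\le\sum_{m\ge 0}\max_{j}|P(f\chi_j^m)(\lambda)|,
\end{equation*}
the right side being independent of $y$ (here $\max_j$ at $m=0$ simply means $|P(f)(\lambda)|$).

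The remaining estimate is mechanical. Taking the $L^q(J)$ norm, applying Minkowski's inequality in $m$, using $\max_j|a_j|\le(\sum_j|a_j|^q)^{1/q}$, and then invoking boundedness of $P$ together with the adaptedness bound, the $m$-th summand is controlled by
\begin{equation*}
 \Big(\sum_{j=1}^{2^m}\|P(f\chi_j^m)\|_{L^q(J)}^q\Big)^{1/q}\le\|P\|\Big(\sum_{j=1}^{2^m}\|f\chi_j^m\|_{\ell^p(L^1)}^q\Big)^{1/q}\le\|P\|\,2^{m(1/q-1/p)}\|f\|_{\ell^p(L^1)},
\end{equation*}
and since $1/q-1/p<0$ the sum over $m\ge 0$ is a convergent geometric series, giving $\|P^{\ast}f\|_{L^q(J)}\le C(p,q,\|P\|)\|f\|_{\ell^p(L^1)}$. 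The main obstacle is the ray decomposition with at most one interval per level: this is precisely the mechanism that turns the $\ell^\infty$-maximum over the $2^m$ level-$m$ intervals (which costs a factor $2^{m/q}$ in $L^q$) into something summable, since adaptedness simultaneously buys the gain $2^{-m/p}$, and the two balance exactly when $p<q$. The only loose ends are justifying the interchange of $\sum_m$ with the integral defining $P$ (which follows a posteriori from the $L^q$-convergence just established) and disposing of the $f$-null remainder in the decomposition --- both routine.
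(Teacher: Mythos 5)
The paper does not prove Lemma~\ref{CKlemma}; it is cited from \cite{mkjfa2} as one of the ``well known facts'' the exposition relies on without reproducing the argument. Your proof is a correct rendering of the classical Christ--Kiselev argument in exactly the martingale-structure language of this paper: decompose the ray $(y,\infty)$ into at most one adapted interval per dyadic generation (the residual $\bigcap_m E_{k_m}^m$ carrying no mass of $f$ by \eqref{Gaug2310}), dominate the resulting $\ell^\infty$-maximum over the $2^m$ level-$m$ intervals by an $\ell^q$ sum, and close with the geometric series made summable by $1/q-1/p<0$; the loose ends you flag are indeed routine for a kernel operator once the $L^q$ bound is in hand. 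This is essentially the proof in \cite{mkjfa2}.
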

In our situation (see next section), $s(x,\lambda)=w(x,\lambda)e^{-ih(x,\lambda)}$, where $h$ is a real-valued function.
We obtain two corresponding  operators
\begin{equation}\label{Gdefs}
   S(f)(\lambda)=\int_{\R_+}w(x,\lambda)e^{-ih(x,\lambda)}f(x)dx,
\end{equation}
and
\begin{equation}\label{Gdefs1}
  S^{\ast}(f)(\lambda)=\sup_{y\in\R_+}\left|\int_{y}^{\infty}w(x,\lambda)e^{-ih(x,\lambda)}f(x)dx\right|.
\end{equation}

\begin{lemma}\label{LeSbound}
Assume $1\leq p\leq 2$.
Suppose there exist a constant $C$  and a  open interval $\tilde{J}$  satisfying $J\subset \tilde{J}$ such that    for $\lambda\in \tilde{J}$
\begin{equation}\label{Gaug131}
|\partial_{\lambda} [ h(x,\lambda)-h(y,\lambda)]|\geq \frac{|x-y|}{C}
\end{equation}
and  for $i=1,2,3$ and $\lambda\in \tilde{J}$
\begin{equation}\label{Gaug132}
|\partial_{\lambda}^{i} [ h(x,\lambda)-h(y,\lambda)]|\leq  C|x-y|
\end{equation}
provided $|x-y|\geq C$.
Suppose
\begin{equation*}
\sup_{x\in \R_+,\lambda\in \tilde{J}}  \sum_{i=1}^2|\partial_{\lambda}^iw(x,\lambda)|\leq C.
\end{equation*}
Let $p^{\prime}=\frac{p}{p-1}$ be the conjugate exponent to $p$. Then
\begin{equation*}
    ||Sf||_{L^{p^{\prime}}(J,d\lambda)}\leq O(1)||f||_{\ell^p(L^1)},
\end{equation*}
and
\begin{equation*}
    ||S^{\ast}f||_{L^{p^{\prime}}(J,d\lambda)}\leq O(1)||f||_{\ell^p(L^1)},
\end{equation*}
where $O(1)$ depends on $C$, $J$, $ \tilde{J}$ and $p$.
\end{lemma}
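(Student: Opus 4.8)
The plan is to reduce the bound on $S$ (and $S^*$) to an application of Theorem~\ref{thmbound} and the Christ--Kiselev lemma (Lemma~\ref{CKlemma}), so the real work is a stationary-phase / van der Corput estimate showing that $S$ is bounded from $\ell^p(L^1)$ to $L^{p'}(J)$. Once that boundedness is in hand, Lemma~\ref{CKlemma} (with the roles $p<q=p'$, valid since $p\le 2$ forces $p\le p'$, and after handling the endpoint $p=p'=2$ separately or by a limiting argument) upgrades it to the maximal operator $S^*$, and Theorem~\ref{thmbound} packages the ``martingale-refined'' version $G^{(s)}_{S(f)}$ that the later WKB arguments actually need. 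So I will concentrate on: \emph{why is $S:\ell^p(L^1)\to L^{p'}(J)$ bounded?}

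First I would reduce to a $TT^*$ / duality computation. By duality it suffices to estimate $\|S\|$ by testing against $\ell^p(L^1)$; concretely, since $\|f\|_{\ell^p(L^1)}$ controls $\sum_k \|f\chi_{[k,k+1]}\|_{L^1}$ in $\ell^p$, I would split $f=\sum_k f_k$ with $f_k=f\chi_{[k,k+1]}$ supported on a unit interval, write $Sf=\sum_k Sf_k$, and try to prove the single-block bound $\|Sf_k\|_{L^{p'}(J)}\le O(1)\|f_k\|_{L^1}$ together with almost-orthogonality $\|\sum_k Sf_k\|_{L^{p'}}^{p'}\lesssim \sum_k\|Sf_k\|_{L^{p'}}^{p'}$, which by $\ell^p\hookrightarrow\ell^{p'}$ and the single-block bound gives the claim. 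The single-block bound is immediate: $|Sf_k(\lambda)|\le \|w\|_\infty \|f_k\|_{L^1}$ and $J$ has finite length, so $\|Sf_k\|_{L^{p'}(J)}\lesssim \|f_k\|_{L^1}$. The substance is the almost-orthogonality, i.e. decay of $\langle Sf_k, Sf_j\rangle$ in $|k-j|$, which is where the phase hypotheses \eqref{Gaug131}--\eqref{Gaug132} enter.

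The key step, then, is the oscillatory estimate: the kernel of $SS^*$ is $K(x,y)=\int_J w(x,\lambda)\overline{w(y,\lambda)}\,e^{-i[h(x,\lambda)-h(y,\lambda)]}\,d\lambda$, and I claim $|K(x,y)|\le O(1)\min\{1,|x-y|^{-1}\}$ for $|x-y|\ge C$ (and trivially $|K(x,y)|\le O(1)$ always, since $|J|<\infty$). Indeed, set $\Phi(\lambda)=h(x,\lambda)-h(y,\lambda)$ and $a(\lambda)=w(x,\lambda)\overline{w(y,\lambda)}$; by \eqref{Gaug131} we have $|\Phi'(\lambda)|\ge |x-y|/C$ on $\tilde J$, while \eqref{Gaug132} for $i=2,3$ and the hypothesis on $\partial_\lambda^i w$ ($i=1,2$) give $|\Phi''|,|\Phi'''|\le C|x-y|$ and $|a|,|a'|,|a''|\le O(1)$. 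Then one integration by parts in $\lambda$ (the non-stationary phase lemma: $\int a e^{-i\Phi}=\int \frac{i}{\Phi'}(a e^{-i\Phi})'\cdot\frac{1}{i}$, i.e. $\int a e^{-i\Phi}d\lambda = -\int \frac{d}{d\lambda}\!\big(\tfrac{a}{\Phi'}\big) e^{-i\Phi}d\lambda + \text{boundary}$, using that $\tilde J\supsetneq J$ lets me push the boundary terms to where I can estimate them, or apply the version on $\tilde J$ with a cutoff) produces a gain of $1/|\Phi'|\lesssim |x-y|^{-1}$: the new amplitude $\big|\big(\tfrac{a}{\Phi'}\big)'\big| \le \tfrac{|a'|}{|\Phi'|} + \tfrac{|a||\Phi''|}{|\Phi'|^2} \lesssim \tfrac{1}{|x-y|} + \tfrac{|x-y|}{|x-y|^2} \lesssim \tfrac{1}{|x-y|}$, and integrating over $J$ of finite length gives $|K(x,y)|\lesssim |x-y|^{-1}$. (A second integration by parts, if one wants an $L^2$-on-the-line Schur bound rather than the $\ell^p(L^1)$ version, needs the third-derivative control \eqref{Gaug132} with $i=3$ and $\partial_\lambda^2 w$ — this is precisely why those hypotheses appear, and why, as the introduction emphasizes, \emph{lower} bounds on $\Phi'',\Phi'''$ are not needed.) From $|K(x,y)|\lesssim \min\{1,|x-y|^{-1}\}$ restricted to the block structure, $|\langle Sf_k,Sf_j\rangle| = |\langle f_k, S^*Sf_j\rangle|$ wait — more directly $\langle Sf_k, Sf_j\rangle = \int\!\!\int f_j(y)\overline{f_k(x)} K(x,y)\,dx\,dy$, so $|\langle Sf_k,Sf_j\rangle|\lesssim \min\{1,|k-j|^{-1}\}\|f_k\|_{L^1}\|f_j\|_{L^1}$; this near-diagonal $\ell^1$-type decay is summable enough to run a Cotlar--Stein / Schur argument on the blocks and conclude $\|Sf\|_{L^{p'}}^2 \lesssim \|Sf\|_{L^2}^2 \lesssim \sum_{k,j}\min\{1,|k-j|^{-1}\}\|f_k\|_{L^1}\|f_j\|_{L^1}$, which at $p=2$ already gives the endpoint (up to the logarithmic loss, absorbed by the $\mathfrak{B}$-norm machinery), and for $p<2$ the stronger $\ell^p$ summability makes it comfortable.

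The main obstacle I expect is twofold. The honest difficulty is that the clean single integration by parts above only yields $|x-y|^{-1}$ decay, which at the critical exponent $p=2$ is borderline (the kernel $\min\{1,|x-y|^{-1}\}$ is not in $L^1$ of the line, so a naive Schur test fails and one must genuinely use the block/martingale structure and the $\ell^p(L^1)$ rather than $L^2$ setting — this is where Theorem~\ref{thmbound}'s use of $\mathfrak B^s$ norms and the $2^{m(q/2-1)}$ gain is essential, and it is the reason the statement restricts to $p\le 2$ and why the introduction's remark about Muscalu--Tao--Thiele is relevant). The second, more bookkeeping-type, obstacle is the interplay of the two intervals $J\subset\tilde J$: the hypotheses \eqref{Gaug131}--\eqref{Gaug132} hold on the larger $\tilde J$, and one needs $\tilde J$ strictly larger so that when integrating by parts on $J$ the boundary contributions from $\partial J$ can be controlled (e.g. by a smooth cutoff supported in $\tilde J$ equal to $1$ on $J$, whose derivatives are then $O(1)$ constants depending on $\mathrm{dist}(J,\partial\tilde J)$, hence on $J,\tilde J$ as allowed), and also so that the phase is non-degenerate in a neighborhood of $J$ uniformly. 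Handling the restriction $|x-y|\ge C$ (below which the hypotheses say nothing) is not an obstacle: there the trivial bound $|K(x,y)|\le O(1)$ and the fact that only $O(C)$ neighboring blocks are involved suffices.
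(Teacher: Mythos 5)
Your overall scheme (reduce to $S$ via Christ--Kiselev, reduce to $p=2$, expand $\|Sf\|_{L^2}^2$ as a double integral against an oscillatory kernel, estimate the kernel by non-stationary phase, then sum over unit blocks) is the right one and matches the paper's in spirit. But there is a genuine gap at the heart of it: you stop after \emph{one} integration by parts and claim $|K(x,y)|\lesssim |x-y|^{-1}$, which is \emph{not enough}. The kernel $\min\{1,|x-y|^{-1}\}$ is not Schur-summable on the line (nor on unit blocks: $\sum_{j}\min\{1,|k-j|^{-1}\}$ diverges logarithmically), and your proposed fix --- absorbing the logarithmic loss ``by the $\mathfrak{B}$-norm machinery'' --- is circular. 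Theorem~\ref{thmbound} takes the boundedness of $P=S$ from $\ell^p(L^1)$ to $L^q(J)$ as its \emph{input hypothesis} (the step marked $\star$ in its proof); you cannot invoke it to prove Lemma~\ref{LeSbound}, which is exactly that input. The lemma must be proved clean, with no logarithmic slack.

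The paper instead integrates by parts \emph{twice}, obtaining $|K(x,y)|\lesssim |x-y|^{-2}$ for $|x-y|\ge C$, and this is precisely what the third-derivative hypothesis on $h$ (\eqref{Gaug132} with $i=3$) and the second-derivative hypothesis on $w$ are there for. You correctly diagnosed that those hypotheses signal a second integration by parts, and even wrote that it would be needed ``for an $L^2$-on-the-line Schur bound'' --- but that is exactly what is being proved here, so you should have done it rather than deferring it. With $|x-y|^{-2}$ decay, the bound $\int\!\!\int \frac{|f(x)f(y)|}{1+|x-y|^2}\,dx\,dy = O(1)\|f\|_{\ell^2(L^1)}^2$ follows from Young's convolution inequality on the block sums $f_k=\int_{k-1}^k|f|$ (the kernel $(1+|k-j|^2)^{-1}$ is in $\ell^1$), with no Cotlar--Stein, no logarithmic loss, and no appeal to martingale norms. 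Two further small remarks: the paper handles $1\le p<2$ by interpolating between the trivial $p=1$ case and the $p=2$ case just described, which is cleaner than your attempted direct $\ell^p$ argument; and the role of the cutoff $\xi$ with $\xi\equiv 1$ on $J$ and $\operatorname{supp}\xi\subset\tilde J$ is exactly what you anticipated --- it makes both integrations by parts boundary-free, with derivatives of $\xi$ contributing only $O(1)$ factors depending on $J,\tilde J$.
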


\begin{proof}
By Lemma \ref{CKlemma}, we only need to prove the boundedness of $S$.
By interpolation theorem, it suffices to prove the cases $p=1$ and $p=2$. The case of $p=1$ is trivial since $h$ is a real-valued function.
So we only need to consider the case $p=2$.
Let $\xi(\lambda)$ be a positive function so that $\xi\equiv1 $ on $J$ and $ {\rm supp} \xi \subset \tilde{J}$.
Then one has
\begin{eqnarray}
  ||Sf||^2_{L^{2}(J,d\lambda)}&=& \int_J\left |\int_{\R_+} w(x,\lambda)e^{-ih(x,\lambda)}f(x)dx\right|^2d \lambda\nonumber\\
 &\leq& \int_{\tilde{J}} \left|\int_{\R_+}w(x,\lambda) e^{-ih(x,\lambda)}f(x)dx\right|^2\xi(\lambda)d \lambda\nonumber\\
 &=& \int_ {\tilde{J} }\left[\int_{\R_+} w(x,\lambda)e^{-ih(x,\lambda)}f(x)dx\right]\left[    \int_{\R_+} \bar{w}(y,\lambda)e^{ih(y,\lambda)}\bar{f}(y)dy\right]\xi(\lambda)d \lambda \nonumber\\
 &=&\int_{\R^2_+}f(x)\bar{f}(y)dxdy \int_{ \tilde{J}} e^{-ih(x,\lambda)+ih(y,\lambda)}w(x,\lambda)\bar{w}(y,\lambda)\xi(\lambda)d \lambda.\label{Gaug193}
\end{eqnarray}
Multiplying $-i\partial_{\lambda}(h(x,\lambda)-h(y,\lambda)) $ and dividing $-i\partial_{\lambda}(h(x,\lambda)-h(y,\lambda)) $ and integrating  by part twice, we have for $|x-y|\geq C$,
\begin{equation*}
   \int_ {\tilde{J}} e^{-ih(x,\lambda)+ih(y,\lambda)}w(x,\lambda)\bar{w}(y,\lambda)\xi(\lambda)d \lambda\;\;\;\;\;\;\;\;\;\;\;\;\;\;\;\;\;\;\;\;\;\;\;\;\;\;\;\;\;\;\;\;\;\;\;\;\;\;\;\;\;\;\;\;\;\;\;\;\;\;\;\;\;\;\;\;\;\;\;\;
\end{equation*}
\begin{align}
\;\;\;\;\;\;\;\;=&\int_{ \tilde{J}} \frac{-i\partial_{\lambda}(h(x,\lambda)-h(y,\lambda))}{-i\partial_{\lambda}(h(x,\lambda)-h(y,\lambda))} e^{-ih(x,\lambda)+ih(y,\lambda)}w(x,\lambda)\bar{w}(y,\lambda)\xi(\lambda)d \lambda \nonumber\\
 =&\int_{ \tilde{J}} e^{-ih(x,\lambda)+ih(y,\lambda)}\partial_{ \lambda}\left(\frac{w(x,\lambda)\bar{w}(y,\lambda)\xi(\lambda)}{-i\partial_{\lambda}(h(x,\lambda)-h(y,\lambda))}\right)d\lambda\nonumber\\
 =& \int_{ \tilde{J}} e^{-ih(x,\lambda)+ih(y,\lambda)}\partial_{\lambda}\left[\frac{1}{-i\partial_{\lambda}(h(x,\lambda)-h(y,\lambda))}\partial_{\lambda}\left(\frac{w(x,\lambda)
 \bar{w}(y,\lambda)\xi(\lambda)}{-i\partial_{\lambda}
 (h(x,\lambda)-h(y,\lambda))}\right) \right] d\lambda\nonumber\\
 =&\frac{O(1)}{|x-y|^2}\label{Gaug194},
\end{align}
where the last  equality holds by Lemma \ref{Lemontoneh}.

By \eqref{Gaug193} and \eqref{Gaug194}, we have
\begin{eqnarray}
  ||Sf||^2_{L^{2}(J,d\lambda)}&\leq& \int_{|x-y|>C}f(x)\bar{f}(y)dxdy \int_{ \tilde{J}} e^{-ih(x,\lambda)+ih(y,\lambda)}w(x,\lambda)\bar{w}(y,\lambda)\xi(\lambda)d \lambda. \nonumber\\
  && +\int_{|x-y|\leq C}f(x)\bar{f}(y)dxdy \int_{ \tilde{J}} e^{-ih(x,\lambda)+ih(y,\lambda)}w(x,\lambda)\bar{w}(y,\lambda)\xi(\lambda)d \lambda\nonumber\\
 &=&O(1)\int_{\R_+^2}\frac{|f(x)f(y)|}{1+|x-y|^2}dxdy\nonumber \\
&=&O(1)||f||_{\ell^2(L^1)},\label{Gaug195}
\end{eqnarray}
where the the last equality holds by   calculations (for convenience, we include  the details in the appendix).
This completes the proof.
\end{proof}
\begin{remark}
The formulation and  proof of Lemma \ref{LeSbound}  closely  follows from  the corresponding parts appearing  in \cite{mkjfa1,mkcmp01,kim}.

\end{remark}
\section{Preparations for the applications }
In the rest of this paper, we will apply Theorem \ref{prop41} to prove  Theorems \ref{mainthm1}, \ref{mainthm2} and \ref{mainthm3}.
Let us set up the basics in this section and do the proof in the following sections.
Since $\varphi(x,E)$ is the Floquet solution for $E\in \cup (a_n,b_n)$, one has
\begin{equation}\label{Gdefj}
    \varphi(x,E)=J(x,E)e^{ik(E)x},
\end{equation}
where $k(E)\in[0,\pi]$ is the quasimomentum, and $J(x,E)$ is 1-periodic.
Since $\varphi(x,E)$ and $\overline{\varphi}(x,E)$ are two linearly independent solutions of $u^{\prime\prime}+V_0u=Eu$, the
Wronskian $W(\overline{\varphi},\varphi)$ is constant and
\begin{equation}\label{Gwron}
  W(\overline{\varphi},\varphi)=\overline{\varphi}(x)\varphi^{\prime}(x)-\overline{\varphi}^{\prime}(x)\varphi(x)=2i\Im [\overline{\varphi}(x)\varphi^{\prime}(x)].
\end{equation}
By interchanging $\overline{\varphi}$ and $\varphi$, we always assume
\begin{equation}\label{Gflight1}
    W(\overline{\varphi},\varphi)=i\omega,
\end{equation}
with $\omega>0$.

Let us study the solutions of the equation
\[ -u''+(V_0(x)+V(x))u=Eu. \]
We rewrite this
equation as a linear system
\[ u_{1}' = \left( \begin{array}{cc} 0 & 1 \\ V_0+V-E &
 0 \end{array} \right)u_{1}, \]
where $u_{1}$ is   a vector $\left(
 \begin{array}{c} u\\ u' \end{array} \right).$
Introduce
\[ u_{1} = \left( \begin{array}{cc} \varphi (x, E) & \overline{\varphi}
 (x, E)
\\ \varphi '(x, E) & \overline{\varphi}' (x, E) \end{array}
\right) u_{2}. \]
Then
\begin{equation} u_{2}' = \frac{i}{2\Im (\varphi \overline{\varphi}')}\left(
\begin{array}{cc} V(x)|\varphi (x, E)|^{2} & V(x) \overline{\varphi}
 (x, E)^{2} \\ -V(x) \varphi (x, E)^{2} & -V(x) |\varphi (x,
E)|^{2} \end{array} \right)u_{2}.
\end{equation}
Define
\[ p(x,E) = \frac{1}{2\Im (\varphi \overline{\varphi'})}
\int\limits_{0}^{x}V(y) |\varphi(y,E)|^{2}\,dy. \]
Let us apply  another   transformation,
\[ u_{2} = \left( \begin{array}{cc} \exp (ip(x,E))  & 0
 \\ 0 & \exp (-ip(x,E)) \end{array} \right)u_{3} . \]
We obtain the    equation for $u_{3}:$
\begin{equation}\label{Gu_3}
u_{3}' = \frac{i}{2\Im (\varphi \overline{\varphi}')}\left(
\begin{array}{cc} 0 & V(x) \overline{\varphi}
 (x, E)^{2}\exp (-2ip(x,E)) \\ -V(x) \varphi (x, E)^{2}
\exp (2ip(x,E)) & 0 \end{array} \right)u_{3}.
\end{equation}
\begin{lemma}\label{le10}
Suppose there exists a solution of \eqref{Gu_3} satisfying
\begin{equation*}
u_3(x,E)=
  \left(\begin{array}{c}
    1 \\
     0
   \end{array}
   \right)+o(x)
\end{equation*}
as $x\to \infty$. Then  there exists a  solution $u(x,E)$ of \eqref{Gu} satisfying  \eqref{Gwkb}.
\end{lemma}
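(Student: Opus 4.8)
The plan is to read the chain of substitutions preceding the lemma backwards. Given the solution $u_3(x,E)$ of \eqref{Gu_3} furnished by the hypothesis, with components $u_3^{(1)},u_3^{(2)}$, I would set
\[
u_2(x,E)=\left(\begin{array}{cc}\exp(ip(x,E)) & 0\\ 0 & \exp(-ip(x,E))\end{array}\right)u_3(x,E),
\]
\[
u_1(x,E)=\left(\begin{array}{cc}\varphi(x,E) & \overline{\varphi}(x,E)\\ \varphi'(x,E) & \overline{\varphi}'(x,E)\end{array}\right)u_2(x,E),
\]
and let $u(x,E)$ be the first component of $u_1(x,E)$. First I would check that $u$ is a genuine solution of \eqref{Gu}: each substitution used in deriving \eqref{Gu_3} is an invertible change of variables, since the diagonal matrix has determinant $1$ and the matrix built from $\varphi,\overline{\varphi},\varphi',\overline{\varphi}'$ has nonzero determinant (equal to $-W(\overline{\varphi},\varphi)=-i\omega$) by \eqref{Gwron} and \eqref{Gflight1}. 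Hence reading those computations in reverse shows that $u_1$ solves the first-order system equivalent to $-u''+(V_0+V)u=Eu$; since $V\in L^{1}_{\mathrm{loc}}$ all solutions are $C^1$ with absolutely continuous derivative and the manipulations are legitimate, so $u$ is an actual solution of \eqref{Gu}.

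Next I would read off the asymptotics. Composing the two substitutions gives the closed form
\[
u(x,E)=\varphi(x,E)\,e^{ip(x,E)}\,u_3^{(1)}(x,E)+\overline{\varphi}(x,E)\,e^{-ip(x,E)}\,u_3^{(2)}(x,E).
\]
By hypothesis $u_3^{(1)}(x,E)=1+o(1)$ and $u_3^{(2)}(x,E)=o(1)$ as $x\to\infty$, and $|e^{\pm ip(x,E)}|=1$ since $p$ is real-valued. Moreover, by \eqref{Gdefj}, $\varphi(x,E)=J(x,E)e^{ik(E)x}$ with $J$ $1$-periodic, and $J(\cdot,E)$ is nowhere zero: if $\varphi(x_0,E)=0$ then by \eqref{Gwron} the constant $W(\overline{\varphi},\varphi)=\overline{\varphi}(x_0)\varphi'(x_0)-\overline{\varphi}'(x_0)\varphi(x_0)$ would vanish, contradicting \eqref{Gflight1}. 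Consequently $|\varphi(\cdot,E)|=|\overline{\varphi}(\cdot,E)|=|J(\cdot,E)|$ is bounded above and below by positive constants, so $\overline{\varphi}(x,E)e^{-ip(x,E)}/(\varphi(x,E)e^{ip(x,E)})=O(1)$ and therefore
\[
u(x,E)=\varphi(x,E)e^{ip(x,E)}\Big(u_3^{(1)}(x,E)+\frac{\overline{\varphi}(x,E)e^{-ip(x,E)}}{\varphi(x,E)e^{ip(x,E)}}\,u_3^{(2)}(x,E)\Big)=\varphi(x,E)\,e^{ip(x,E)}\,(1+o(1)).
\]
Finally, since $|\varphi^{2}(t,E)|=|\varphi(t,E)|^{2}$, the definition of $p(x,E)$ gives $e^{ip(x,E)}=\exp\big(\tfrac{i}{2\Im(\varphi\overline{\varphi}')}\int_{0}^{x}V(t)|\varphi^{2}(t,E)|\,dt\big)$, which is exactly the exponential factor appearing in \eqref{Gwkb}; this proves \eqref{Gwkb}, and in particular $u\not\equiv0$ because $\varphi e^{ip}$ is bounded away from zero.

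I do not expect a serious obstacle: the lemma is essentially bookkeeping that converts an asymptotic statement about $u_3$ into one about $u$. The only two points needing (easy) care are the reversibility of the substitutions, which rests solely on $W(\overline{\varphi},\varphi)\neq0$, and the non-vanishing of the Floquet solution $\varphi(x,E)$, which is what lets the term $\overline{\varphi}e^{-ip}u_3^{(2)}$ be absorbed into $\varphi e^{ip}(1+o(1))$ rather than surviving as an additive error of a different shape. If one preferred to avoid invoking non-vanishing of $\varphi$, one could instead note directly that $u(x,E)-\varphi(x,E)e^{ip(x,E)}=o(1)$ and use the lower bound on $|\varphi e^{ip}|$ only at that last step.
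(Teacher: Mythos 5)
Your proof is correct and is exactly what the paper means by ``The proof is straightforward by substitutions'': you invert the two changes of variables, verify invertibility from $W(\overline{\varphi},\varphi)=i\omega\neq 0$, and read the WKB asymptotics off the explicit composite formula $u=\varphi e^{ip}u_3^{(1)}+\overline{\varphi}e^{-ip}u_3^{(2)}$, using $|\varphi|$ bounded below to absorb the second term into the $(1+o(1))$ factor. One small remark: the paper's hypothesis literally reads $u_3=(1,0)^{T}+o(x)$, which you have tacitly (and correctly) read as $o(1)$ --- an $o(x)$ error term would not yield \eqref{Gwkb}, so this is evidently a typo in the statement.
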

\begin{proof}
The proof is straitfoward by   subistitutions.
\end{proof}
Let $Y=u_3$. Let $\phi $ be so that $e^{i\phi}=\varphi(x,E)$. Denote by
\begin{equation}\label{def.f}
 w(x,E)=\frac{i}{2\Re \phi^{\prime}},
\end{equation}
and
\begin{equation}\label{Gdefh}
  h(x,E)=2 \Re \phi-\int_{0}^x \frac{V(t)}{\Re \phi^{\prime}(t,E)}dt.
\end{equation}

In the following,  $ w$ and $ h$ are always given by \eqref{def.f} and \eqref{Gdefh} correspondingly.
The operators $S$ and $S^{\ast}$ are given by \eqref{Gdefs}  and \eqref{Gdefs1} correspondingly.
Denote by
\begin{equation}\label{Gaug275}
 \mathcal{F}(x,E)= { {w}}(x,E) e^{-ih(x,E)}V(x).
\end{equation}
Under those  notations and following  calcaluations in p. 249 and p.250   in \cite{mkcmp01},    \eqref{Gu_3} becomes
    \begin{equation}\label{Gy}
Y' =  \left(
\begin{array}{cc} 0 &  w e^{-ih}V \\ \bar{ {w}} e^{ih}V & 0 \end{array} \right)Y=\left(
\begin{array}{cc} 0 &  \mathcal{F} \\ \bar{  \mathcal{F}} & 0 \end{array} \right)Y.
\end{equation}
For convenience, we include a verification of \eqref{Gy} in the Appendix.

Denote by
\begin{equation*}
D=   \left(
\begin{array}{cc} 0 &  \mathcal{F} \\ \bar{ \mathcal{F}} & 0 \end{array} \right).
\end{equation*}

The linear equation \eqref{Gy} becomes $Y^{\prime}=DY$.
We are going to find a solution as
\begin{equation}\label{Gseri}
  Y(x)=\left(\begin{array}{c}
         1 \\
         0
       \end{array}\right)-\int_{x}^{\infty}D(y)Y(y)dy,
\end{equation}
and we obtain a   series solution by iterations
\begin{equation}\label{Gseri1}
  Y(x)=\left(\begin{array}{c}
         1 \\
         0
       \end{array}\right)+\sum_{k=1}^{\infty}(-1)^k\int\cdots\int_{x\leq t_1\leq t_2\cdots\leq t_k<\infty}D(t_1)D(t_2)\cdots D(t_k)\left(\begin{array}{c}
         1 \\
         0
       \end{array}\right)dt_kdt_{k-1}\cdots dt_2dt_1.
\end{equation}
Let
\begin{equation*}
T_n(\mathcal{F})(x,x',E)= M_{n}(\mathcal{F}(\cdot,E))(x,x').
\end{equation*}
Under those  notations, one has
\begin{equation*}
  \int\cdots\int_{x\leq t_1\leq t_2\cdots\leq t_{2k}\leq x'}D(t_1)D(t_2)\cdots D(t_{2k})\left(\begin{array}{c}
         1 \\
         0
       \end{array}\right)dt_{2k}\cdots dt_2dt_1=\left(\begin{array}{c}
         T_{2k}(\mathcal{F})(x,x',E) \\
         0
       \end{array}\right),
\end{equation*}
and
\begin{equation*}
  \int\cdots\int_{x\leq t_1\leq t_2\cdots\leq t_{2k+1}\leq x'}D(t_1)D(t_2)\cdots D(t_{2k+1})\left(\begin{array}{c}
         1 \\
         0
       \end{array}\right)dt_{2k+1}\cdots dt_2dt_1=\left(\begin{array}{c}
         0 \\
          T_{2k+1}(\mathcal{F})(x,x',E)
       \end{array}\right).
\end{equation*}
The series solution \eqref{Gseri1} becomes
\begin{equation}\label{Gseri2}
  Y(x)=\left(\begin{array}{c}
         1 \\
         0
       \end{array}\right)+\left(\begin{array}{c}
         \sum_{m=1}^{\infty}T_{2m}(\mathcal{F})(x,\infty,E) \\
         -\sum_{m=0}^{\infty}T_{2m+1}(\mathcal{F})(x,\infty,E)
       \end{array}\right).
\end{equation}
We will show \eqref{Gseri1} and \eqref{Gseri2} are well defined and give an  actual solution  of \eqref{Gy}.

\section{Proof of Theorem \ref{mainthm1} }
Fix a martingale structure  $\{E_j^m\subset \R_+:m\in \Z_+,  1\leq j\leq 2^m\}$ to the potential $V$.
Choose a spectral band $(a_n,b_n)$ and let $K\subset (a_n,b_n)$ be an arbitrary closed interval. We will apply  Theorem \ref{prop41} to complete our proof.
In order to make it different from the abstract statements, instead of using parameter $\lambda$, the energy parameter $E$ will be used.
\begin{lemma}\label{Lemontoneh}
For any $E\in (a_n,b_n)$,
there exists a constant $C=C(E)$ (depends on $E$ uniformly in any compact subset of $(a_n,b_n)$) such that
\begin{equation}\label{Gaug131new}
|\partial_{E} [ h(x,E)-h(y,E)]|\geq \frac{|x-y|}{C}
\end{equation}
and  for $i=1,2,3$
\begin{equation}\label{Gaug132new}
|\partial_{E}^{i} [ h(x,E)-h(y,E)]|\leq  C|x-y|
\end{equation}
provided $|x-y|\geq C$.
\end{lemma}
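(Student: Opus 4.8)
The plan is to split $h(x,E)-h(y,E)$ into a term linear in $x-y$ coming from the quasimomentum and a genuinely lower-order remainder, so that the lower bound \eqref{Gaug131new} is governed by $k'(E)$ while the upper bounds \eqref{Gaug132new} are immediate. First I would record the Floquet ingredients on a fixed compact $K\subset(a_n,b_n)$. Since $\varphi$ and $\overline{\varphi}$ are linearly independent solutions with Wronskian $W(\overline{\varphi},\varphi)=i\omega$, $\omega>0$ (see \eqref{Gwron}--\eqref{Gflight1}), the solution $\varphi(\cdot,E)$ has no zeros on $\R_+$; hence on the simply connected set $\R_+\times K$ there is a smooth branch of $\arg\varphi$, and from $e^{i\phi}=\varphi$ one gets $\phi'=-i\varphi'/\varphi$, so that $\Re\phi=\arg\varphi=k(E)x+\arg J(x,E)$ (with $J$ as in \eqref{Gdefj}) and
\[ \Re\phi'(t,E)=\frac{\Im(\overline{\varphi}\varphi')}{|\varphi(t,E)|^2}=\frac{\omega(E)/2}{|J(t,E)|^2}>0 . \]
As $J(\cdot,E)$ is $1$-periodic, continuous and zero-free, $\Re\phi'(\cdot,E)$ is bounded and bounded below away from $0$, uniformly on $\R_+\times K$. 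Using standard smooth (indeed analytic) dependence of the Floquet solution on $E$ inside a band, together with the fact that the failure of periodicity of $\arg J$ in $x$ is a constant multiple of a locally constant winding number, each of $\partial_E^i\arg J(\cdot,E)$ and $\partial_E^i\bigl(1/\Re\phi'(\cdot,E)\bigr)$, $i\le3$, is $1$-periodic in $t$ and bounded uniformly for $E\in K$. Finally, strict monotonicity of the discriminant on each band yields $k\in C^\infty(a_n,b_n)$ with $k'(E)\ne0$, so $|k^{(i)}(E)|\le C_K$ and $|k'(E)|\ge c_K>0$ on $K$.

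With these in hand I would write, from \eqref{Gdefh},
\[ h(x,E)-h(y,E)=2k(E)(x-y)+2\bigl[\arg J(x,E)-\arg J(y,E)\bigr]-\int_y^x\frac{V(t)}{\Re\phi'(t,E)}\,dt , \]
differentiate $i$ times in $E$ (legitimately under the $V$-integral since $V\in L^1_{\mathrm{loc}}$ and the kernel is smooth and bounded), and assume $x\ge y$ by symmetry. The term $2\bigl[\partial_E^i\arg J(x,E)-\partial_E^i\arg J(y,E)\bigr]$ is $O(1)$ uniformly; by H\"older in $\ell^p(L^1)$ the $V$-term is at most $C_K\int_y^x|V(t)|\,dt\le C_K(1+x-y)^{1/p'}||V||_{\ell^p(L^1)}=o(x-y)$ as $x-y\to\infty$, since $p<2$; and $|2k^{(i)}(E)(x-y)|\le C_K|x-y|$. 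Combining, the $i$-th derivative is $O(|x-y|)$ for $|x-y|\ge1$, which gives \eqref{Gaug132new}; and for $i=1$ we have $\partial_E[h(x,E)-h(y,E)]=2k'(E)(x-y)+O(1)+o(|x-y|)$, so $|\partial_E[h(x,E)-h(y,E)]|\ge 2c_K|x-y|-O(1)-o(|x-y|)\ge c_K|x-y|$ once $|x-y|\ge C_K$ is large enough to absorb the two lower-order terms, which is \eqref{Gaug131new}. Since every constant above depends only on $K$, the $C$ so produced is uniform on compact subsets of $(a_n,b_n)$.

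The estimates in the second paragraph are routine; the substance, and the main obstacle, is the Floquet package of the first paragraph: zero-freeness of $\varphi$ and the resulting two-sided bound on $\Re\phi'$; $C^3$ dependence of the Floquet solution on $E$ with $E$-derivatives of $\arg J$ uniformly bounded and $x$-periodic (which is where one must handle the multivalued $\arg$ on $\R_+\times K$ carefully); and the non-degeneracy $k'(E)\ne0$ interior to a band. Once these are secured, the lemma follows by the triangle inequality and H\"older exactly as above.
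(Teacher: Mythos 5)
Your proposal is correct and follows essentially the same route as the paper's proof: decompose $\Re\phi=k(E)x+\Im\log J(x,E)$ via Floquet theory, use $k'(E)\neq 0$ inside the band for the lower bound, observe that the $E$-derivatives of the $J$-contribution are genuinely $1$-periodic (hence bounded) because the period mismatch $\Im\log J(x+1,E)-\Im\log J(x,E)$ is an $E$-independent integer multiple of $2\pi$, and control the $V$-integral as $o(|x-y|)$. The only cosmetic difference is in how you handle the $V$-term: you use H\"older against $\ell^p(L^1)$ to get $\int_y^x|V|\lesssim(1+|x-y|)^{1/p'}$, whereas the paper appeals to the fact that $\int_N^{N+1}|V|\to 0$ and a Ces\`aro-type average; both give the needed $o(|x-y|)+O(1)$, and your version is arguably a bit more explicit and uniform (note, though, that the smallness really only requires $1/p'<1$, i.e. $p<\infty$, not $p<2$ as you parenthetically suggest).
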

\begin{proof}
We will prove \eqref{Gaug131new} first.
By the definition of $\phi$ and \eqref{Gdefj}, one has
\begin{equation}\label{Gequ2}
    \Re \phi=k(E)x+\Im \log J(x,E),
\end{equation}
with $k(E)\in (0,\pi)$. By the Floquet theory,
\begin{equation}\label{Gaug133}
  \frac{d k(E)}{dE}\neq 0.
\end{equation}
By the fact that $J(x,E)$ is 1-periodic, one has
\begin{equation}\label{Gequ1}
   \Im \log J(x+1,E)- \Im \log J(x,E)=2q\pi,
\end{equation}
for some $q\in \Z$. It implies
\begin{equation}\label{Gaug134}
   \partial_E(\Im \log J(x+1,E)- \Im \log J(x,E))=0.
\end{equation}

By \eqref{Gaug133} and \eqref{Gaug134}, we have
\begin{equation}\label{Gequ4}
    | \Re \phi(x,E)-\Re \phi(y,E)|\geq  \frac{|x-y|}{C} .
\end{equation}
Since   $V(x)\in \ell^p(L^1)$,
one has $\partial_E \int_N^{N+1}\frac{V(t)}{\Re \phi^{\prime}(t,E)}dt$   goes to zero as $N\to\infty$.
It implies
\begin{equation}\label{Gequ3}
\left|\partial_E\int_{x}^y \frac{V(t)}{\Re \phi^{\prime}(t,E)}dt\right|=o(y-x)+O(1),
\end{equation}
as $y-x$ goes to $\infty$.
Now \eqref{Gaug131new} follows from \eqref{Gequ4} and \eqref{Gequ3}.
The proof of \eqref{Gaug132new} can be proceeded in a similar way.
\end{proof}

\begin{lemma}\label{LeGbound}
Let $q$ be the number conjugate to $p$ with $1\leq p<2$.
Then the following estimate holds
\begin{equation*}
 G^{(s)}_{S^{\ast}(V)(E)} \in L^{q}(K,d E).
\end{equation*}
In particular ($s=1$),
\begin{equation*}
 G_{S^{\ast}(V)(E)} \in L^{q}(K,d E).
\end{equation*}
\end{lemma}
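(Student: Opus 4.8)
The plan is to apply Theorem~\ref{thmbound} with the sublinear operator $P = S^{\ast}$ acting on the function $f = V$, after checking the hypotheses of that proposition. Concretely, Theorem~\ref{thmbound} says that if $P$ is a bounded linear or sublinear operator from $\ell^p(L^1)$ to $L^q(J)$ with $1 \le p < 2 < q$ and $J$ a closed interval, then $\|G^{(s)}_{P(f)(E)}\|_{L^q(J)} \le C \|f\|_{\ell^p(L^1)}$; in particular $G^{(s)}_{P(f)(E)} \in L^q(J)$. So the whole lemma reduces to verifying that $S^{\ast}$ is a bounded sublinear operator from $\ell^p(L^1)$ to $L^q(K)$ for the relevant range of $p$.

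First I would record that $q = p'$ is exactly the conjugate exponent appearing in Lemma~\ref{LeSbound}, and that $1 \le p < 2$ forces $q = p' > 2$, so the exponent condition $1 \le p < 2 < q$ of Theorem~\ref{thmbound} is met (with the harmless degenerate endpoint $p=1$, $q=\infty$ to be noted, or handled by shrinking $p$ slightly and using $\ell^p(L^1) \subset \ell^{p_1}(L^1)$ for $p \le p_1$). Next I would invoke Lemma~\ref{LeSbound} to get the boundedness of $S^{\ast}$ from $\ell^p(L^1)$ to $L^{p'}(J,dE)$: for this I must supply the hypotheses of that lemma, namely the lower bound \eqref{Gaug131} and the upper bounds \eqref{Gaug132} on the $E$-derivatives of $h(x,E)-h(y,E)$, together with the bound on $\partial_E^i w$. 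But these are precisely the content of Lemma~\ref{Lemontoneh} (which gives \eqref{Gaug131new} and \eqref{Gaug132new} uniformly on compact subintervals of $(a_n,b_n)$), applied with $J = K$ and $\tilde J$ a slightly larger open subinterval of $(a_n,b_n)$ still containing $K$; the smoothness and boundedness of $w(x,E) = i/(2\Re\phi')$ and its $E$-derivatives on $\tilde J$ follow from \eqref{Gequ2}, the Floquet-analytic dependence of $k(E)$ and $J(x,E)$ on $E$, and $\inf_{\tilde J}|\Re\phi'| = \inf_{\tilde J} k(E) > 0$. Thus $S^{\ast}$ maps $\ell^p(L^1) \to L^{p'}(K)$ boundedly.

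With $P = S^{\ast}$ bounded from $\ell^p(L^1)$ to $L^q(K)$, $q = p'$, apply Theorem~\ref{thmbound} with $f = V$ (note $V \in \ell^p(L^1)$ is part of the standing assumption, and a martingale structure adapted to $V$ has been fixed at the start of the section): it yields $\|G^{(s)}_{S^{\ast}(V)(E)}\|_{L^q(K,dE)} \le C(p,q,s,\|S^{\ast}\|)\|V\|_{\ell^p(L^1)} < \infty$, hence $G^{(s)}_{S^{\ast}(V)(E)} \in L^q(K,dE)$. Specializing $s=1$ gives the second assertion.

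The only real subtlety — and the step I expect to take the most care — is the bookkeeping that makes $S^{\ast}$ legitimately a sublinear operator in the sense of Theorem~\ref{thmbound} and that the constant in Lemma~\ref{LeSbound}, which depends on $C$, $J$, $\tilde J$ and $p$, is genuinely uniform over the compact set $K$; this is where one uses that Lemma~\ref{Lemontoneh} provides constants uniform on compact subsets of the band $(a_n,b_n)$, and one must choose $\tilde J$ with $\overline{K} \subset \tilde J \subset \overline{\tilde J} \subset (a_n,b_n)$ before extracting a single constant $C$. Everything else is a direct citation of the two preceding results.
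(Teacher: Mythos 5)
Your proposal is exactly the paper's argument: invoke Lemma~\ref{Lemontoneh} to verify the hypotheses of Lemma~\ref{LeSbound}, which gives boundedness of $S^{\ast}\colon \ell^p(L^1)\to L^{q}(K)$, and then apply Theorem~\ref{thmbound} with $P=S^{\ast}$ and $f=V$. You simply spell out the bookkeeping (choice of $\widetilde J\supset K$, uniformity of constants on compact subsets of the band, the endpoint $p=1$) that the paper leaves implicit.
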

\begin{proof}
By Lemmas \ref{Lemontoneh}, \ref{LeSbound} and applying $P=S^{\ast}$ in Theorem \ref{thmbound}, we have
\begin{equation}\label{Gaug244}
G^{(s)}_{S^{\ast}(V)(E)} \in L^{q}(K,d E).
\end{equation}
\end{proof}

\begin{corollary}\label{Cor1}
Let $q$ be the number conjugate to $p$ with $1\leq p<2$.
Then for almost every $E\in K$,
\begin{equation}\label{Gaug277}
||\mathcal{F}(\cdot,E)\chi_{I}||_{\mathfrak{B}} \leq C(E),
\end{equation}
for any closed interval $I$,
and
\begin{equation}\label{Gaug276}
\limsup_{M\to\infty}||\mathcal{F}(\cdot,E)\chi_{[M,\infty)}||_{\mathfrak{B}} =0.
\end{equation}
\end{corollary}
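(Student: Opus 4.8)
The plan is to deduce Corollary \ref{Cor1} directly from Lemma \ref{LeGbound} by recalling the definitions of $\mathcal{F}$, $S^{\ast}$, and the norm $\|\cdot\|_{\mathfrak{B}}$. First I would observe that, by \eqref{Gaug275}, $\mathcal{F}(x,E)=w(x,E)e^{-ih(x,E)}V(x)$, so for any interval $I=[0,B]$ the quantity $\|\mathcal{F}(\cdot,E)\chi_{[0,B]}\|_{\mathfrak{B}}$ is, by the definition \eqref{Gnormb}, the $\mathfrak{B}$-norm of the sequence $\{\int_{E_j^m} w(x,E)e^{-ih(x,E)}V(x)\chi_{[0,B]}(x)\,dx\}$. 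Comparing with the Remark following the definition of $G^{(s)}_{P(f)(\lambda)}$, this is exactly $G^{(1)}_{P(V)(E)}$ for the integral operator $P$ with kernel $p(x,E)=w(x,E)e^{-ih(x,E)}\chi_{[0,B]}(x)$; more to the point, since $S^{\ast}(f)(E)=\sup_{y}|\int_y^\infty w e^{-ih}f\,dx|$ dominates $|\int_y^{B} w e^{-ih}f\,dx|$ uniformly in $y$ and $B$, each term $|\int_{E_j^m} w e^{-ih}V\chi_{I}\,dx|$ is controlled by (a fixed multiple of) $S^{\ast}(V\chi_j^m)(E)$. Hence $\|\mathcal{F}(\cdot,E)\chi_I\|_{\mathfrak{B}} \le C\, G_{S^{\ast}(V)(E)}$ for every closed interval $I$, with $C$ absolute.

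Next, Lemma \ref{LeGbound} gives $G_{S^{\ast}(V)(E)}\in L^q(K,dE)$, so in particular $G_{S^{\ast}(V)(E)}<\infty$ for almost every $E\in K$. Setting $C(E):=C\,G_{S^{\ast}(V)(E)}$, which is finite for a.e. $E\in K$, yields \eqref{Gaug277}. The constant $C(E)$ does not depend on the interval $I$ because the bound $\|\mathcal{F}(\cdot,E)\chi_I\|_{\mathfrak{B}}\le C\,G_{S^{\ast}(V)(E)}$ holds simultaneously for all $I$ (the maximal operator absorbs the dependence on the right endpoint of $I$, and the left endpoint only shrinks the domain of integration).

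For \eqref{Gaug276}, I would run the same comparison with a martingale tail: $\|\mathcal{F}(\cdot,E)\chi_{[M,\infty)}\|_{\mathfrak{B}}$ is the $\mathfrak{B}$-norm of $\{\int_{E_j^m\cap[M,\infty)} w e^{-ih}V\,dx\}$. The natural route is to apply Theorem \ref{thmbound} with $P=S^{\ast}$ not to $V$ but to the truncated potential $V\chi_{[M,\infty)}$: since $S^{\ast}$ is bounded from $\ell^p(L^1)$ to $L^q(K)$ by Lemmas \ref{Lemontoneh}, \ref{LeSbound}, and \ref{CKlemma}, Theorem \ref{thmbound} gives $\|G_{S^{\ast}(V\chi_{[M,\infty)})(E)}\|_{L^q(K)}\le C\|V\chi_{[M,\infty)}\|_{\ell^p(L^1)}$, and the right side tends to $0$ as $M\to\infty$ because $V\in\ell^p(L^1)$. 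Thus $G_{S^{\ast}(V\chi_{[M,\infty)})(E)}\to 0$ in $L^q(K)$; passing to a subsequence $M_k\to\infty$ along which the convergence is pointwise a.e., and using monotonicity of $M\mapsto \|\mathcal{F}(\cdot,E)\chi_{[M,\infty)}\|_{\mathfrak{B}}$ (it is non-increasing in $M$, since enlarging $M$ only removes mass), we get $\limsup_{M\to\infty}\|\mathcal{F}(\cdot,E)\chi_{[M,\infty)}\|_{\mathfrak{B}}=0$ for a.e. $E\in K$, after dominating this quantity by $C\,G_{S^{\ast}(V\chi_{[M,\infty)})(E)}$ exactly as above.

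The only mildly delicate point — and the step I expect to require the most care — is the passage from $S^{\ast}$, which is a single supremum over $y$, to a control of the full sequence $\{|\int_{E_j^m} w e^{-ih}V\,dx|\}$ in the $\ell^2$-in-$j$ norm underlying $\mathfrak{B}$: one must check that writing $\int_{E_j^m}$ as a difference of two tail integrals $\int_{y}^{\infty}-\int_{y'}^{\infty}$ (valid because $E_j^m$ is an interval, by the third martingale axiom) and invoking the triangle inequality in $\ell^2$ does not lose the summability in $m$ that makes the $\mathfrak{B}$-norm finite; this is precisely why it is cleaner to feed the truncated function into Theorem \ref{thmbound} directly rather than to manipulate $G_{S^{\ast}(V)(E)}$ by hand, and I would phrase the proof that way. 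Everything else is bookkeeping with the definitions.
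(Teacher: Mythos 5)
Your proof follows the same route as the paper's Section~4 argument: dominate $\|\mathcal{F}(\cdot,E)\chi_I\|_{\mathfrak{B}}$ by a multiple of $G_{S^{\ast}(V)(E)}$ (finite for a.e.\ $E$ by Lemma~\ref{LeGbound}) to get \eqref{Gaug277}, and feed the truncated potential $V\chi_{[M,\infty)}$ into Theorem~\ref{thmbound} to get \eqref{Gaug276}. The paper uses $P=S$ in the second step; you use $P=S^{\ast}$, which is a slightly more robust choice, and you try to make explicit the passage from $L^q$-convergence of $G$ to the a.e.\ pointwise conclusion, a step the paper simply asserts.

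There is, however, one genuinely incorrect assertion in the argument: $M\mapsto \|\mathcal{F}(\cdot,E)\chi_{[M,\infty)}\|_{\mathfrak{B}}$ is \emph{not} non-increasing in $M$. The $\mathfrak{B}$-norm is built from integrals $\int_{E_j^m}w e^{-ih}V\chi_{[M,\infty)}\,dx$ of an oscillatory integrand, and ``removing mass'' can enlarge the modulus of such an integral (e.g.\ $\int_0^{2\pi}e^{it}\,dt=0$ while $\int_{\pi}^{2\pi}e^{it}\,dt\neq 0$). The quantity that \emph{is} monotone in $M$ is precisely the dominating one you introduce, $G_{S^{\ast}(V\chi_{[M,\infty)})}(E)$, because
\[
S^{\ast}(V\chi_{[M,\infty)}\chi_j^m)(E)=\sup_{y\geq M}\Bigl|\int_y^{\infty}w e^{-ih}V\chi_j^m\,dx\Bigr|,
\]
and raising $M$ shrinks the set over which the supremum is taken. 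Attributing monotonicity to this quantity rather than to the $\mathfrak{B}$-norm itself, your subsequence argument closes: pick $M_k\to\infty$ with $G_{S^{\ast}(V\chi_{[M_k,\infty)})}(E)\to 0$ a.e., and for any $M\geq M_k$ one has $\|\mathcal{F}(\cdot,E)\chi_{[M,\infty)}\|_{\mathfrak{B}}\leq 2G_{S^{\ast}(V\chi_{[M,\infty)})}(E)\leq 2G_{S^{\ast}(V\chi_{[M_k,\infty)})}(E)$. So the idea is right and the repair is minor, but as written the parenthetical justification reveals a misunderstanding worth fixing.
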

\begin{proof}
Direct computations,  one has
\begin{eqnarray*}
  ||\mathcal{F}(\cdot,E)\chi_{I}||_{\mathfrak{B}} &=& \left\|\left\{\int_{E_j^m}\mathcal{F}(x,E)\chi_{I}(x) dx\right\}\right\|_{\mathfrak{B}} \\
   &=& \left\|\left\{\int_{E_j^m}w(x,E)e^{ih(x,E)}V(x)\chi_{I}(x) dx\right\}\right\|_{\mathfrak{B}}\\
   &=& \left\|\left\{\int_{I}w(x,E)e^{ih(x,E)}V(x)\chi_j^m dx\right\}\right\|_{\mathfrak{B}}\\
     &\leq&2 \left\|\left\{S^{\ast}(V\chi_{j}^m)(E)\right\}\right\|_{\mathfrak{B}}\\
     &=&2 G_{S^{\ast}(V)(E)}.
\end{eqnarray*}
Now \eqref{Gaug277} follows since $ G_{S^{\ast}(V)(E)} \in L^{q}(K,d E)$ by Lemma \ref{LeGbound}.
Applying $P=S$ and $ f=V(x)\chi_{[M,\infty)}$ in Theorem \ref{thmbound} and recalling that
\begin{equation*}
 ||\mathcal{F}(\cdot,E)\chi_{[M,\infty)}||_{\mathfrak{B}}=G_{S(V\chi_{[M,\infty)})},
\end{equation*}
one has
\begin{eqnarray}
 \limsup_{M\to \infty} \left(\int_J||\mathcal{F}(\cdot,E)\chi_{[M,\infty)}||^q_{\mathfrak{B}}dE\right)^{1/q}&=&  \limsup_{M\to \infty} ||G_{S(V\chi_{[M,\infty)})}||_{L^q(J)} \\
 &\leq&  O(1)\limsup_{M\to \infty} ||V \chi_{[M,\infty)}||_{\ell^p(L^1)} \nonumber\\
   &=& 0. \nonumber
\end{eqnarray}
It implies  for almost every $E\in K$,
\begin{equation*}
  \limsup_{M\to\infty}||\mathcal{F}(\cdot,E)\chi_{[M,\infty)}||_{\mathfrak{B}} =0.
\end{equation*}
This yields to \eqref{Gaug276}.
\end{proof}

\begin{proof}[\bf Proof of Theorem \ref{mainthm1}]
Under the assumption of Theorem \ref{mainthm1}, $\sigma_{ess}(H)=\sigma_{ess}(H_0)$ \cite{ls,Gun91}.
This yields that $ \sigma_{ac}(H)\subset S$.
It is well known that boundedness of  the eigensolution  implies  purely absolutely continuous spectrum (e.g. \cite{bbound,stolz2}). Thus the second part of Theorem \ref{mainthm1} implies the first part.
If $p=1$ ($V\in L^1(\R^+)$), one has that for every $E\in K$, $\frac{iV }{2\Re\phi^{\prime}}e^{ih}$ given by \eqref{Gy} is in $L^1$.
In this case, it is well known (see \cite{Dollard1978} for example) that \eqref{Gy} has a solution  $Y(x)$ satisfying
\begin{equation*}
  Y(x)=\left(\begin{array}{c}
         1 \\
         0
       \end{array}\right)+o(1),
\end{equation*}
as $x\to \infty$. By Lemma \eqref{le10},  Theorem \ref{mainthm1} is true for $p=1$. So we assume  $1<p<2$.

 By Corollary \ref{Cor1} and  Theorem \ref{prop41},
 for almost every $E\in K$ the following limit is well defined,
$$T_{2m}(\mathcal{F})(x,\infty,E)=\lim_{x'\to \infty}T_{2m}(\mathcal{F})(x,x',E) .$$
By \eqref{Gaug234} and \eqref{Gaug277}, we have
\begin{eqnarray}
  |T_{2m}(\mathcal{F})(x,\infty,E)| &\leq& \frac{ C(E)^{2m}}{\sqrt{ (2m)!}}.\label{Gaug246}
\end{eqnarray}
Thus
\begin{equation}\label{Gaug247}
   \sum_{m=1}^{\infty}T_{2m}(\mathcal{F})(x,\infty,E)
\end{equation}
is absolutely convergent  for almost every $E\in K$.
Similarly,
\begin{equation}\label{Gaug248}
\sum_{m=0}^{\infty}T_{2m+1}(\mathcal{F})(x,\infty,E)
\end{equation}
is  absolutely convergent  for almost every $E\in K$.

 \eqref{Gaug247} and \eqref{Gaug248}
show that the right hand series in \eqref{Gseri1} and \eqref{Gseri2} are well defined for almost every $E\in  K$.
Based on \eqref{Gaug256new},
it is easy to check  for  almost every $E\in  K$, the right hand series in \eqref{Gseri1} and  \eqref{Gseri2}  actually  gives a solution of \eqref{Gy}.
The WKB behavior \eqref{Gwkb} follows by \eqref{Gaug256} and Lemma \ref{le10}.
\end{proof}

\section{An alternative proof of Theorem  \ref{mainthm1} and proof of Theorem \ref{mainthm2} }\label{Newproof}
We will give a new proof of Theorem \ref{mainthm1}. As the arguments in the previous section, it suffices to prove Corollary \ref{Cor1}.
The proof of Corollary \ref{Cor1} in the previous section is based on the  maximal operator $S^{\ast}$. We will give a new proof without using  the  maximal operator.

\begin{lemma}\cite{mkjfa2}
Let  $\{E_j^m\subset \R_+:m\in \Z_+,  1\leq j\leq 2^m\}$  be a martingale structure.
Then there exists an absolute constant $C$ such that for any closed interval $I$
\begin{equation*}
  ||g\chi_{I}||_{\mathfrak{B}^s}\leq C  ||g||_{\mathfrak{B}^{s+1}}.
\end{equation*}
In particular,
\begin{equation}\label{Gaug278}
  ||g\chi_{I}||_{\mathfrak{B}}\leq C  ||g||_{\mathfrak{B}^{2}}.
\end{equation}
\end{lemma}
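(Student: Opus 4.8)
The plan is to reduce the claim $\|g\chi_I\|_{\mathfrak{B}^s}\le C\|g\|_{\mathfrak{B}^{s+1}}$ to a combinatorial comparison between the martingale sums associated with the restricted function $g\chi_I$ and those of $g$ itself. First I would fix a closed interval $I\subset\R_+$ and, for each level $m$, partition the index set $\{1,\dots,2^m\}$ into three classes: those $j$ for which $E_j^m$ is entirely contained in $I$ (call these ``interior'' indices), those for which $E_j^m$ is disjoint from $I$, and those for which $E_j^m$ straddles an endpoint of $I$. By the ordering property of a martingale structure (if $i<j$, $x\in E_i^m$, $x'\in E_j^m$ then $x<x'$), the straddling indices at level $m$ number at most two — one near $a:=\inf I$ and one near $b:=\sup I$. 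For interior indices, $\int_{E_j^m}g\chi_I\,dx=\int_{E_j^m}g\,dx$; for disjoint indices the integral vanishes; so the only nontrivial contribution comes from the at-most-two straddling cells.

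The key step is to control the straddling terms. For the level-$m$ cell $E_{j_0}^m$ that contains the endpoint $a$, I would write $\int_{E_{j_0}^m}g\chi_I\,dx=\int_{E_{j_0}^m\cap I}g\,dx$ and observe that $E_{j_0}^m\cap I$ is covered by at most two of the children cells at some deeper level, and more usefully, by telescoping down the tree, that for any $M>m$ the set $E_{j_0}^m\cap I$ differs from a union of level-$M$ cells by at most $O(M-m)$ boundary cells at intermediate levels. Concretely, I would express $\chi_{E_{j_0}^m\cap I}$ as a sum, over levels $\ell$ between $m$ and $\infty$, of indicator functions of single cells $E_{j_\ell}^\ell$ that lie just inside the endpoint; this gives
\begin{equation*}
\left|\int_{E_{j_0}^m}g\chi_I\,dx\right|\le\sum_{\ell\ge m}\left|\int_{E_{j_\ell}^\ell}g\,dx\right|.
\end{equation*}
Summing $m^s$ against this over $m$, and similarly for the endpoint $b$, and adding the (trivially dominated) contribution of the interior cells, I would get
\begin{equation*}
\|g\chi_I\|_{\mathfrak{B}^s}\le\|g\|_{\mathfrak{B}^s}+C\sum_{m}m^s\sum_{\ell\ge m}\left|\int_{E_{j_\ell}^\ell}g\,dx\right|+(\text{$b$-term}).
\end{equation*}
Interchanging the order of summation, $\sum_m m^s\sum_{\ell\ge m}|c_\ell|=\sum_\ell|c_\ell|\sum_{m\le\ell}m^s\le C\sum_\ell \ell^{s+1}|c_\ell|$, and since each $c_\ell=\int_{E_{j_\ell}^\ell}g\,dx$ is a single term of the $\ell$-th inner $\ell^2$-sum, $|c_\ell|\le(\sum_{j=1}^{2^\ell}|\int_{E_j^\ell}g\,dx|^2)^{1/2}$, so the whole expression is bounded by $C\|g\|_{\mathfrak{B}^{s+1}}$. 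Since $\|g\|_{\mathfrak{B}^s}\le\|g\|_{\mathfrak{B}^{s+1}}$, the lemma follows, and the special case $s=1$ gives \eqref{Gaug278}.

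The main obstacle I anticipate is making the telescoping decomposition of $\chi_{E_{j_0}^m\cap I}$ fully rigorous: one must check that at each level $\ell\ge m$ exactly one child is ``split'' by the endpoint $a$ (the one containing $a$), its sibling is either wholly inside or wholly outside $I$ and thus contributes a clean single-cell term, and the recursion continues only through the split child — this is where the martingale property $E_j^m=E_{2j-1}^{m+1}\cup E_{2j}^{m+1}$ together with the ordering property does the work. A minor point is that the endpoint $a$ might coincide with a cell boundary at every level, in which case there is no split cell at all and the bound is immediate; this degenerate case only helps. Everything else is the routine rearrangement of nonnegative series indicated above.
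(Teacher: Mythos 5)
The paper gives no proof of this lemma (it is cited to Christ–Kiselev \cite{mkjfa2}), so I can only assess your reconstruction on its own merits. Your three-way decomposition of the level-$m$ indices into interior, exterior and (at most two) straddling cells, the observation that interior cells are harmless, the telescoping of the straddling integral into disjoint clean cells at deeper levels, and the interchange of summation with $\sum_{m\le\ell}m^s\lesssim \ell^{s+1}$ followed by $|c_\ell|\le\bigl(\sum_j|\int_{E_j^\ell}g|^2\bigr)^{1/2}$ is exactly the right skeleton, and the Fubini arithmetic checks out. This is the standard route, and I believe it is the one in \cite{mkjfa2}.

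However, there is a genuine gap in the key display
\[
\Bigl|\int_{E_{j_0}^m}g\chi_I\,dx\Bigr|\le\sum_{\ell\ge m}\Bigl|\int_{E_{j_\ell}^\ell}g\,dx\Bigr|,
\]
which you justify by ``expressing $\chi_{E_{j_0}^m\cap I}$ as a sum over levels of single-cell indicators.'' That identity says $E_{j_0}^m\cap I=\bigsqcup_{\ell>m}(\text{clean sibling at level }\ell)$, and this holds only up to the residual set $E_\infty\cap I$, where $E_\infty=\bigcap_\ell E_{j_\ell}^\ell$ is the intersection of the cells containing the endpoint $a$. Nothing in the definition of a martingale structure forces $E_\infty$ to have measure zero: the refinement $E_j^m=E_{2j-1}^{m+1}\cup E_{2j}^{m+1}$ permits one child to be degenerate, so the cell containing $a$ can stabilize at a fixed interval of positive length, say $[a-1,a+1]$, from some level $L$ onward. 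If $g=-\chi_{[a-1,a]}+\chi_{[a,a+1]}$, then every level-$\ell$ integral $\int_{E_j^\ell}g$ with $\ell\ge L$ vanishes (so $\|g\|_{\mathfrak{B}^{s+1}}<\infty$), yet the straddling contribution $\int_{E_{j_\ell}^\ell\cap I}g=1$ persists at every $\ell\ge L$, so $\|g\chi_I\|_{\mathfrak{B}^{s}}=\infty$ and your chain of inequalities breaks. The degenerate case you mention at the end (the endpoint sitting on a cell boundary) is indeed harmless, but it is not the dangerous one; the dangerous one is the cells failing to shrink. The proof therefore needs either an explicit argument that the residual term $\int_{E_\infty\cap I}g$ vanishes, or an additional hypothesis (e.g.\ that the cells shrink to null sets, or adaptedness of the structure together with local integrability of $g$ as holds in the paper's applications) under which it does. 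As written, the telescoping step is asserted rather than proved, and this is precisely the point where the argument can fail.
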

\begin{proof}[\bf A new proof of Corollary  \ref{Cor1} without using $S^{\ast}$]
The   proof of \eqref{Gaug276} does not use $S^{\ast}$. So we keep it.
We only need to show \eqref{Gaug277} is true for almost every $E\in K$.
Applying $P=S$ and $ f=V(x)$ in Theorem \ref{thmbound} and recalling that
\begin{equation*}
 ||\mathcal{F}(\cdot,E)||_{\mathfrak{B}^2}=G^{(2)}_{S(V)(E)},
\end{equation*}
one has
\begin{equation}\label{Gaug279}
   ||\mathcal{F}(\cdot,E)||_{\mathfrak{B}^2}\in L^q(J).
\end{equation}
Now \eqref{Gaug277} follows  from \eqref{Gaug278} and \eqref{Gaug279}.
\end{proof}
Suppose  the assumptions of Theorem \ref{mainthm2} hold for some $1<p\leq 2$ and $\gamma>0$.
 Let $\beta$ be any positive number bigger than $1-p^{\prime}\gamma$. Denote by $  \mathcal{H}^{\beta}$ the $\beta$-dimensional Hausdorff measure.
Let
\begin{equation*}
  \Lambda_c=\{E\in K: ||\mathcal{F}(\cdot,E)\chi_{[N,\infty)}||_{\mathfrak{B}^2}\geq c \text{ for every } N\geq 0\}.
\end{equation*}
\begin{lemma}\label{Lehaus}
For any $c>0$,
we have
\begin{equation*}
  \mathcal{H}^{\beta}(\Lambda_c)=0.
\end{equation*}
\end{lemma}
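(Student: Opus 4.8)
Since the assumptions of Theorem~\ref{mainthm2} hold, the plan is to extract from the weight $|x|^{\gamma}$ a decay estimate at every cut-off scale and then convert it into a covering of $\Lambda_c$. First I would fix a martingale structure adapted to the heavier function $|x|^{\gamma}V\in\ell^p(L^1)$ rather than to $V$ (this is the structure with respect to which $\mathfrak{B}^2$ and $\mathcal F$ are to be read in this section). Since $|x|\ge N$ on the support of $V\chi_{[N,\infty)}$,
\begin{equation*}
\|V\chi_{[N,\infty)}\chi_j^m\|_{\ell^p(L^1)}\le N^{-\gamma}\,\|\,|x|^{\gamma}V\chi_j^m\|_{\ell^p(L^1)}\le N^{-\gamma}2^{-m/p}\,\|\,|x|^{\gamma}V\|_{\ell^p(L^1)},
\end{equation*}
and feeding this into the computation in the proof of Theorem~\ref{thmbound} with $P=S$ and $f=V\chi_{[N,\infty)}$ — Lemma~\ref{Lemontoneh} furnishes exactly the hypotheses \eqref{Gaug131}--\eqref{Gaug132} of Lemma~\ref{LeSbound}, so for $1<p<2$ the operator $S$, and by Lemma~\ref{CKlemma} the maximal operator $S^{\ast}$, is bounded from $\ell^p(L^1)$ to $L^{p'}(K)$ — yields
\begin{equation*}
\Big\|\,\big\|\mathcal F(\cdot,E)\chi_{[N,\infty)}\big\|_{\mathfrak B^2}\,\Big\|_{L^{p'}(K,dE)}\ \le\ C\,N^{-\gamma}\qquad(N\ge0),
\end{equation*}
and, with $S^{\ast}$ in place of $S$, the analogous bound for the non-increasing tail--maximal functions $M_T(E)=\big\|\{S^{\ast}(V\chi_j^m\chi_{[T,\infty)})(E)\}\big\|_{\mathfrak B^2}$, which dominate $\|\mathcal F(\cdot,E)\chi_{[T',\infty)}\|_{\mathfrak B^2}$ for every $T'\ge T$. (I focus on $1<p<2$; the endpoint $p=2$ needs only a minor variant, approximating $p$ from below.)

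Next I would run a dyadic covering. Fix a small scale $\delta=2^{-j}$ and cut $K$ into $\asymp\delta^{-1}$ intervals $I$ of length $\delta$. For $E_0\in\Lambda_c$ I split $\mathcal F(\cdot,E)\chi_{[N,\infty)}$ into a localized piece $\mathcal A_\delta(E)=\|\mathcal F(\cdot,E)\chi_{[N,T)}\|_{\mathfrak B^2}$ and a tail, with $N$ and $T$ both $\asymp_c\delta^{-1}$ (the implicit constants small, depending only on $c$) and $N<T$. Since $|\partial_E h(x,E)|\lesssim 1+|x|$ uniformly on $K$ — read off from \eqref{Gdefh}, \eqref{Gequ2}, \eqref{Gaug133} exactly as in Lemma~\ref{Lemontoneh} — on the band $x\in[N,T)$ the phases $h(x,E)$ move by only $O(T\delta)$, which we have arranged to be as small as we like; a bilinear estimate over the short interval $I$ in the spirit of \eqref{Gaug193}--\eqref{Gaug195}, now with Lebesgue measure restricted to $I$, shows that $\mathcal A_\delta$ cannot concentrate, so that $\mathcal A_\delta\ge c/4$ on all of $I$ whenever $\mathcal A_\delta(E_0)\ge c/2$. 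Thus every such $I$ lies in $\{\mathcal A_\delta\ge c/4\}$, and since $\|\mathcal A_\delta\|_{L^{p'}(K)}\lesssim N^{-\gamma}\lesssim_c\delta^{\gamma}$ by the estimate above, Chebyshev permits at most $\lesssim_c\delta^{\gamma p'-1}$ of them; the energies of $\Lambda_c$ at which the localized piece never dominates are dealt with by iterating the construction down the dyadic scales, choosing the windows $[N,T)$ so as to tile $[N_0,\infty)$ and controlling the tail contributions by the same counting applied to the monotone $M_T$, which has $L^{p'}$-norm $\lesssim T^{-\gamma}$. Collecting all the short intervals produced this way, $\Lambda_c$ is covered at arbitrarily fine scales by $\lesssim_c\delta^{-(1-\gamma p')}$ intervals of length $\delta$, whence
\begin{equation*}
\mathcal H^{\beta}_{\delta}(\Lambda_c)\ \lesssim_c\ \delta^{-(1-\gamma p')}\cdot\delta^{\beta}\ =\ \delta^{\,\beta-(1-\gamma p')}\ \xrightarrow[\ \delta\to0\ ]{}\ 0,
\end{equation*}
because $\beta>1-\gamma p'$; hence $\mathcal H^{\beta}(\Lambda_c)=0$.

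The hard part is the non-concentration step. The norm $\|\cdot\|_{\mathfrak B^2}$ is only controlled in the mean $L^{p'}$ sense and may a priori be large at individual energies, and the tail cut-offs re-enter as $\mathfrak B^2$-norms of exactly the same kind, so one cannot naively peel off the tail; the way to see that $\mathcal A_\delta$ does not concentrate on $I$ is to estimate $\int_I|S(V\chi_j^m\chi_{[N,T)})(E)|^2\,dE$, summed against the $\mathfrak B^2$-weights, by rerunning the integration-by-parts argument of Lemma~\ref{LeSbound} over the short interval $I$ and exploiting the two-sided bounds \eqref{Gaug131new}--\eqref{Gaug132new} on $\partial_E[h(x,E)-h(y,E)]$. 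The delicate point is the scale matching: one must choose $N$ and $T$ (both $\asymp_c\delta^{-1}$) so that simultaneously the tail is $O(\delta^{\gamma})$ in $L^{p'}$ and the localized phases move by $o(1)$, which is exactly where the quantitative hypotheses of Lemma~\ref{LeSbound} are used, and the modified norms of a family of Banach spaces from Section~\ref{Newproof} are the natural device for carrying out the localized estimate.
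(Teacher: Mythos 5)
The paper's proof of this lemma is a one-line citation to \cite{mkcmp01}, so there is no detailed argument to match against; your reconstruction attempts the right sort of covering argument, and the $L^{p'}$-decay estimate $\bigl\|\,\|\mathcal{F}(\cdot,E)\chi_{[N,\infty)}\|_{\mathfrak{B}^2}\bigr\|_{L^{p'}(K)}\lesssim N^{-\gamma}$ (obtained by adapting the martingale to $|x|^{\gamma}V$ and rerunning Theorem~\ref{thmbound}) is correct and is the right starting point. The serious gap is the ``non-concentration'' step, which you flag as the hard part and then resolve incorrectly. You assert that a bilinear estimate in the spirit of \eqref{Gaug193}--\eqref{Gaug195}, with Lebesgue measure restricted to the short interval $I$, shows $\mathcal{A}_\delta\ge c/4$ on all of $I$ whenever $\mathcal{A}_\delta(E_0)\ge c/2$; but \eqref{Gaug193}--\eqref{Gaug195} deliver an \emph{upper} bound on $\int_I|Sg|^2\,dE$, and no $L^2$-type bound can certify a pointwise lower bound for a nonnegative quantity throughout an interval. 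What is actually needed is a pointwise modulus of continuity for $E\mapsto\mathcal{A}_\delta(E)=\|\mathcal{F}(\cdot,E)\chi_{[N,T)}\|_{\mathfrak{B}^2}$. The natural route uses $|\partial_E h(x,E)|\lesssim 1+x$ on $x<T$ to give $|\mathcal{A}_\delta(E)-\mathcal{A}_\delta(E_0)|\lesssim T\delta\cdot\|\,|V|\chi_{[N,T)}\|_{\mathfrak{B}^2}$; but $|V|\chi_{[N,T)}$ carries no oscillation, and its $\mathfrak{B}^2$-norm dominates its $L^1$-mass $\int_N^T|V|\,dx$, which in the regime $\gamma p'<1$ can grow like a positive power of $N$ (take $V(x)=(1+x)^{-s}$ with $\gamma+1/p<s<1$, which satisfies the hypotheses of Theorem~\ref{mainthm2}). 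With $N,T\asymp_c\delta^{-1}$ the continuity bound therefore blows up as $\delta\to0$, so the step does not follow from the estimates you cite.

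The ``iteration down the dyadic scales'' to handle energies where the localized piece never dominates is also not an argument yet. Subadditivity gives $\|\mathcal{F}(\cdot,E)\chi_{[N_0,\infty)}\|_{\mathfrak{B}^2}\le\sum_k\|\mathcal{F}(\cdot,E)\chi_{[N_k,N_{k+1})}\|_{\mathfrak{B}^2}$, but with infinitely many windows this produces no lower bound on any single piece, while with finitely many windows the outer cut-off $T$ stays bounded and cannot be taken $\asymp\delta^{-1}$ as the covering count requires; the reference to the modified norms of Section~\ref{Newproof} does not supply the needed pointwise regularity either, since those arguments only give $L^{p'}$-control. Combining these two gaps, the covering bound $\lesssim_c\delta^{\gamma p'-1}$ is asserted rather than established; producing it is exactly the quantitative content of \cite{mkcmp01} that this lemma defers to, and that is the part missing from your write-up.
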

\begin{proof}
The Lemma follows  from the  arguments   in \cite{mkcmp01}. Actually, the proof of Lemma \ref{Lehaus} is easier since the potential $V$ does not depend on $E$.
\end{proof}
\begin{proof}[\bf Proof of Theorem \ref{mainthm2}]
By Lemma \ref{Lehaus},
 we have for all $E$ except for a set of $\mathcal{H}^{\beta}$  zero,  for any $c>0$, there exists $N>0$ such that
\begin{equation}\label{Gau10}
||\mathcal{F}(\cdot,E)\chi_{[N,\infty)}||_{\mathfrak{B}^2} \leq c .
\end{equation}
Fix such $E$.
Let $N_0$ be  such that  \eqref{Gau10} holds for  $c=1$. By changing $x$ to $x-N_0$, we can assume $N_0=0$.
Hence, by \eqref{Gaug278}, one has
\begin{equation*}
  \sup_{I\subset \R_+}||\mathcal{F}(\cdot,E)\chi_{I}||_{\mathfrak{B}}\leq C .
\end{equation*}

For any $\varepsilon>0$, let $N(\varepsilon)$ be large enough so that \eqref{Gau10} holds for $c=\varepsilon$.
For any $M>N(\varepsilon)$, by \eqref{Gaug278} again,
\begin{eqnarray*}
  ||\mathcal{F}(\cdot,E)\chi_{[M,\infty)}||_{\mathfrak{B}} &\leq&  ||\mathcal{F}(\cdot,E)\chi_{[N(\varepsilon),\infty)}||_{\mathfrak{B}}+||\mathcal{F}(\cdot,E)\chi_{[N(\varepsilon),M)}||_{\mathfrak{B}} \\
 &\leq& ||\mathcal{F}(\cdot,E)\chi_{[N(\varepsilon),\infty)}||_{\mathfrak{B}}+C||\mathcal{F}(\cdot,E)\chi_{[N(\varepsilon),\infty)}||_{\mathfrak{B}^2}\\
 &\leq& C\varepsilon.
\end{eqnarray*}
This implies \eqref{Gaug276}.  Now the rest proof of Theorem \ref{mainthm2} follows from that of  Theorem \ref{mainthm1}.
\end{proof}
\section{Sharp estimates for  almost orthogonality among  generalized  Pr\"ufer angles}
In this section, we always assume for some $B>0$,
\begin{equation}\label{GVbound}
 |V(x)|\leq \frac{B}{1+x}.
\end{equation}
Without loss generality, we only consider the Dirichlet boundary condition.

For any spectral band $[a_n,b_n]$, let $c_n$ be the unique number such that $k(c_n)=\frac{\pi}{2}$.
Let $I$ be a closed interval in $(a_n,c_n)$ or $(c_n,b_n)$. All the energies $E$ in this section are in $I$ and the estimates are uniform with respect to $E\in I.$

For $z\in \C\backslash \R$, denote by $\tilde{v}(x,z)$ ($\tilde{u}(x,z)$) the solution of  $H_0+V$ with boundary condition $\tilde{v}(0,z)=1$ and $\tilde{v}^{\prime}(0,z)=0$ ($\tilde{u}(0,z)=0$ and $\tilde{u}^{\prime}(0,z)=1$).
The Weyl $m$-function $m(z)$ (well defined on $z\in \C\backslash \R$) is given by the unique complex number $m(z)$ so that  $\tilde{v}(x,z)+m(z)\tilde{u}(x,z)\in L^2(\R^+)$.
The spectral measure $\mu$ on $\R$, is given by the following formula, for $z\in \C\backslash \R$
\begin{equation*}
  m(z)=C+\int\left[\frac{1}{x-z}-\frac{x}{1+x^2}\right]d\mu(x),
\end{equation*}
where $ C\in \R$ is a constant.

Denote $\mu_{sc}$  by the singular continuous   component of $\mu$.
The spectral measure $\mu$ can determine the spectra of $H$. For example,
 $\sigma_{\rm sc}(H_0+V)=\emptyset $ if and only if $\mu_{sc}=0$.

Define  $\gamma(x,E)$  as a continuous function such that
\begin{equation}\label{Gfl}
  \varphi(x,E) =|\varphi(x,E)|e^{i \gamma(x,E)}.
\end{equation}

By \cite[Proposition 2.1]{KRS}, we know there exists some constant  $C>0$   such that
\begin{equation}\label{Ggammad}
   \frac{1}{C}\leq \gamma^\prime(x,E)\leq C.
\end{equation}

Define $\rho(x,E)\in \C$ by
\begin{equation}\label{Grho}
  \left(\begin{array}{c}
     u(x,E) \\
     u^{\prime}(x,E)
   \end{array}
   \right)=\frac{1}{2i}\left[\rho(x,E) \left(\begin{array}{c}
     \varphi(x,E) \\
     \varphi^{\prime}(x,E)
   \end{array}
   \right)-\overline{\rho}(x,E) \left(\begin{array}{c}
    \overline{ \varphi}(x,E) \\
     \overline{\varphi}^{\prime}(x,E)
   \end{array}
   \right)\right].
\end{equation}
We should mention that $u(x,E)$ is a real function and  $\varphi(x,E)$ is not.

Define $R(x,E)$ and $\theta(x,E)$ by
\begin{equation}\label{GR1}
R(x,E)=|\rho(x,E)|; \;\;\theta(x,E)=\gamma(x,E)+{\rm Arg}(\rho(x,E)).
\end{equation}

\begin{proposition}\cite{KRS}\label{thmformula}
Suppose $u$ is a real solution of \eqref{Gu}.
Then there exist real functions $ R(x)>0$ and $\theta(x)$ such that
\begin{equation}\label{GformulaR}
    [\ln R(x,E)]^\prime=\frac{V(x)}{2\gamma^\prime(x,E)} \sin2 \theta(x,E)
\end{equation}
and
\begin{equation}\label{Gformulatheta}
   \theta(x,E)^\prime=\gamma^\prime(x,E)-\frac{V(x)}{2\gamma^\prime(x,E)} \sin^2 \theta(x,E).
\end{equation}


\end{proposition}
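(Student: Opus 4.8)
The plan is to read \eqref{Grho} as a generalized, $\varphi$-modified Pr\"ufer transform: it expresses the vector $(u,u')^{T}$ in the basis of $\C^{2}$ whose columns are $(\varphi,\varphi')^{T}$ and $(\overline{\varphi},\overline{\varphi}')^{T}$, and then $R=|\rho|$ and $\theta=\gamma+\operatorname{Arg}\rho$ are the amplitude and (reference-shifted) phase of the complex coefficient $\rho$. Before the computation I would record two elementary facts. Since $\varphi$ and $\overline{\varphi}$ both solve $-w''+V_{0}w=Ew$ and $W(\overline{\varphi},\varphi)=i\omega$ with $\omega>0$ by \eqref{Gflight1}, the change-of-basis matrix with columns $(\varphi,\varphi')^{T}$ and $(\overline{\varphi},\overline{\varphi}')^{T}$ is invertible at every $x$, so $\rho$ is uniquely determined by \eqref{Grho} and is $C^{1}$ in $x$; moreover, writing $\varphi=|\varphi|e^{i\gamma}$ as in \eqref{Gfl} and combining with \eqref{Gwron} gives the identity $\omega=2\,\Im(\overline{\varphi}\varphi')=2|\varphi|^{2}\gamma'$, which together with \eqref{Ggammad} shows $|\varphi|$ is bounded above and below. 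If $u$ is a nontrivial real solution then $\rho$ is nowhere zero, so $R>0$ and $\operatorname{Arg}\rho$ (hence $\theta$) can be chosen continuously; this yields the existence part of the proposition.

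The core step is to derive a first-order differential equation for $\rho$ by variation of parameters. Differentiating the first component of \eqref{Grho} and comparing it with the second component forces the variation-of-parameters compatibility condition $\rho'\varphi-\overline{\rho}'\,\overline{\varphi}=0$. Differentiating the second component, substituting $\varphi''=(V_{0}-E)\varphi$, $\overline{\varphi}''=(V_{0}-E)\overline{\varphi}$ and $u''=(V_{0}+V-E)u$, and cancelling the common $(V_{0}-E)u$ contribution, one gets $\rho'\varphi'-\overline{\rho}'\,\overline{\varphi}'=2iVu$. Solving this $2\times2$ linear system for $\rho'$, the determinant being $W(\overline{\varphi},\varphi)=i\omega$, yields the clean formula
\[
  \rho'(x,E)=\frac{2}{\omega}\,V(x)\,u(x,E)\,\overline{\varphi}(x,E).
\]

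To finish I would pass to polar form. From \eqref{Grho}, \eqref{Gfl} and \eqref{GR1} one has $u=\tfrac{1}{2i}(\rho\varphi-\overline{\rho}\,\overline{\varphi})=\Im(\rho\varphi)=R|\varphi|\sin\theta$. Substituting this together with $\overline{\varphi}=|\varphi|e^{-i\gamma}$ and $\rho=Re^{i(\theta-\gamma)}$ into the formula for $\rho'$ and dividing by $\rho$, the factors $e^{\pm i\gamma}$ cancel, and after inserting $\omega=2|\varphi|^{2}\gamma'$ the quotient $\rho'/\rho$ is expressed purely in terms of $V$, $\gamma'$ and $\theta$. Since $(\ln R)'=(\ln|\rho|)'=\Re(\rho'/\rho)$ and $\theta'-\gamma'=(\operatorname{Arg}\rho)'=\Im(\rho'/\rho)$, taking the real and imaginary parts of this expression and simplifying with $2\sin\theta\cos\theta=\sin2\theta$ produces \eqref{GformulaR} and \eqref{Gformulatheta}.

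The computation is essentially routine; the two places that need some care are (i) the well-posedness of the transform — invertibility of the change-of-basis matrix, guaranteed by $\omega\neq0$, and non-vanishing of $\rho$ along a nontrivial solution, so that $R$ and $\theta$ are genuinely defined and $C^{1}$ — and (ii) the variation-of-parameters bookkeeping in the core step, where one must make sure the unperturbed part $(V_{0}-E)u$ cancels and only the perturbative term $Vu$ survives in the equation for $\rho'$. Everything else is trigonometric simplification, and the bounds \eqref{Ggammad} on $\gamma'$ are what keep the right-hand sides of \eqref{GformulaR}--\eqref{Gformulatheta} controlled uniformly on compact energy intervals, as is needed in what follows.
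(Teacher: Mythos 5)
The paper itself does not prove this proposition but cites it to \cite{KRS}; your proposal reconstructs what is essentially the standard derivation there, via variation of parameters for the coefficient $\rho$ and passage to polar coordinates. The structure of your argument is sound: the change-of-basis matrix is invertible because $W(\overline{\varphi},\varphi)=i\omega\neq0$; a nontrivial real solution $u$ forces $\rho\neq0$ (otherwise $u(x_0)=u'(x_0)=0$); the compatibility condition $\rho'\varphi-\overline{\rho}'\overline{\varphi}=0$ and the equation $\rho'\varphi'-\overline{\rho}'\overline{\varphi}'=2iVu$ are correctly obtained, and Cramer's rule with Wronskian $i\omega$ gives $\rho'=\tfrac{2}{\omega}Vu\overline{\varphi}$; and the identity $\omega=2\Im(\overline{\varphi}\varphi')=2|\varphi|^2\gamma'$ is correct.

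You should, however, carry the last step through explicitly, because it does not reproduce \eqref{Gformulatheta} as printed. With $u=R|\varphi|\sin\theta$, $\overline{\varphi}=|\varphi|e^{-i\gamma}$, $\rho=Re^{i(\theta-\gamma)}$ and $\omega=2|\varphi|^2\gamma'$, one finds
\begin{equation*}
\frac{\rho'}{\rho}=\frac{V\sin\theta}{\gamma'}\,e^{-i\theta},
\end{equation*}
whose real part gives $(\ln R)'=\tfrac{V}{2\gamma'}\sin 2\theta$, matching \eqref{GformulaR}, but whose imaginary part gives $\theta'-\gamma'=-\tfrac{V}{\gamma'}\sin^2\theta$, i.e.\ $\theta'=\gamma'-\tfrac{V}{\gamma'}\sin^2\theta$, \emph{without} the extra factor $\tfrac12$ that appears in the denominator of \eqref{Gformulatheta}. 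A sanity check in the free case $V_0\equiv0$, $\varphi=e^{ikx}$, where $u=R\sin\theta$, $u'=kR\cos\theta$, indeed yields the classical modified-Pr\"ufer equation $\theta'=k-\tfrac{V}{k}\sin^2\theta$. So your method is correct, and \eqref{Gformulatheta} as typeset in the paper appears to carry a spurious factor of $2$; you should flag this discrepancy rather than asserting that the computation ``produces \eqref{Gformulatheta}''. (This typo is harmless for the later use of the proposition, e.g.\ \eqref{Gfequ1}--\eqref{Gfequ2}, since only the $O(1/(1+x))$ size of the correction term is used.)
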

In the following, we will  prove the almost orthogonality among Pr\"ufer angles in a sharp way.
Before the proof, some preparations are necessary.
\begin{lemma}\cite{Kiselev05}\label{Leper1}
Suppose the function $G(x)$ satisfies $|G^{\prime}(x)|=\frac{O(1)}{1+x}$ and $\gamma\neq 0$. Then
\begin{equation*}
    \left|\int_{0}^L\frac{\sin(\gamma x+G(x))}{1+x}dx\right|\leq O(1)\log|\gamma^{-1}|+O(1).
\end{equation*}

\end{lemma}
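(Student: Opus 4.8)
The plan is to estimate the oscillatory integral $\int_0^L \frac{\sin(\gamma x + G(x))}{1+x}\,dx$ by splitting the range of integration at the scale $x \sim |\gamma|^{-1}$, which is exactly the point where the phase $\gamma x$ has rotated by a bounded amount. First I would write $\sin(\gamma x + G(x)) = \Im\, e^{i(\gamma x + G(x))}$ and split $\int_0^L = \int_0^{\min(L,|\gamma|^{-1})} + \int_{\min(L,|\gamma|^{-1})}^{L}$, the second integral being vacuous if $L \le |\gamma|^{-1}$. On the near piece $[0,\min(L,|\gamma|^{-1})]$ I would simply bound the integrand in absolute value by $\frac{1}{1+x}$, so that
\begin{equation*}
\left|\int_0^{\min(L,|\gamma|^{-1})}\frac{\sin(\gamma x+G(x))}{1+x}\,dx\right|\le \int_0^{|\gamma|^{-1}}\frac{dx}{1+x}=\log(1+|\gamma|^{-1})\le O(1)\log|\gamma|^{-1}+O(1).
\end{equation*}

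For the far piece $\int_{|\gamma|^{-1}}^{L}\frac{e^{i(\gamma x+G(x))}}{1+x}\,dx$ (assuming $L>|\gamma|^{-1}$) the plan is to integrate by parts, using $e^{i(\gamma x+G(x))} = \frac{1}{i(\gamma+G'(x))}\frac{d}{dx}e^{i(\gamma x+G(x))}$. The denominator $\gamma + G'(x)$ is comparable to $\gamma$ on this range: since $|G'(x)| = \frac{O(1)}{1+x} \le \frac{O(1)}{1+|\gamma|^{-1}} = O(|\gamma|)$ for $x \ge |\gamma|^{-1}$, we have $|\gamma + G'(x)| \ge \tfrac{1}{2}|\gamma|$ once $|\gamma|$ is small enough; for $|\gamma|$ bounded away from $0$ the whole integral is trivially $O(\log|\gamma|^{-1})+O(1) = O(1)$, so that regime is harmless and I would dispose of it first. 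After one integration by parts the boundary terms are each $\le \frac{1}{|\gamma+G'|}\cdot\frac{1}{1+x}$ evaluated at the endpoints, hence $\le \frac{2}{|\gamma|}\cdot|\gamma| = O(1)$ at $x=|\gamma|^{-1}$ and even smaller at $x=L$. The remaining integral is
\begin{equation*}
\int_{|\gamma|^{-1}}^{L} e^{i(\gamma x+G(x))}\,\frac{d}{dx}\!\left(\frac{1}{i(\gamma+G'(x))(1+x)}\right)dx,
\end{equation*}
whose integrand is bounded in absolute value by $\frac{|G''(x)|}{|\gamma+G'(x)|^2(1+x)} + \frac{1}{|\gamma+G'(x)|(1+x)^2}$; the second term integrates to $O(\frac{1}{|\gamma|}\cdot|\gamma|) = O(1)$, and the first term is controlled provided one knows $|G''(x)| = \frac{O(1)}{(1+x)^2}$, which combined with $|\gamma+G'|\gtrsim|\gamma|$ and $x \ge |\gamma|^{-1}$ again gives an $O(1)$ contribution.

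The main obstacle is that the hypothesis as stated only supplies $|G'(x)| = \frac{O(1)}{1+x}$ and says nothing about $G''$, so the integration-by-parts argument above cannot quite be run as written. The fix I expect to use is to avoid differentiating $G'$ altogether: instead of integrating by parts against $e^{i(\gamma x + G(x))}$, I would compare it to the pure oscillation $e^{i\gamma x}$, writing $e^{i(\gamma x+G(x))} = e^{i\gamma x}(1 + (e^{iG(x)}-1))$ is not obviously better, so more robustly I would decompose the far range dyadically into intervals $[2^k|\gamma|^{-1}, 2^{k+1}|\gamma|^{-1}]$ and on each such interval use that $G$ has bounded total variation there — indeed $\int_{2^k|\gamma|^{-1}}^{2^{k+1}|\gamma|^{-1}}|G'|\,dx = O(1)$ uniformly in $k$ — so that a second–mean–value / van der Corput estimate for $\int \frac{e^{i\gamma x}}{1+x}\,e^{iG(x)}\,dx$ applies with constants independent of $k$, yielding a bound $O(2^{-k})$ per block (gaining from the $\frac{1}{1+x}\sim \frac{|\gamma|}{2^k}$ weight and the oscillation $e^{i\gamma x}$ of period comparable to the block length), and summing the geometric series in $k$ gives $O(1)$. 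Adding the near-piece bound $O(1)\log|\gamma|^{-1}$ completes the proof. This matches the statement of Lemma~\ref{Leper1} and the sketch is exactly the standard one behind the logarithmic bound on the antiderivative of $\frac{\sin(\gamma x)}{x}$.
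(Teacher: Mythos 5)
The paper does not prove this lemma --- it is cited from \cite{Kiselev05} --- so there is no in-paper argument to compare against; I am reviewing your proposal on its own merits. Your final dyadic scheme is correct: split at $x\sim|\gamma|^{-1}$, bound the near piece by $\log(1+|\gamma|^{-1})$, and on each far block $[2^k|\gamma|^{-1},2^{k+1}|\gamma|^{-1}]$ use $\int|G'|\lesssim 1$ to integrate $e^{i\gamma x}e^{iG}$ by parts against $e^{i\gamma x}$, obtaining a primitive of size $O(|\gamma|^{-1})$ uniformly on the block; combined with the weight bound $\tfrac{1}{1+x}\lesssim|\gamma|2^{-k}$ and an Abel summation this gives $O(2^{-k})$ per block, which sums. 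One slip: the remark that for $|\gamma|$ bounded away from $0$ the integral is ``trivially $O(1)$'' is not literally true, since $\int_0^L\tfrac{dx}{1+x}$ is unbounded in $L$ and the $L$-uniformity always requires the oscillation; but your block argument already supplies the bound in that regime as well, so this is an expository inaccuracy rather than a logical gap.

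The dyadic decomposition is, however, heavier machinery than needed, and you came very close to the simpler repair. Rather than integrating by parts against the full phase $e^{i(\gamma x+G(x))}$ (which brings in $G''$), integrate by parts against $e^{i\gamma x}$ alone, keeping $\tfrac{e^{iG(x)}}{1+x}$ as the amplitude. On $[\,|\gamma|^{-1},L\,]$,
\[
\int_{|\gamma|^{-1}}^{L}\frac{e^{i\gamma x}e^{iG(x)}}{1+x}\,dx
=\Bigl[\frac{e^{i\gamma x}e^{iG(x)}}{i\gamma(1+x)}\Bigr]_{|\gamma|^{-1}}^{L}
-\frac{1}{i\gamma}\int_{|\gamma|^{-1}}^{L} e^{i\gamma x}\Bigl(\frac{iG'(x)e^{iG(x)}}{1+x}-\frac{e^{iG(x)}}{(1+x)^{2}}\Bigr)dx .
\]
The boundary terms are at most $\tfrac{1}{|\gamma|(1+|\gamma|^{-1})}\le 1$, and since $|G'(x)|\le\tfrac{C}{1+x}$ the remaining integrand is $\le\tfrac{C+1}{|\gamma|(1+x)^{2}}$, whose integral over $[\,|\gamma|^{-1},\infty)$ is $\le C+1$. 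This uses only the hypothesis on $G'$, works uniformly for all $\gamma\neq 0$, and dispenses with the dyadic blocks and the second mean value theorem; together with the near-piece bound it yields $O(1)\log|\gamma|^{-1}+O(1)$ directly.
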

\begin{lemma}\cite{ld}\label{Leper2}
Suppose the function $G(x)$ satisfies $|G^{\prime}(x)|=\frac{O(1)}{1+x}$ and $\gamma\neq 0$. Then
\begin{equation*}
    \left|\int_{0}^L\frac{\sin(\gamma x+G(x))}{1+x}dx\right|\leq \frac{O(1)}{|\gamma|} +O(1).
\end{equation*}

\end{lemma}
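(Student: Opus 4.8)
The plan is to bound the oscillatory integral by exploiting a single integration by parts, using the fact that the phase $\gamma x + G(x)$ has derivative bounded below by roughly $|\gamma|$ once $x$ is not too large, while the amplitude $\tfrac{1}{1+x}$ is slowly varying. First I would write $\sin(\gamma x + G(x)) = \Im\, e^{i(\gamma x + G(x))}$ and split the integral $\int_0^L$ at a cutoff point $x_0 \sim |\gamma|^{-1}$ (if $L \le x_0$ there is nothing to do, since the trivial bound $\int_0^{x_0} \tfrac{dx}{1+x} \le \log(1+x_0) \lesssim \log|\gamma^{-1}| + O(1)$ already does not suffice — so in fact the right cutoff is just a constant, and the genuine work is on the whole range). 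Let me reconsider: the correct strategy is to integrate by parts on all of $[0,L]$ against the phase factor. Write
\begin{equation*}
\int_0^L \frac{e^{i(\gamma x + G(x))}}{1+x}\,dx = \int_0^L \frac{1}{(1+x)\,i(\gamma + G'(x))}\,\frac{d}{dx}\left(e^{i(\gamma x + G(x))}\right)dx.
\end{equation*}
Since $|G'(x)| = \tfrac{O(1)}{1+x} \to 0$, for $x$ larger than some absolute constant $x_1$ we have $|\gamma + G'(x)| \ge |\gamma|/2$, so on $[x_1, L]$ the weight $\tfrac{1}{(1+x)|\gamma + G'(x)|}$ is $O(1)/(|\gamma|(1+x))$; on $[0, x_1]$ the integral is trivially $O(1)$ (bounded amplitude, bounded length). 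Integrating by parts on $[x_1,L]$ produces boundary terms of size $O(1)/(|\gamma|(1+x))$ evaluated at the endpoints, hence $O(1)/|\gamma|$, plus the integral of the derivative of the weight.

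The key estimate is then that the derivative of $\tfrac{1}{(1+x)\,i(\gamma + G'(x))}$ is integrable with the right constant. Differentiating, one gets two terms: one from $\tfrac{d}{dx}\tfrac{1}{1+x} = -\tfrac{1}{(1+x)^2}$, contributing $\int_{x_1}^L \tfrac{O(1)}{|\gamma|(1+x)^2}dx = O(1)/|\gamma|$; and one from $\tfrac{d}{dx}\tfrac{1}{\gamma + G'(x)} = -\tfrac{G''(x)}{(\gamma+G'(x))^2}$. Here I would need $|G''(x)| = \tfrac{O(1)}{(1+x)^2}$ (or at least that $G''$ is integrable against $\tfrac{1}{1+x}$) — this is the point where the hypothesis must be read carefully; the statement as quoted only assumes $|G'(x)| = \tfrac{O(1)}{1+x}$, but in the intended applications $G$ comes from a Prüfer-type phase correction whose higher derivatives decay accordingly, so I would either invoke that structure or note that the lemma is applied with such $G$. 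Granting $|G''(x)| = \tfrac{O(1)}{(1+x)^2}$, this term is $\int_{x_1}^L \tfrac{O(1)}{(1+x)^2 |\gamma|^2}dx = O(1)/|\gamma|^2 \le O(1)/|\gamma|$ when $|\gamma|$ is bounded below, or is absorbed into $O(1)$ when $|\gamma|$ is large. Collecting all contributions and taking imaginary parts gives $\left|\int_0^L \tfrac{\sin(\gamma x + G(x))}{1+x}dx\right| \le \tfrac{O(1)}{|\gamma|} + O(1)$.

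The main obstacle I anticipate is bookkeeping the two regimes cleanly: one wants the bound to be uniform, so the $O(1)$ term must genuinely absorb the contribution from $[0,x_1]$ and from large $|\gamma|$, while the $\tfrac{O(1)}{|\gamma|}$ term dominates for small $|\gamma|$. A secondary subtlety is justifying that $\gamma + G'(x)$ does not vanish on the relevant range — this is exactly why the cutoff $x_1$ (chosen so that $|G'(x)| \le |\gamma|/2$ for $x \ge x_1$, which is possible because $|G'(x)| \lesssim 1/(1+x)$ uniformly, though $x_1$ then depends on $|\gamma|$) must be handled with care: if $x_1$ grows like $1/|\gamma|$, the trivial bound on $[0,x_1]$ is $\log(1+x_1) \sim \log|\gamma^{-1}|$, which is the bound of Lemma \ref{Leper1}, not the sharper one claimed here. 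Resolving this correctly means one must integrate by parts on all of $[0,L]$ at once, regularizing the denominator as $\gamma + G'(x)$ rather than splitting, and using that $|\gamma + G'(x)| \gtrsim |\gamma|$ fails only on a set where instead one uses that the amplitude $\tfrac{1}{1+x}$ is already small — a more delicate argument balancing the size of $\gamma + G'$ against $1/(1+x)$. That balancing is the real content of the proof and where I would spend the most effort; it is presumably carried out in detail in \cite{ld}, and I would follow that treatment.
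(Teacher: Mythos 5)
The paper does not prove this lemma; it is cited to \cite{ld}. That said, your proposal as written contains a genuine gap, and you yourself flag both halves of it: integrating by parts against the full phase $e^{i(\gamma x + G(x))}$ forces you to (a) differentiate $1/(\gamma + G'(x))$, which needs control of $G''$ that is not in the hypothesis, and (b) bound $|\gamma + G'(x)|$ from below, which fails for $x \lesssim 1/|\gamma|$ because $|G'(x)|$ is only $O(1)/(1+x)$. You correctly diagnose that a splitting at $x_1 \sim 1/|\gamma|$ reproduces only the $\log|\gamma|^{-1}$ bound of Lemma \ref{Leper1}, not the sharper $O(1)/|\gamma|$ bound claimed here, and you then defer the "balancing" to \cite{ld} without supplying the argument. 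As submitted, the proof is therefore incomplete.

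The resolution is simpler than a delicate balancing and sidesteps both problems: integrate by parts against the pure linear oscillation $e^{i\gamma x}$ alone, keeping $e^{iG(x)}/(1+x)$ as the amplitude. Writing the integral as $\int_0^L \frac{e^{iG(x)}}{1+x}\cdot\frac{1}{i\gamma}\frac{d}{dx}e^{i\gamma x}\,dx$ and integrating by parts, the boundary terms are $O(1)/|\gamma|$ since $|e^{iG}|=1$, and the remaining integrand involves
\begin{equation*}
\frac{d}{dx}\!\left(\frac{e^{iG(x)}}{1+x}\right)=e^{iG(x)}\left(\frac{iG'(x)}{1+x}-\frac{1}{(1+x)^2}\right),
\end{equation*}
whose modulus is $O(1)/(1+x)^2$ using only the hypothesis $|G'(x)|=O(1)/(1+x)$. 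Thus the integrated term is bounded by $\frac{1}{|\gamma|}\int_0^\infty O(1)(1+x)^{-2}\,dx=O(1)/|\gamma|$, and the whole quantity is $O(1)/|\gamma|$ (even without the $+O(1)$), with no second derivative of $G$, no small-denominator issue, and no regime splitting. The structural point your proposal misses is to factor the phase as a fixed linear oscillation times a slowly varying unit-modulus amplitude, rather than treating $\gamma x + G(x)$ as a single stationary phase.
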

\begin{lemma}\label{Leper4}
Suppose $0<\gamma<2\pi$ and the function $G(x)$ satisfies $|G^{\prime}(x)|=\frac{O(1)}{1+x}$. Then
\begin{equation*}
    \left|\int_{0}^Le^{2\pi i kx}\frac{\sin(\gamma x+G(x))}{1+x}dx\right|\leq O(1)\log  \gamma^{-1}+O(1)\log(2\pi -\gamma)^{-1} +O(1),
\end{equation*}
for $k=-1,0,1$,
and
\begin{equation*}
    \left|\int_{0}^Le^{2\pi i kx}\frac{\sin(\gamma x+G(x))}{1+x}dx\right|\leq \frac{O(1)}{|k|} +O(1),
\end{equation*}
for $k\in \Z\backslash \{-1,0,1\}$.
\end{lemma}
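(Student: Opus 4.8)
The plan is to prove Lemma~\ref{Leper4} by reducing the oscillatory integral $\int_0^L e^{2\pi i kx}\,\frac{\sin(\gamma x+G(x))}{1+x}\,dx$ to integrals of the form treated in Lemmas~\ref{Leper1} and~\ref{Leper2}. First I would write $\sin(\gamma x+G(x))=\frac{1}{2i}\big(e^{i(\gamma x+G(x))}-e^{-i(\gamma x+G(x))}\big)$ and expand the factor $e^{2\pi i kx}$ into the exponential, so that
\begin{equation*}
\int_0^L e^{2\pi i kx}\,\frac{\sin(\gamma x+G(x))}{1+x}\,dx
=\frac{1}{2i}\int_0^L\frac{e^{i((\gamma+2\pi k)x+G(x))}}{1+x}\,dx
-\frac{1}{2i}\int_0^L\frac{e^{i((2\pi k-\gamma)x+G(x))}}{1+x}\,dx.
\end{equation*}
Each of the two pieces is of the shape $\int_0^L \frac{e^{i(\delta x+G(x))}}{1+x}\,dx$ for a real frequency $\delta$, and taking real and imaginary parts reduces it to $\int_0^L\frac{\sin(\delta x+G(x))}{1+x}\,dx$ and $\int_0^L\frac{\cos(\delta x+G(x))}{1+x}\,dx$; the cosine case is handled exactly as in Lemmas~\ref{Leper1},~\ref{Leper2} after shifting the phase $G$ by a constant (equivalently, replacing $G$ by $G+\pi/2$, which still satisfies $|G'|=O(1)/(1+x)$). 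So I only need a good bound on $\big|\int_0^L\frac{\sin(\delta x+G(x))}{1+x}\,dx\big|$ in terms of $\delta$, and then specialize $\delta\in\{\gamma+2\pi k,\ 2\pi k-\gamma\}$.

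Next I would treat the two regimes of $k$ separately, matching the two asserted bounds. For $k\in\{-1,0,1\}$ the relevant frequencies $\delta=\gamma+2\pi k$ and $\delta=2\pi k-\gamma$ range over $\{\pm\gamma,\ \pm(2\pi-\gamma),\ \pm(2\pi+\gamma)\}$ (not all occur for each $k$, but each such $\delta$ is one of these). Since $0<\gamma<2\pi$, the smallest possible $|\delta|$ among these is $\min(\gamma,\,2\pi-\gamma)$, while $|2\pi\pm\gamma|\ge 2\pi-\gamma$ and is bounded below by an absolute constant; applying Lemma~\ref{Leper1} to each piece gives the bound $O(1)\log\gamma^{-1}+O(1)\log(2\pi-\gamma)^{-1}+O(1)$, because $\log|\delta|^{-1}\le \log\gamma^{-1}+\log(2\pi-\gamma)^{-1}+O(1)$ for every $\delta$ in the list (the $O(1)$ absorbing the harmless $|2\pi\pm\gamma|$ contributions and sign changes, using that $\big|\int_0^L\frac{\sin(-\delta x+G)}{1+x}dx\big|$ is bounded the same way as with $+\delta$ after replacing $G$ by $-G$). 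For $k\in\Z\setminus\{-1,0,1\}$, the frequencies are $\delta=\gamma+2\pi k$ and $\delta=2\pi k-\gamma$, and since $0<\gamma<2\pi$ we have $|\delta|\ge 2\pi|k|-\gamma>2\pi(|k|-1)\ge \tfrac{2\pi}{3}|k|$ for $|k|\ge 2$ (indeed $|k|-1\ge |k|/2\ge |k|/3$); hence $|\delta|\gtrsim |k|$. Applying Lemma~\ref{Leper2} to each of the two pieces then yields $\big|\int_0^L e^{2\pi i kx}\frac{\sin(\gamma x+G)}{1+x}dx\big|\le \frac{O(1)}{|\delta|}+O(1)\le \frac{O(1)}{|k|}+O(1)$, as claimed.

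The only genuine subtlety — and the step I would be most careful about — is that Lemmas~\ref{Leper1} and~\ref{Leper2} are stated for $\sin(\gamma x+G(x))$ with the \emph{same} perturbation function $G$, whereas after the reduction I need them for $\sin(\delta x+G(x))$ and $\cos(\delta x + G(x))$ with various real $\delta$ and with $G$ possibly replaced by $-G$ or $G\pm\pi/2$. All of these substitutions preserve the hypothesis $|G'(x)|=O(1)/(1+x)$, and $\cos(\delta x+G)=\sin(\delta x+(G+\pi/2))$, $\sin(-\delta x+G)=-\sin(\delta x+(-G))$, so each arising integral is, up to sign, exactly an instance of the cited lemmas with the same decay constant; I would simply spell this reduction out once at the start. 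The rest is the elementary frequency bookkeeping described above. I would also note that, as in Lemmas~\ref{Leper1}--\ref{Leper2}, all bounds are uniform in $L$, which is what is needed for the later application.
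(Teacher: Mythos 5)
Your proof is correct and takes exactly the approach the paper intends: the paper's proof of Lemma~\ref{Leper4} is the one-line remark that it ``follows from Lemmas~\ref{Leper1}, \ref{Leper2} and trigonometric identity,'' and your reduction via $e^{2\pi ikx}\sin(\gamma x+G)=\tfrac{1}{2i}\bigl(e^{i((\gamma+2\pi k)x+G)}-e^{i((2\pi k-\gamma)x-G)}\bigr)$ together with the frequency bookkeeping $\delta\in\{\pm\gamma,\pm(2\pi\pm\gamma)\}$ (resp.\ $|\delta|\gtrsim|k|$ for $|k|\ge 2$) is precisely that. One small slip: the second exponential in your displayed identity should carry $-G(x)$ rather than $+G(x)$; since you already note that the lemmas apply equally with $G$ replaced by $-G$, this does not affect the argument, but the display should be corrected for consistency.
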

\begin{proof}
The proof follows from Lemmas \ref{Leper1}, \ref{Leper2} and trigonometric identity.
\end{proof}
Denote by $\T=\R/\Z.$
\begin{theorem}\label{Leper3}
Suppose $f\in L^2(\T)$. 
Then the following estimates hold
\begin{equation}\label{Gcons4}
  \left|  \int _{0}^L f(x)\frac{\cos 4 \theta(x,E)}{1+x}dx\right |=  O(1),
\end{equation}
and
\begin{equation}\label{Gcons5}
    \left| \int_{0}^L f(x)\frac{\sin 2 \theta(x,E_1) \sin 2 \theta(x,E_2)}{1+x}dx\right |=O(1)\log\frac{1}{|E_1-E_2|}+O(1),
\end{equation}
where $O(1)$ only depends on $ I $, $B$, $f$ and $V_0$.
\end{theorem}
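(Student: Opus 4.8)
The plan is to reduce both estimates to the oscillatory integral bounds of Lemmas \ref{Leper4}, \ref{Leper1} and \ref{Leper2} by expanding $f\in L^2(\T)$ in its Fourier series. Write $f(x)=\sum_{k\in\Z}\hat f(k)e^{2\pi i kx}$. The starting point is to understand $\theta(x,E)$ well enough to extract the oscillatory behavior: from \eqref{Gformulatheta} we have $\theta'(x,E)=\gamma'(x,E)-\frac{V(x)}{2\gamma'(x,E)}\sin^2\theta(x,E)$, and since $|V(x)|\le B/(1+x)$ the correction term is $O(1/(1+x))$ in $L^1$-sense after integration; more precisely $\theta(x,E)=\int_0^x\gamma'(t,E)\,dt + G(x,E)$ where, using \eqref{Gequ2}-type periodicity of the Floquet solution, $\int_0^x\gamma'(t,E)\,dt = k(E)x + (\text{1-periodic in }x) + o(1)$, and $G(x,E)$ absorbs both the periodic piece and the genuinely decaying correction with $|\partial_x G|=O(1/(1+x))$ once we differentiate. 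Thus $2\theta(x,E) = 2k(E)x + \tilde G(x,E)$ and $4\theta(x,E)=4k(E)x+2\tilde G(x,E)$ with $|\partial_x\tilde G|=O(1/(1+x))$, where I must be careful that the implicit constants are uniform for $E\in I$ (this is where $I\subset(a_n,c_n)$ or $(c_n,b_n)$ matters, so that $k(E)$ stays away from $0$ and $\pi$).

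For \eqref{Gcons4}: writing $\cos 4\theta(x,E) = \tfrac12(e^{4i\theta}+e^{-4i\theta})$ and inserting the Fourier expansion of $f$, each term becomes $\hat f(k)\int_0^L e^{2\pi i k x}\frac{e^{\pm i(4k(E)x+2\tilde G)}}{1+x}\,dx$, i.e. an integral of the form $\int_0^L e^{2\pi i k x}\frac{\sin(\gamma x + G(x))}{1+x}dx$ (after combining $\cos$ into $\sin$'s of shifted phase) with $\gamma = \pm 4k(E) \bmod 2\pi$. The key point is that because $4\theta$ genuinely oscillates — $4k(E)$ is bounded away from $0$ and from multiples of $2\pi$ for $E\in I$ on either subinterval (again using $k(E)\in(0,\pi/2)\cup(\pi/2,\pi)$ so $4k(E)\notin 2\pi\Z$) — the relevant $\gamma-2\pi\ell$ is bounded below uniformly, so Lemma \ref{Leper4} gives $O(1)$ for $k\in\{-1,0,1\}$ and $O(1)/|k|$ for $|k|\ge 2$. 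Summing against $|\hat f(k)|$ and applying Cauchy–Schwarz, $\sum_k |\hat f(k)|/(1+|k|) \le \|f\|_{L^2(\T)}\big(\sum_k (1+|k|)^{-2}\big)^{1/2} = O(1)$, which proves \eqref{Gcons4}.

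For \eqref{Gcons5}: use the product formula $\sin 2\theta(x,E_1)\sin 2\theta(x,E_2) = \tfrac12[\cos(2\theta(x,E_1)-2\theta(x,E_2)) - \cos(2\theta(x,E_1)+2\theta(x,E_2))]$. The sum term has phase $2k(E_1)x+2k(E_2)x+\cdots$ which is bounded away from $0$ mod $2\pi$, so exactly as above it contributes $O(1)$. The difference term has phase $2(k(E_1)-k(E_2))x + (\tilde G_1-\tilde G_2)$; here $\gamma = 2(k(E_1)-k(E_2))$ can be arbitrarily small as $E_1\to E_2$, and by \eqref{Gaug133} (i.e. $dk/dE\neq 0$) we have $|k(E_1)-k(E_2)|\asymp |E_1-E_2|$ uniformly on $I$. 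Inserting the Fourier series of $f$ and applying Lemma \ref{Leper4}: the $k=0$ mode gives $\int_0^L\frac{\sin(\gamma x+G)}{1+x}dx$ which by Lemma \ref{Leper1} is $O(1)\log|\gamma|^{-1}+O(1) = O(1)\log\frac{1}{|E_1-E_2|}+O(1)$; the modes $k=\pm1$ are handled by the $\log$ bound in Lemma \ref{Leper4} (now $\log\gamma^{-1}$ and $\log(2\pi-\gamma)^{-1}$, the latter being $O(1)$); and modes $|k|\ge2$ give $O(1)/|k|$, summing to $O(1)$ after Cauchy–Schwarz against $\|f\|_{L^2(\T)}$. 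Combining, \eqref{Gcons5} follows.

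The main obstacle I anticipate is the rigorous justification that $2\theta(x,E)$ and $4\theta(x,E)$ decompose as a linear phase $2k(E)x$ (resp. $4k(E)x$) plus a remainder $G$ with $|G'(x)|=O(1/(1+x))$ \emph{uniformly in $E\in I$}, and that the remainder correctly merges the periodic part of $\gamma(x,E)$ with the Fourier modes of $f$ (one really wants to write $f(x)\cos 4\theta(x,E)$ as a sum over $k$ of pure oscillatory integrals with phases $2\pi k x \pm 4k(E)x + G$, so the periodicity of the Floquet data and of $f$ must be compatible). Controlling the $V$-dependent correction to $\theta$ via \eqref{Gformulatheta} and \eqref{GVbound}, and checking that all the $O(1)$'s depend only on $I$, $B$, $f$, $V_0$ and not on $L$, $E$, $E_1$, $E_2$, is the technical heart; everything else is bookkeeping on top of Lemmas \ref{Leper1}, \ref{Leper2}, \ref{Leper4}.
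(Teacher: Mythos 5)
Your strategy (Fourier-analyze the periodic structure, then invoke Lemmas~\ref{Leper1},~\ref{Leper2},~\ref{Leper4}) is the same as the paper's, and your handling of the frequency bookkeeping — $4k(E)$ bounded away from $0$ and $2\pi \bmod 2\pi$ on $I$, $k(E_1)+k(E_2)$ bounded away from $0 \bmod \pi$, $|k(E_1)-k(E_2)|\asymp|E_1-E_2|$ via \eqref{Gaug133} — is correct. But the main body of your argument contains a structural error that you yourself flag in the last paragraph and never resolve: you cannot write $2\theta(x,E)=2k(E)x+\tilde G(x,E)$ with $|\partial_x\tilde G|=O(1/(1+x))$. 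The decomposition \eqref{Gfequ3} gives $\gamma(x,E)=k(E)x+\eta(x,E)$ with $\eta$ $1$-periodic, and the derivative of that periodic piece is $O(1)$, not $O(1/(1+x))$. So if $\tilde G$ ``absorbs the periodic piece,'' Lemmas~\ref{Leper1},~\ref{Leper2},~\ref{Leper4} simply do not apply — their hypothesis $|G'(x)|=O(1/(1+x))$ fails — and your subsequent step ``Fourier expand $f$ alone, then apply Lemma~\ref{Leper4} termwise'' collapses.

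The fix, which is what the paper actually does, is to peel off the periodic part of $\theta$ \emph{before} the Fourier expansion and to expand the product, not $f$ alone. Concretely, set $\tilde\theta(x,E)=\theta(x,E)-\eta(x,E)$ so that (e.g.\ for the difference term) $\partial_x\bigl(2\tilde\theta(x,E_1)-2\tilde\theta(x,E_2)\bigr)=2(k(E_1)-k(E_2))+O(1/(1+x))$, which \emph{is} in the form required by the lemmas. The product-to-sum identity then leaves coefficient functions $\cos\bigl(2\eta(x,E_1)-2\eta(x,E_2)\bigr)$ and $\sin\bigl(2\eta(x,E_1)-2\eta(x,E_2)\bigr)$ which are $1$-periodic and bounded; multiplying by $f\in L^2(\T)$ gives an $L^2(\T)$ function, and it is \emph{this product} that one expands in Fourier series (so the square-summable coefficients come from $f\cdot(\text{periodic part of the phase})$, not from $f$ alone). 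Each Fourier mode then produces a pure oscillatory integral $\int_0^L e^{2\pi ikx}\,\frac{\sin(\gamma x+G(x))}{1+x}\,dx$ with $|G'|=O(1/(1+x))$, to which Lemma~\ref{Leper4} applies, and Cauchy–Schwarz against the $\ell^2$ Fourier coefficients closes the argument exactly as you indicate. The same device, applied to $\cos 4\theta=\cos(4k(E)x+4\eta+\cdots)$ with the $1$-periodic function $f(x)e^{\pm 4i\eta(x,E)}$, gives \eqref{Gcons4}. So you have the right ingredients and the right obstacle in view, but the step ``absorb $\eta$ into a decaying remainder'' is false, and the repair — expand $f$ times the periodic coefficient, not $f$ alone — is precisely the missing content.
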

\begin{proof}
We only give the proof of \eqref{Gcons5}.
 First, using \eqref{Gformulatheta} and \eqref{GVbound} we have the differential equations of $\theta(x,E_1)$ and $\theta(x,E_2)$,
\begin{equation}\label{Gfequ1}
\theta^\prime(x,E_1)=\gamma^\prime(x,E_1)+\frac{O(1)}{1+x},
\end{equation}
and
\begin{equation}\label{Gfequ2}
\theta^\prime(x,E_2)=\gamma^\prime(x,E_2)+\frac{O(1)}{1+x}.
\end{equation}
By \eqref{Gdefj} and \eqref{Gfl}, we have
\begin{equation}\label{Gfequ3}
    \gamma (x,E)=k(E)x+\eta(x,E),
\end{equation}
where $\eta(x,E)\mod 2\pi$ is a function that is $1$-periodic in $x$.

By basic trigonometry,
\begin{equation}
-2\sin 2\theta(x,{E}_1)\sin 2\theta(x, {E}_2)=\cos (2\theta(x,{E}_1)+2\theta(x,{E}_2))-\cos (2\theta(x,{E}_1)-2\theta(x,{E}_2)),
\end{equation}
 it suffices  to bound
\begin{equation*}
\int_{0}^{L}f(x)\frac{\cos (2\theta(x,{E}_1)\pm2\theta(x,{E}_2))}{  1+x} dx.
\end{equation*}
Without of loss of generality,  we only  bound
\begin{equation*}
\int_{0}^{L}f(x)\frac{\cos (2\theta(x,{E}_1)-\theta(x,{E}_2))}{  1+x} dx.
\end{equation*}
By \eqref{Gfequ1}, \eqref{Gfequ2} and \eqref{Gfequ3}, we have
\begin{equation}\label{GfourierTheta}
\dfrac{d}{d x}([\theta(x,{E}_1)-\eta(x,E_1)]-[\theta(x,{ E_2})-\eta(x,{E}_2)]=k(E_1)-k({E}_2)+\frac{O(1)}{1+x}.
\end{equation}
Let
\begin{equation*}
  \tilde{ {\theta}}(x,E) =\theta(x,{E})-\eta(x,E).
\end{equation*}
 By trigonometry again,
 one has
 \begin{eqnarray*}
   \cos (2\theta(x,{E}_1)-2\theta(x,{E}_2)) &=& \cos (2\tilde{\theta}(x,{E}_1)-2\tilde{\theta}(x,{E}_2)+2\eta(x,E_1)-2\eta(x,{E}_2)) \\
    &=& \cos(2\eta(x,E_1)-2\eta(x,{E}_2))\cos (2\tilde{\theta}(x,{E}_1)-2\tilde{\theta}(x,{E}_2))\\
    &&-\sin(2\eta(x,E_1)-2\eta(x,{E}_2))\sin (2\tilde{\theta}(x,{E}_1)-2\tilde{\theta}(x,{E}_2)).
 \end{eqnarray*}
  Thus
  $$\int_{0}^L f(x)\frac{\cos (2\theta(x,{E}_1)-2\theta(x,{E}_2))}{ 1+x} dx\;\;\;\;\;\;\;\;\;\;\;\;\;\;\;\;\;\;\;\;\;\;\;\;\;\;\;\;\;\;\;\;\;\;\;\;\;\;\;\;\;\;\;\;\;\;\;\;$$
\begin{eqnarray*}
  \;\;\;\;\;\;\;\;\;\;\;\;\;\;\;\;\;\;\;\;\;\;\;\;\;\;\;\;\;\;\;\; &=& \int_{0}^L f(x)\frac{\cos(2\eta(x,E_1)-2\eta(x,{E}_2))\cos (2\tilde{\theta}(x,{E}_1)-2\tilde{\theta}(x,{E}_2))}{ 1+x} dx \\
 &-&\int_{0}^L f(x)\frac{\sin(2\eta(x,E_1)-2\eta(x,{E}_2))\sin (2\tilde{\theta}(x,{E}_1)-2\tilde{\theta}(x,{E}_2))}{ 1+x}dx.
\end{eqnarray*}
Without loss of generality, we only give the estimate for
\begin{equation}\label{Gfourier5}
\int_{0}^L f(x)\frac{\sin(2\eta(x,E_1)-2\eta(x,{E}_2))\sin (2\tilde{\theta}(x,{E}_1)-2\tilde{\theta}(x,{E}_2))}{ 1+x}dx.
\end{equation}
We proceed by Fourier expansion of $f(x)\sin(2\eta(x,E_1)-2\eta(x,{E}_2))$ ($1$-periodic function) and obtain that
\begin{equation*}
f(x)\sin(2\eta(x,E_1)-2\eta(x,{E}_2))=\frac{c_0}{2}+\sum_{k=1}^\infty c_k \cos(2\pi kx)+ d_k \sin(2\pi kx).
\end{equation*}
Plugging this back into \eqref{Gfourier5}, we obtain
\begin{align}\nonumber\eqref{Gfourier5}=\int_{0}^L \frac{c_0}{2} \frac{\sin (2\tilde{\theta}(x,{E}_1)-2\tilde{\theta}(x,{E}_2))}{1+x}dx+\sum_{k=1}^\infty c_k \cos(2\pi kx)\frac{\sin (2\tilde{\theta}(x,{E}_1)-2\tilde{\theta}(x,{E}_2))}{1+x}dx\\+\sum_{k=1}^\infty d_k \sin(2\pi kx)\frac{\sin (2\tilde{\theta}(x,{E}_1)-2\tilde{\theta}(x,{E}_2))}{1+x}dx.\label{eq:Fouriersum}
\end{align}

Since   $k(E_1), k( E_2)\in(0,\frac{\pi}{2})$ or  $k(E_1), k( E_2)\in(\frac{\pi}{2},\pi)$  depending on either $I\subset(a_n,c_n)$ or $I\subset (c_n,b_n)$, and $k(E_1)\neq k(E_2)$, we  have 
\begin{equation}\label{Gfourier6}
0<|k(E_1)- k(  E_2)|< \frac{\pi}{2}.
\end{equation}
Since $f\in L^2(\T)$, on has  $\sum c_k^2+d_k^2<\infty$.
 Now the Theorem  follows from Lemma \ref{Leper4}, \eqref{eq:Fouriersum} and \eqref{Gfourier6}.

\end{proof}

\section{ Spectral analysis  of Schr\"odinger operators with eventually periodic potentials}\label{even}

For  $L>0$, let $ V_L$ be the cut off $V$ up to $L$. More precisely, $V_L(x)=V(x)$ for $0\leq x\leq L$ and $V_L(x)=0$ for $x>L$.
Let $ \mu_L$ be the spectral measure corresponding to the operator with potential $V_L$.
\begin{theorem}\label{Lemu}
The following formula hold,
\begin{equation}\label{Gmul}
   \frac{d\mu_L(E)}{d E}= \frac{2}{\pi|W( \overline{\varphi},\varphi)|}\frac{1}{R^2(L,E)}
\end{equation}
for $E\in S$.

\end{theorem}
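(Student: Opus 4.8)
The strategy is to combine Floquet theory for the periodic tail of the potential with Weyl theory on the half line; this is exactly the ``combination of Floquet theory with Weyl theory'' advertised for this section. Since $V_L(x)=V(x)$ for $x\le L$ and $V_L(x)=0$ for $x>L$, the potential $V_0+V_L$ agrees with the purely periodic potential $V_0$ on $(L,\infty)$, so there every solution of $-u''+(V_0+V_L)u=zu$ is spanned by the two Floquet solutions of $-u''+V_0u=zu$, exactly one of which (for $\Im z>0$) lies in $L^2$ near $+\infty$. Because $\mu_L$ is a Herglotz measure, its density on the bands equals $\frac{1}{\pi}\Im m_L(E+i0)$, where $m_L$ is the $m$-function of $H_0+V_L$ with the Dirichlet boundary condition, so it suffices to compute $\Im m_L(E+i0)$ for $E$ in a band interior. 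For $\Im z>0$ the Weyl solution $\psi_L(\cdot,z)=\tilde v_L(\cdot,z)+m_L(z)\tilde u_L(\cdot,z)\in L^2(\R^+)$ must, on $(L,\infty)$, be proportional to that decaying Floquet solution; by standard Floquet--Weyl theory this solution is the analytic continuation $\varphi(\cdot,z)$ of $\varphi$ (the sign convention $W(\overline\varphi,\varphi)=i\omega$, $\omega>0$, in \eqref{Gflight1} selects exactly this branch, as one checks directly for $V_0\equiv0$ and propagates by continuity), and $\varphi(\cdot,z)\to\varphi(\cdot,E)$ as $z\to E+i0$ for $E$ in a band interior.

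Here $\tilde v_L,\tilde u_L$ are the solutions of $-u''+(V_0+V_L)u=zu$ with $\tilde v_L(0)=1,\tilde v_L'(0)=0$ and $\tilde u_L(0)=0,\tilde u_L'(0)=1$; on $[0,L]$ they coincide with the corresponding solutions for $V_0+V$, and $\tilde u_L$ coincides there with the Dirichlet solution $u$ (normalized by $u'(0)=1$) appearing in \eqref{Grho}. Proportionality of $\psi_L$ to $\varphi$ beyond $L$ is the matching of logarithmic derivatives at $x=L$, which solves to
\[
m_L(z)=-\frac{W(\tilde v_L(\cdot,z),\varphi(\cdot,z))(L)}{W(\tilde u_L(\cdot,z),\varphi(\cdot,z))(L)},\qquad W(f,g):=fg'-f'g,
\]
for $\Im z>0$. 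Letting $z\to E+i0$ and using that $\tilde v_L,\tilde u_L$ are entire in $z$ and $\varphi(x,\cdot)$ is analytic on band interiors, the same identity holds at $E+i0$ with $\tilde v_L,\tilde u_L$ now the real solutions and $\varphi=\varphi(\cdot,E)$; the denominator is nonzero because $u$ has a nonzero $\overline\varphi$-component, i.e. $\rho(L,E)\ne0$.

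To take the imaginary part, write $\Im m_L=\frac{1}{2i}(m_L-\overline{m_L})$, use that $\tilde v_L,\tilde u_L$ are real (so $\overline{m_L}=-W(\tilde v_L,\overline\varphi)(L)/W(\tilde u_L,\overline\varphi)(L)$), put the two fractions over the common denominator $W(\tilde u_L,\varphi)(L)\,W(\tilde u_L,\overline\varphi)(L)=|W(\tilde u_L,\varphi)(L)|^2$, and apply the Pl\"ucker relation for Wronskians of solutions of a second-order ODE,
\[
W(\tilde v_L,\overline\varphi)\,W(\tilde u_L,\varphi)-W(\tilde v_L,\varphi)\,W(\tilde u_L,\overline\varphi)=W(\tilde v_L,\tilde u_L)\,W(\overline\varphi,\varphi),
\]
together with $W(\tilde v_L,\tilde u_L)(L)=W(\tilde v_L,\tilde u_L)(0)=1$ and $W(\overline\varphi,\varphi)=i\omega$; this gives
\[
\Im m_L(E+i0)=\frac{\omega}{2\,|W(\tilde u_L,\varphi)(L)|^2}.
\]
Finally, on $[0,L]$ one has $\tilde u_L=u$, so \eqref{Grho} gives $u=\frac{1}{2i}(\rho\varphi-\overline\rho\,\overline\varphi)$ and $u'=\frac{1}{2i}(\rho\varphi'-\overline\rho\,\overline\varphi')$, whence a direct computation yields $W(u,\varphi)(L)=-\frac{1}{2i}\,\overline{\rho(L,E)}\,W(\overline\varphi,\varphi)=-\frac{\omega}{2}\,\overline{\rho(L,E)}$ and $|W(\tilde u_L,\varphi)(L)|^2=\frac{\omega^2}{4}R^2(L,E)$ by \eqref{GR1}. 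Therefore
\[
\frac{d\mu_L(E)}{dE}=\frac{1}{\pi}\Im m_L(E+i0)=\frac{2}{\pi\,\omega\,R^2(L,E)}=\frac{2}{\pi\,|W(\overline\varphi,\varphi)|\,R^2(L,E)},
\]
which is \eqref{Gmul}; since this density is continuous and strictly positive on each band $(a_n,b_n)$, $\mu_L$ is purely absolutely continuous there and the formula holds for every $E\in S$.

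The main obstacle is the identification in the first paragraph: proving that the Weyl solution of the periodic tail is proportional to $\varphi$ (and not $\overline\varphi$) and has boundary value $\varphi(\cdot,E)$ as $z\to E+i0$. This rests on the Floquet--Weyl theory of periodic Schr\"odinger operators --- analyticity of the Floquet solutions in $z$ across band interiors and the correct choice of branch, consistent with the sign convention \eqref{Gflight1}. Everything else (the matching at $x=L$, the Pl\"ucker identity, and the evaluation of $W(u,\varphi)(L)$ through the generalized Pr\"ufer representation \eqref{Grho}) is elementary algebra.
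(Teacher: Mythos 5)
Your proof is correct and rests on the same conceptual core as the paper's: both exploit the eventually-periodic structure to identify the Weyl solution of $H_0+V_L$ with the Floquet solution $\varphi(\cdot,z)$ past $x=L$, and both then extract $\Im m_L(E+i0)$ by a matching argument at $x=L$. The computational routes differ, however. The paper writes down the transfer matrix $T(z)=\begin{pmatrix}a&b\\c&d\end{pmatrix}$ of $H_0+V_L$ over $[0,L]$, obtains $\Im m$ as an explicit rational expression in $a,b,c,d,\varphi(L),\varphi'(L)$, then introduces polar decompositions $\varphi'(L)/\varphi(L)=A_1e^{iB_1}$ and $\rho(L)\varphi(L)=A_2e^{iB_2}$ and grinds through the trigonometry (identifying $b=A_2\sin B_2$, $d=A_1A_2\sin(B_1+B_2)$) to reach $\Im m=(A_1A_2^2\sin B_1)^{-1}$ and finally \eqref{Gmul} via the Wronskian identity \eqref{Gomega}. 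You instead write $m_L=-W(\tilde v_L,\varphi)(L)/W(\tilde u_L,\varphi)(L)$ from the proportionality condition, apply the quadratic Pl\"ucker relation for $2\times2$ minors to collapse $\Im m_L$ to $\tfrac{\omega}{2}|W(\tilde u_L,\varphi)(L)|^{-2}$, and then compute $W(u,\varphi)(L)=-\tfrac{\omega}{2}\overline{\rho(L,E)}$ directly from the generalized Pr\"ufer decomposition \eqref{Grho}. This is cleaner: it avoids the polar angles entirely and makes transparent exactly where $R(L,E)=|\rho(L,E)|$ enters. One small remark on the branch-selection issue you flag as the ``main obstacle'': the paper actually \emph{derives} the sign convention $W(\overline\varphi,\varphi)=i\omega$ with $\omega>0$ from the Herglotz property $\Im m\geq0$ rather than importing it; in your framework the same shortcut is available, since had the opposite sign held, your Pl\"ucker computation would produce $\Im m_L<0$, which is impossible, so the Herglotz inequality forces the sign without appealing to a $V_0\equiv0$ comparison or Lemma~\ref{Lepositivegamma}.
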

\begin{proof}
Let $Q(z)$ be the transfer matrix of $H_0 $ from $0$ to $1$, that is
\begin{equation*}
  Q (z)\left(\begin{array}{cc}
               \phi(0) \\ \phi^{\prime} (0)
              \end{array}
  \right)=\left(\begin{array}{cc}
                \phi(1) \\ \phi^{\prime}(1)
              \end{array}
  \right)
\end{equation*}
for any solution $\phi$ of $ -\phi^{\prime\prime}+V_0\phi=z\phi$.

Let $z=E+i\varepsilon$ for $E\in S$ and $\varepsilon\geq0$. Let $ k(z)+i \tau(z)$ be such that $2\cos (k(z)+i\tau(z))={\rm Tr } Q(z)$ with $ k(z)\in \R$ and $\tau(z)\in \R $.  Thus
\begin{equation*}
  (  e^{-\pi \tau}+e^{\pi \tau})\cos k=\Re {\rm Tr} Q(z); (  e^{-\pi \tau}-e^{\pi \tau})\sin k= \Im  {\rm Tr} Q(z).
\end{equation*}
Let us choose the branch  so that $k(z)\in(0,\pi)$.
Thus
\begin{equation*}
    \lim_{\varepsilon\to 0+}k(E+i\varepsilon)=k(E),    \lim_{\varepsilon\to 0+}\tau(E+i\varepsilon)=0
\end{equation*}
where $k(E)$ is  quasimomentum of $E\in S$.

By the following Lemma \ref{Lepositivegamma}, we have either $ \tau(E+i\varepsilon)>0$ for all $\varepsilon>0$ and $E\in S$
or $ \tau(E+i\varepsilon)<0$ for all $\varepsilon>0$ and $E\in S$. Without loss of generality, assume $ \tau(E+i\varepsilon)<0$.

By the Floquet theory, $-u^{\prime\prime}+V_0u=zu$ has a solution with the form:
\begin{equation}\label{Gvarphifli}
  \varphi(x,z)=J(x,z)e^{-i(k(z) +i\tau(z))x},
\end{equation}
where $J(x,z)$ is 1-periodic\footnote{We will show that $ \varphi(x,z)$ is an extension of  $\varphi(x,E)$ given by \eqref{Gdefj} and \eqref{Gflight1} later. This means that there is no confusion even through we use the same notation.}.

Define $ \tilde{u}(x,z)=J(x,z)e^{-i (k+i\tau)x}= \varphi(x,z)$ for $x\geq L$ and extend $\tilde{u}(x,z)$ to $0\leq x\leq L$  by solving equation
\begin{equation*}
  -\tilde{u}^{\prime\prime}(x,z)+ (V_0(x)+V_L(x)-z)\tilde{u}(x,z)=0
\end{equation*}
for $0\leq x\leq L$.  Since $ \gamma(z)<0$, one has  $\tilde{u}(x,z)\in L^2(\R^+)$.
By  spectral theory (we refer the readers to \cite{Simon} and references therein for details), we have
\begin{equation*}
    m(z)=\frac{\tilde{u}^{\prime}(0,z)}{\tilde{u}(0,z)},
\end{equation*}
and
\begin{equation}\label{Gequ13}
    \frac{d\mu_L}{dE}=\frac{1}{\pi}\lim_{\varepsilon\to 0+} \Im m(E+i\varepsilon).
\end{equation}

Let $T(z)$ be the transfer matrix of $H_0+V_L$ from $0$ to $L$, that is
\begin{equation*}
  T (z)\left(\begin{array}{cc}
               \phi(0) \\ \phi ^\prime(0)
              \end{array}
  \right)=\left(\begin{array}{cc}
                \phi(L) \\ \phi^{\prime}(L)
              \end{array}
  \right)
\end{equation*}
for any solution $\phi$ of $(-D^2+V_0+V_L)\phi=z\phi$.

Denote by
\begin{equation*}
  T(z)=\left(
           \begin{array}{cc}
             a(z) & b(z) \\
             c(z) & d(z) \\
           \end{array}
         \right).
\end{equation*}
Clearly,
\begin{eqnarray*}
  \left(\begin{array}{cc}
                \tilde{u}(0,z) \\ \tilde{u}^\prime (0,z)
              \end{array}
  \right) &=&\left(
           \begin{array}{cc}
             a(z) & b(z) \\
             c(z) & d(z) \\
           \end{array}
         \right)^{-1}\left(\begin{array}{cc}
                \tilde{u}(L,z) \\ \tilde{u} ^{\prime}(L,z)
              \end{array}
  \right) \\
   &=& \left(
           \begin{array}{cc}
             d(z) & -b(z) \\
             -c(z) & a(z) \\
           \end{array}
         \right) \left(\begin{array}{cc}
                \tilde{u}(L,z) \\ \tilde{u} ^\prime(L,z)
              \end{array}
  \right).
\end{eqnarray*}
Direct computation implies that (using $ad-bc=1$)
\begin{eqnarray}
 \nonumber   \lim_{\varepsilon\to 0+} \Im m(E+i\varepsilon) &=& \Im \frac{ a  \varphi^{\prime}(L,E)-c\varphi(L,E) }{d  \varphi(L,E)-b\varphi^{\prime}(L,E)} \\
   &=& \frac{A_1\sin B_1}{(d-bA_1\cos B_1)^2+(bA_1\sin B_1)^2},\label{Gequ12}
\end{eqnarray}
where $A_1>0$ and $B_1$ are defined by
\begin{equation}\label{GvarphiL}
  \frac{\varphi^{\prime}(L)}{\varphi(L)}=A_1e^{i B_1}.
\end{equation}
Since $\Im  m(E+i\varepsilon)\geq0$ for $\varepsilon\geq0$, using \eqref{Gequ12} and \eqref{GvarphiL}, one has
\begin{equation*}
  \Im (\overline{\varphi(x,E)}\varphi^{\prime}(x,E))\geq 0.
\end{equation*}
It implies that the restriction of  $\varphi(x,z)$ (given by \eqref{Gvarphifli}) on $z\in\R$  coincides with   $\varphi(x,E)$ given by \eqref{Gdefj} and \eqref{Gflight1}.

It is easy to see that
\begin{eqnarray}
 \nonumber \left(\begin{array}{cc}
                u(L,E) \\ u ^{\prime}(L,E)
              \end{array}
  \right) &=&T(E)\left(\begin{array}{cc}
                u(0) \\ u^{\prime}(0)
              \end{array}
  \right) \\
   &=& T(E)\left(\begin{array}{cc}
               0\\ 1
              \end{array}
  \right)= \left(\begin{array}{cc}
                b \\ d
              \end{array}
  \right).\label{Gbd}
\end{eqnarray}
Let
\begin{equation}\label{GA2B2}
  \rho(L)\varphi(L)=A_2e^{iB_2}\text{ and } A_2>0.
\end{equation}
By \eqref{Grho} and \eqref{Gbd}, we have
\begin{eqnarray}
 \nonumber \left(\begin{array}{cc}
                b \\ d
              \end{array}
  \right) &=&\Im A_2e^{iB_2}\left(\begin{array}{cc}
                A_1e^{iB_1} \\ 1
              \end{array}
  \right) \\
   &=&   \left(\begin{array}{cc}
            A_2\sin B_2     \\ A_1A_2\sin(B_1+B_2)
              \end{array}
  \right).\label{GB1B2}
\end{eqnarray}
Thus
\begin{equation}\label{Gb1b2}
  b= A_2\sin B_2;\;\; d= A_1A_2\sin(B_1+B_2) .
\end{equation}
By \eqref{Gequ12} and \eqref{Gb1b2},
\begin{equation}\label{Gequ14}
   \lim_{\varepsilon\to 0+} \Im m(E+i\varepsilon) = \frac{1}{A_1A_2^2\sin B_1}.
\end{equation}
It is easy to see that (see p.295 in \cite{KRS})
\begin{equation}\label{Gomega}
 |\varphi|^2 \Im\frac{\varphi^{\prime}}{\varphi}=\frac{|W( \overline{\varphi},\varphi)|}{2}.
\end{equation}
By \eqref{GvarphiL}, \eqref{GA2B2}, \eqref{Gequ14} and \eqref{Gomega},
\begin{eqnarray}
 \nonumber \lim_{\varepsilon\to 0+} \Im m(E+i\varepsilon) &=& \frac{1}{|\rho(L)\varphi(L)|^2\Im(\frac{\varphi^{\prime}(L)}{\varphi(L)})} \\
    &=& \frac{2}{|W( \overline{\varphi},\varphi)|} \frac{1}{R(L,E)^2}.\label{Gimm}
\end{eqnarray}
Now the Theorem follows from \eqref{Gequ13} and \eqref{Gimm}.
\end{proof}
\begin{lemma}\label{Lepositivegamma}
 Either $ \gamma(E+i\varepsilon)>0$ for all $\varepsilon>0$ and $E\in S$
or $ \gamma(E+i\varepsilon)<0$ for all $\varepsilon>0$ and $E\in S$.
\end{lemma}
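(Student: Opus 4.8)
The plan is to track the sign of $\tau(E+i\varepsilon)$ (equivalently $\gamma$, in the notation of Lemma \ref{Lepositivegamma}) as a continuous function of $(E,\varepsilon)$ on the connected region $\Omega = S \times (0,\infty)$, and to show it never vanishes there; combined with the limit $\tau(E+i\varepsilon)\to 0$ as $\varepsilon\to 0^+$, this forces a single definite sign on all of $\Omega$. The function $\tau$ is determined by $2\cos(k(z)+i\tau(z)) = \operatorname{Tr}Q(z)$, where $Q(z)$ is the monodromy matrix of $-\phi''+V_0\phi = z\phi$; since $Q(z)$ depends analytically on $z$ and $\operatorname{Tr}Q$ is real-analytic, $k+i\tau$ can be chosen to depend continuously (indeed analytically, away from branch points) on $z\in\Omega$.

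First I would show $\tau(E+i\varepsilon)\neq 0$ for every $E\in S$ and $\varepsilon>0$. Suppose $\tau(z_0)=0$ at some $z_0 = E_0+i\varepsilon_0$ with $\varepsilon_0>0$. Then $\operatorname{Tr}Q(z_0) = 2\cos k(z_0) \in[-2,2]$ is real with absolute value $\le 2$, which is precisely the condition that $z_0$ lie in the (periodic) spectrum of $H_0$; but $H_0$ is self-adjoint, so its spectrum is real, contradicting $\varepsilon_0>0$. (One must handle the band-edge case $|\operatorname{Tr}Q(z_0)|=2$, i.e.\ $k\in\{0,\pi\}$, separately — there $\cos(k+i\tau) = \pm\cosh\tau$, and $\tau=0$ again forces $z_0$ real.) Hence $\tau$ is a continuous, nowhere-vanishing real-valued function on the connected set $\Omega$, so by the intermediate value theorem it has constant sign there.

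The remaining point is to upgrade ``constant sign on $\Omega$'' to the two stated alternatives ``$>0$ everywhere'' or ``$<0$ everywhere'' — which is immediate once constancy of sign is known. I would also record that $\Omega = S\times(0,\infty)$ is connected: $S = \cup_n (a_n,b_n)$ need not be connected, but the sign of $\tau$ on each component strip $(a_n,b_n)\times(0,\infty)$ is constant by the above, and these signs must all agree because the Floquet solution $\varphi(x,z) = J(x,z)e^{-i(k(z)+i\tau(z))x}$ was normalized (via \eqref{Gvarphifli}--\eqref{GvarphiL} in the proof of Theorem \ref{Lemu}) by the single global condition $\Im(\overline{\varphi}\,\varphi') \ge 0$ coming from $\Im m(E+i\varepsilon)\ge 0$; this ties the branch choice across all bands. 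The main obstacle is the careful treatment of the branch of $k+i\tau$ near the band edges and spectral gaps of $H_0$, where $\operatorname{Tr}Q(z)$ passes through $\pm 2$ and the map $w\mapsto 2\cos w$ is not locally invertible; there one must either argue directly with $\cosh\tau$ as above, or note that these degeneracies occur only on the real axis and so do not obstruct defining a continuous $\tau$ on the open upper half-plane intersected with the relevant region.
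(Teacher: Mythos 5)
Your core argument — that $\tau(z)$ cannot vanish for $\Im z>0$ — is correct and is essentially the same as the paper's, just dressed differently: you argue that $\tau(z_0)=0$ forces $\operatorname{Tr}Q(z_0)\in[-2,2]$ and hence $z_0\in\sigma(H_0)\subset\R$, whereas the paper argues that $\tau(z_0)=0$ makes both Floquet solutions purely oscillatory, so no $L^2(\R^+)$ solution exists, contradicting Weyl limit-point theory at non-real $z_0$. These are two routes to the same fact. You also correctly flag a real gap that the paper's own terse ``Otherwise, there exist $E\in S$ and $\varepsilon$ such that $\gamma(E+i\varepsilon)=0$'' silently steps over: $S\times(0,\infty)$ is not connected, so continuity plus non-vanishing gives constant sign only on each strip $(a_n,b_n)\times(0,\infty)$ separately, not across bands.

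The problem is that your proposed repair of that gap is circular. You invoke the normalization $\Im(\overline{\varphi}\,\varphi')\ge 0$ ``coming from $\Im m(E+i\varepsilon)\ge 0$'' to tie the branch across bands. But in the paper's proof of Theorem \ref{Lemu}, the inequality $\Im(\overline{\varphi}\,\varphi')\ge 0$ is \emph{deduced} (via \eqref{Gequ12}) only \emph{after} invoking Lemma \ref{Lepositivegamma} to justify the ``WLOG $\tau<0$'' step and the resulting $L^2$ property of $\varphi(\cdot,z)$; it is not an independent global normalization available beforehand. So you cannot use it here without assuming the conclusion. A non-circular way to tie the bands together would be to define $\tau$ directly from the Weyl solution (whose Floquet multiplier $\mu(z)$ is a single analytic function on $\C^+$ with $|\mu(z)|<1$, giving $\tau=\log|\mu|<0$ everywhere), or, more modestly, to observe that only a per-band sign determination is actually needed downstream, since one is free to swap $\varphi\leftrightarrow\overline{\varphi}$ band by band. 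Either fix is fine; the one you wrote is not.
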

\begin{proof}
Otherwise, there exist $E\in S$ and $\varepsilon$ such that $ \gamma(E+i\varepsilon)=0$. Let $z=E+i\varepsilon$.
Therefore,   the equation $ -u^{\prime\prime}+V_0u=zu$  has two linearly independent solutions $\varphi_1(x,z)$ and $\varphi_2(x,z)$,  with the form of
\begin{equation*}
  \varphi_1(x,z)=J_1(x,z)e^{-ik(z) x}\text { and }\varphi_2(x,z)=J_2(x,z)e^{ik(z) x},
\end{equation*}
where $J_1(x,z)$  and $J_2(x,z)$ are 1-periodic.
It implies $ -u^{\prime\prime}+V_0u=zu$ does not have   $L^2(\R^+)$ solution. This is impossible since $\Im z=\varepsilon>0$.
\end{proof}

\section{ Proof of Theorem \ref{mainthm3}}\label{lsection}
 In this section, we give the dependence of   parameters  explicitly except  $V_0$, since $V_0$ is fixed all the time.

 Define
 \begin{equation*}
  \Gamma(E)=\int_0^1 \frac{1}{|\gamma^{\prime}(x,E)|^2}dx.
\end{equation*}
It easy to see (See \cite[Prop.2.1]{KRS})
\begin{equation}\label{Ggamma}
  \gamma^{\prime}(x,E)=\frac{|W(\overline{\varphi},\varphi)|}{2|\varphi(x,E)|^2}.
\end{equation}
 Let $L= \epsilon^{-1-\sigma} $.
Let $C_1=C_1(B, I)$, which will be determined later.

We say a subset $ A\subset I$ is $(\epsilon,N)$ separate, if the following two conditions hold:

For any $E\in A$,
\begin{equation}\label{Gassum1}
    \left|\int_{0}^LV(x)\frac{\sin2\theta(x,E)}{\gamma^{\prime}(x,E)}dx\right|\geq (1-\beta)C_1(B,I)\log \epsilon^{-1}.
\end{equation}
For any $E_1,E_2\in A$ and $E_1\neq E_2$,
\begin{equation}\label{Gassum2}
   |k(E_1)-k(E_2)|\geq \epsilon^{1/N^2}.
\end{equation}
\begin{lemma}\cite[Lemma 4.4]{KLS}\label{Lee}
  Let $\{e_i\}_{i=1}^N$ be a set of unit vectors in a Hilbert space $\mathcal{H}$  so that
  \begin{equation*}
    \alpha=N\sup_{i\neq j}| \langle e_i,e_j\rangle|<1.
  \end{equation*}
  Then for any $g\in \mathcal{H}$,
  \begin{equation}\label{Gapr71}
    \sum_{i=1}^N|\langle g,e_i\rangle|^2\leq (1+\alpha)||g||^2.
  \end{equation}
  \end{lemma}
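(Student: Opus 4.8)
The plan is to recognize Lemma~\ref{Lee} as the familiar fact that an almost‑orthonormal family is a Bessel system with bound $1+\alpha$, and to prove it by the standard duality trick. Write $a_i=\langle g,e_i\rangle$ and $S=\bigl(\sum_{i=1}^N\abs{a_i}^2\bigr)^{1/2}$; the goal \eqref{Gapr71} is precisely $S^2\le(1+\alpha)\norm{g}^2$, and we may assume $S>0$, since otherwise there is nothing to prove. The idea is to test $g$ against the single vector $f=\sum_{i=1}^N a_i e_i\in\mathcal H$ assembled from the coefficients themselves.

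First I would evaluate the pairing: using $\langle e_i,g\rangle=\overline{a_i}$ gives $\langle f,g\rangle=\sum_i a_i\overline{a_i}=S^2$. Next I would bound $\norm{f}^2$. Since the $e_i$ are unit vectors, separating the diagonal yields
\[
  \norm{f}^2=\sum_{i,j}a_i\overline{a_j}\langle e_i,e_j\rangle=S^2+\sum_{i\neq j}a_i\overline{a_j}\langle e_i,e_j\rangle,
\]
and the off‑diagonal sum is controlled by the hypothesis $\abs{\langle e_i,e_j\rangle}\le\alpha/N$ together with the Cauchy--Schwarz bound $\bigl(\sum_i\abs{a_i}\bigr)^2\le N\sum_i\abs{a_i}^2$:
\[
  \Bigl|\sum_{i\neq j}a_i\overline{a_j}\langle e_i,e_j\rangle\Bigr|\le\frac{\alpha}{N}\Bigl(\sum_i\abs{a_i}\Bigr)^2\le\alpha S^2 .
\]
Hence $\norm{f}^2\le(1+\alpha)S^2$. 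Finally, Cauchy--Schwarz in $\mathcal H$ gives $S^2=\langle f,g\rangle\le\norm{f}\,\norm{g}\le\sqrt{1+\alpha}\,S\,\norm{g}$, and dividing by $S$ produces $S\le\sqrt{1+\alpha}\,\norm{g}$, which is \eqref{Gapr71}.

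I do not expect any genuine obstacle here: this is a routine estimate, and the only points needing care are keeping the conjugations straight in $\langle f,g\rangle$ and $\norm{f}^2$, and isolating the diagonal terms before invoking the off‑diagonal bound $\abs{\langle e_i,e_j\rangle}\le\alpha/N$ uniformly. An equivalent formulation runs through the synthesis operator $Tc=\sum_i c_i e_i\colon\C^N\to\mathcal H$, for which $\sum_i\abs{\langle g,e_i\rangle}^2=\norm{T^{\ast}g}^2\le\norm{T^{\ast}T}\,\norm{g}^2$, while the Gram matrix $T^{\ast}T=I+R$ has $\norm{R}\le\alpha$ by the Schur test (or Gershgorin's theorem), giving $\norm{T^{\ast}T}\le1+\alpha$; but the elementary computation above is the one I would write out.
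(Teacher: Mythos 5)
Your proof is correct. Note that the paper does not prove this lemma at all; it is quoted verbatim from \cite[Lemma 4.4]{KLS}, so there is no in-paper argument to compare against. Your argument is the standard one for this Bessel-type bound: pair $g$ against the synthesis vector $f=\sum_i a_i e_i$, split the Gram sum into diagonal and off-diagonal parts, bound the off-diagonal block by $\alpha S^2$ via the uniform estimate $|\langle e_i,e_j\rangle|\le\alpha/N$ and the $\ell^1$--$\ell^2$ inequality, and finish with Cauchy--Schwarz. All the conjugations and the reduction to $S>0$ are handled correctly, and the alternate phrasing via $\|T^*T\|\le 1+\|R\|$ with the Gershgorin/Schur bound on $R$ is indeed equivalent and equally standard.
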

\begin{theorem}\label{Thmbound}
There exists $\epsilon_1(B,I,\sigma,\beta)>0$ and $C(B,I,\sigma,\beta)$ such that  for any $\epsilon<\epsilon_1$ and $N\geq C(B,I,\sigma,\beta)$,  the
$(\epsilon,N)$ separate  set $A$ satisfies $\# A\leq N$.
\end{theorem}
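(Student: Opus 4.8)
The plan is to encode the separateness conditions \eqref{Gassum1}--\eqref{Gassum2} as geometric statements about vectors in the Hilbert space $\mathcal{H}=L^2([0,L],\frac{dx}{1+x})$ and then invoke Lemma \ref{Lee}. Set $g(x)=(1+x)V(x)$, so that $|g|\le B$ and $\|g\|_{\mathcal{H}}^2=\int_0^L V(x)^2(1+x)\,dx\le B^2\log(1+L)$; and for each $E\in A$ set $\tilde e_E(x)=\sin 2\theta(x,E)/\gamma^{\prime}(x,E)$ and $e_E=\tilde e_E/\|\tilde e_E\|_{\mathcal{H}}$. Then $\langle g,e_E\rangle_{\mathcal{H}}=\|\tilde e_E\|_{\mathcal{H}}^{-1}\int_0^L V(x)\frac{\sin 2\theta(x,E)}{\gamma^{\prime}(x,E)}\,dx$, so \eqref{Gassum1} says exactly that $|\langle g,e_E\rangle_{\mathcal{H}}|\ge (1-\beta)C_1\log\epsilon^{-1}/\|\tilde e_E\|_{\mathcal{H}}$.

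First I would evaluate the normalizations and the overlaps. Because $|\varphi(x,E)|$ is $1$-periodic in $x$, \eqref{Ggamma} shows that $1/\gamma^{\prime}(x,E)^2$ and $1/(\gamma^{\prime}(x,E_1)\gamma^{\prime}(x,E_2))$ are $1$-periodic in $x$ and, by \eqref{Ggammad}, bounded uniformly for $E,E_1,E_2\in I$; in particular their $L^2(\T)$-norms are uniformly bounded. Writing $\sin^2 2\theta=\frac12(1-\cos 4\theta)$ and combining \eqref{Gcons4} with the elementary asymptotics $\int_0^L f(x)(1+x)^{-1}\,dx=\log L\int_0^1 f(x)\,dx+O(1)$ valid for bounded $1$-periodic $f$, I obtain $\|\tilde e_E\|_{\mathcal{H}}^2=\frac12\Gamma(E)\log L+O(1)$, uniformly in $E\in I$. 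With $L=\epsilon^{-1-\sigma}$ and $\Gamma$ bounded above and below on $I$ this gives $\|\tilde e_E\|_{\mathcal{H}}^2\asymp\log\epsilon^{-1}$, hence $|\langle g,e_E\rangle_{\mathcal{H}}|^2\ge D_1\log\epsilon^{-1}$ for a constant $D_1=D_1(B,I,\sigma,\beta)>0$ once $\epsilon$ is small. For the overlaps, $\langle e_{E_1},e_{E_2}\rangle_{\mathcal{H}}=(\|\tilde e_{E_1}\|_{\mathcal{H}}\|\tilde e_{E_2}\|_{\mathcal{H}})^{-1}\int_0^L\frac{\sin 2\theta(x,E_1)\sin 2\theta(x,E_2)}{\gamma^{\prime}(x,E_1)\gamma^{\prime}(x,E_2)(1+x)}\,dx$, and Theorem \ref{Leper3} applied with $f=1/(\gamma^{\prime}(\cdot,E_1)\gamma^{\prime}(\cdot,E_2))$ bounds the integral by $O(1)\log|E_1-E_2|^{-1}+O(1)$, the constants being uniform since the relevant $L^2(\T)$-norm is. Since $k$ is a $C^1$-diffeomorphism with $k^{\prime}\ne 0$ on $\overline{I}$, we have $|E_1-E_2|\asymp|k(E_1)-k(E_2)|$, so \eqref{Gassum2} yields $\log|E_1-E_2|^{-1}\le N^{-2}\log\epsilon^{-1}+O(1)$; dividing by $\|\tilde e_{E_1}\|_{\mathcal{H}}\|\tilde e_{E_2}\|_{\mathcal{H}}\asymp\log\epsilon^{-1}$ gives $|\langle e_{E_1},e_{E_2}\rangle_{\mathcal{H}}|\le C_4 N^{-2}$.

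To conclude I would first handle a degenerate case: if $\epsilon^{1/N^2}\ge\delta_0$ for a suitable fixed $\delta_0=\delta_0(I)>0$, then by \eqref{Gassum2} the set $k(A)$ is $\delta_0$-separated inside the bounded interval $k(I)$, so $\#A$ is bounded by a constant $C_0(I)$ and we are done once $N\ge C_0(I)$. Otherwise $\epsilon^{1/N^2}<\delta_0$, i.e.\ $\log\epsilon^{-1}>N^2\log\delta_0^{-1}$, which is exactly what lets the additive $O(1)$'s be absorbed into $N^{-2}\log\epsilon^{-1}$ in the overlap bound above. Now suppose toward a contradiction that $\#A\ge N+1$, and choose $A^{\prime}\subset A$ with $\#A^{\prime}=N+1$. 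Then $\alpha:=(N+1)\sup_{E_1\ne E_2\in A^{\prime}}|\langle e_{E_1},e_{E_2}\rangle_{\mathcal{H}}|\le (N+1)C_4 N^{-2}\le\frac12$ for $N\ge 4C_4$, so Lemma \ref{Lee} applied to the unit vectors $\{e_E\}_{E\in A^{\prime}}$ and to $g$ gives $\sum_{E\in A^{\prime}}|\langle g,e_E\rangle_{\mathcal{H}}|^2\le\frac32\|g\|_{\mathcal{H}}^2\le 2B^2(1+\sigma)\log\epsilon^{-1}$ for $\epsilon$ small, while the lower bound for each term gives $\sum_{E\in A^{\prime}}|\langle g,e_E\rangle_{\mathcal{H}}|^2\ge (N+1)D_1\log\epsilon^{-1}$. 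Hence $N+1\le 2B^2(1+\sigma)/D_1$, which is false once $N\ge C(B,I,\sigma,\beta):=\max\{4C_4,\ 2B^2(1+\sigma)/D_1,\ C_0(I)\}$. Choosing $\epsilon_1(B,I,\sigma,\beta)$ small enough to validate every ``$\epsilon$ small'' step completes the argument.

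The step I expect to be the main obstacle is uniformity: upgrading the qualitative $O(1)$'s in Theorem \ref{Leper3} and in the periodic-averaging asymptotics to bounds whose implied constants depend only on $B,I,\sigma,\beta$ and not on the pair $(E_1,E_2)$, on $N$, or on $\epsilon$. It is this that forces the dichotomy between the large-separation case and the regime $\log\epsilon^{-1}\gtrsim N^2$, and it requires keeping track of which errors are additive $O(1)$ and which are multiplicative $1+o(1)$. Everything after that is the bookkeeping of balancing $\|g\|_{\mathcal{H}}^2\asymp\log\epsilon^{-1}$ against $N+1$ copies of $|\langle g,e_E\rangle_{\mathcal{H}}|^2\asymp\log\epsilon^{-1}$ via Lemma \ref{Lee}.
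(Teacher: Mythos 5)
Your proposal is correct and follows essentially the same scheme as the paper: you set up unit vectors in a weighted $L^2$-space (yours, $L^2([0,L],dx/(1+x))$ with $g=(1+x)V$ and $\tilde e_E=\sin 2\theta/\gamma'$, is unitarily equivalent via $f\mapsto(1+x)f$ to the paper's $L^2((0,L),(1+x)dx)$ with $V$ and $\sin 2\theta/(\gamma'(1+x))$), you estimate the normalizations $A_i$ via \eqref{Gcons4}, the overlaps via \eqref{Gcons5}, and you close using Lemma \ref{Lee}.

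One genuine improvement worth flagging: your dichotomy between the regime $\epsilon^{1/N^2}\ge\delta_0$ (where $k(A)$ is $\delta_0$-separated in the bounded interval $k(I)$, so $\#A\le C_0(I)$ trivially) and $\epsilon^{1/N^2}<\delta_0$ (so that $\log\epsilon^{-1}>N^2\log\delta_0^{-1}$, letting you absorb the additive $O(1)$'s into the $N^{-2}$ term and force $\alpha\le\tfrac12$) is an explicit verification of the hypothesis $\alpha<1$ in Lemma \ref{Lee}. The paper's proof applies Lemma \ref{Lee} directly to obtain \eqref{Gapr711} without checking this: since the bound for $\alpha$ coming from \eqref{Gapr78} contains the term $NC(I,B)/\log\epsilon^{-1}$, which is not automatically less than $1$ for all $N\ge C(B,I,\sigma,\beta)$ at a fixed $\epsilon<\epsilon_1$ (the exact range the theorem claims), this step needs justification. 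Your case split supplies precisely the missing observation and makes the argument airtight without changing its substance.
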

\begin{proof}
We consider the Hilbert space
\begin{equation*}
  \mathcal{H}=L^2((0,L),(1+x)dx).
\end{equation*}

In  $\mathcal{H}$, by \eqref{GVbound} we have
\begin{equation}\label{Gapr77}
  ||V||_{ \mathcal{H}}^2\leq B^2\log (1+L).
\end{equation}
Let
\begin{equation}\label{Gdefei}
  e_{i}(x)=\frac{1}{\sqrt{A_i}}\frac{ \sin 2\theta(x,E_i)}{\gamma^{\prime}(x,E_i)(1+x)}\chi_{[0,L]}(x),
\end{equation}
where $A_i$ is chosen so that $e_i$ is an unit vector in $\mathcal{H}$.
We have the following estimate,
\begin{eqnarray}
  A_i &=& \int_{0}^{B_j}\frac{\sin^2 2\theta(x,E_i)}{|\gamma^{\prime}(x,E_i)|^2(1+x)}dx \nonumber\\
   &=&\int_{0}^{L}\frac{1}{2|\gamma^{\prime}(x,E_i)|^2(1+x)}dx - \int_{0}^{L}\frac{\cos 4\theta(x,E_i)}{|\gamma^{\prime}(x,E_i)|^2(1+x)}dx \label{Gapr791}.
\end{eqnarray}

By \eqref{Gcons4}, one has
\begin{equation}\label{Gcons4Gapr79}
  \left| \int_{0}^{L}\frac{\cos 4\theta(x,E_i)}{|\gamma^{\prime}(x,E_i)|^2(1+x)}dx\right|= O(1).
\end{equation}

Direct computation shows that
\begin{eqnarray}
  \int_{0}^{L}\frac{1}{|\gamma^{\prime}(x,E_i)|^2(1+x)}dx &=& O(1)+\sum_{n=0}^{L-1} \int_{n}^{n+1}\frac{1}{|\gamma^{\prime}(x,E_i)|^2(1+n)}dx\nonumber\\
  &=& O(1)+\Gamma(E_i)\log L .\label{Gamma1}
\end{eqnarray}
By \eqref{Gapr791}, \eqref{Gcons4Gapr79} and \eqref{Gamma1}, we have
\begin{equation}\label{Gapr79}
      A_i= \frac{1}{2}\Gamma(E_i)\log L +O(1).
\end{equation}
We should mention that $O(1)$ in \eqref{Gcons4Gapr79}, \eqref{Gamma1} and \eqref{Gapr79} only depends on $B$ and $I$.

By \eqref{Gcons5} and \eqref{Gdefei}, we have
\begin{equation}\label{Gapr78}
 | \langle e_i ,e_{ j}  \rangle\leq \frac{2}{1+\sigma}C(I,B)N^{-2} +\frac{C(I,B)}{\log \epsilon^{-1}}.
\end{equation}
The first condition  \eqref{Gassum1} implies
\begin{equation}\label{Gequ5}
    |\langle V,e_i \rangle|^2\geq \frac{C_1^2}{1+\sigma}\log \epsilon^{-1}-C(I,B).
\end{equation}
By \eqref{Gapr71} and \eqref{Gapr78}, one has
\begin{equation}\label{Gapr711}
\sum_{i=1}^N  |\langle V,e_i\rangle_{\mathcal{H}}|^2\leq (1+\frac{2}{1+\sigma}C(I,B)N^{-1} +\frac{NC(I,B)}{\log \epsilon^{-1}})||V||_{\mathcal{H}}.
\end{equation}
By \eqref{Gequ5}, \eqref{Gapr77} and \eqref{Gapr711}, we have
\begin{equation*}
    N\left(  \frac{C_1^2(1-\beta)^2}{1+\sigma}\log \epsilon^{-1}-C(I,B)\right)\leq  \left(1+\frac{2}{1+\sigma}C(I,B)N^{-1} +\frac{NC(I,B)}{\log \epsilon^{-1}}\right) B^2(1+\sigma)\log \epsilon^{-1}.
\end{equation*}
This implies the Theorem.
\end{proof}
\begin{proof}[\bf Proof of Theorem \ref{mainthm3}]
Once we have Theorems \ref{Lemu} and \ref{Thmbound}, Theorem \ref{mainthm3} can be proved by the arguments in \cite{Kiselev05} (also see \cite{liudis}). We omit the details here.
\end{proof}
\appendix
\section{}
\begin{proof}[\bf Proof of   \eqref{Gy}]
By \eqref{Gu_3} and \eqref{Gy}, it suffices to show that
\begin{equation}\label{Gaug192}
   \frac{i}{2\Im (\varphi \overline{\varphi}')}V(x) \overline{\varphi}^{2}\exp (-2ip)=  \frac{-iV }{2\Re\phi^{\prime}}e^{-ih}
\end{equation}
By the definition, one has
\begin{equation*}
  \phi^{\prime}=-i\frac{\varphi'}{\varphi}.
\end{equation*}
Direct computations imply
\begin{equation}\label{Gaug191}
  \frac{|\varphi|^2}{\Im(\varphi \bar{\varphi}^{\prime})}=-\frac{1}{\Re \phi'}.
\end{equation}
By the definitions of $ h$ and $p$, we have
\begin{eqnarray*}
   \frac{iV }{2\Re\phi^{\prime}}e^{-ih}  
   &=&   \frac{iV}{2\Re\phi^{\prime}}\exp\left(-i2 \Re \phi+i\int_{0}^x\frac{V(t)}{\Re \phi^{\prime}(t,E)}dt \right)\\
    &=&   - iV\frac{|\varphi|^2}{2\Im(\varphi \bar{\varphi}^{\prime})} \exp(-i2 \Re \phi)\exp\left(-i\int_{0}^x\frac{V(t)|\varphi(t,E)|^2}{\Im(\varphi \bar{\varphi}^{\prime})}dt \right)\\
     &=&    -iV\frac{|\varphi|^2}{2\Im(\varphi \bar{\varphi}^{\prime})} \exp(-i2 \Re \phi) \exp(-2ip) \\
       &=& -   iV\frac{\bar{\varphi}^2}{2\Im(\varphi \bar{\varphi}^{\prime})} \exp(-2ip) .
\end{eqnarray*}
It implies \eqref{Gaug192} and hence \eqref{Gy}.

\end{proof}
\begin{proof}[\bf Proof of \eqref{Gaug195}]
Denote by
\begin{equation*}
  f_k=\int_{k-1}^{k} |f(x)|dx
\end{equation*}
for $k\in\Z_+$. Then
\begin{equation}\label{Gaug196}
  ||f||_{\ell^2(L^1)}^2=\sum_{k=1}^{\infty} f_k^2.
\end{equation}
Direct computations imply
\begin{eqnarray}
  \int_{\R_+^2}\frac{|f(x)f(y)|}{1+|x-y|^2}dxdy &=&O(1) \sum_{m=1}^{\infty} \sum_{n=1}^{\infty} \frac{f_m f_n}{1+|m-n|^2}\nonumber \\
  &=& O(1)\sum_{n=1}^{\infty} | f_n|^2, \label{Gaug197}
\end{eqnarray}
where the second equality holds by Young's convolution inequality.
Now \eqref{Gaug195}  follows from \eqref{Gaug196} and \eqref{Gaug197}.
\end{proof}
 \section*{Acknowledgments}
I would like to thank Barry Simon and Gunter Stolz for some useful discussion.
    This research was 
    supported by   NSF DMS-1700314/2015683, DMS-2000345 and  the Southeastern
    Conference (SEC) Faculty Travel Grant 2020-2021.

\end{document}